\documentclass[11pt,letterpaper]{article}

\usepackage{amsmath,amsthm,amssymb} %
\usepackage{authblk} %
\usepackage[english]{babel} %
\usepackage{bbm} %
\usepackage{booktabs} %
\usepackage{color} %
\usepackage{derivative} %
\usepackage{dsfont} %
\usepackage{enumitem} %
\usepackage{float} %
\usepackage{flushend} %
\usepackage[T1]{fontenc} %
\usepackage{framed} %
\usepackage[top=1in,bottom=1in,left=1.3in,right=1.3in]{geometry} %
\usepackage{graphicx} %
\usepackage[backref,colorlinks,citecolor=blue,bookmarks=true,breaklinks=true]{hyperref} %
\usepackage[capitalize]{cleveref} 
\usepackage{hyphenat} %
\usepackage{ifdraft} %
\usepackage[utf8]{inputenc} %
\usepackage{interval} %
\usepackage{mathrsfs} %
\usepackage{mathtools} %
\usepackage{mdframed} %
\usepackage{microtype} %
\usepackage{physics} %
\usepackage{soul} %
\usepackage{times} %
\usepackage{tikz} %
\usepackage{xparse} %
\usepackage{xspace} %

\definecolor{Gray}{rgb}{0.85,0.85,.85}
\colorlet{ChannelColor}{Gray}

\usepackage{tikz} %
\usetikzlibrary{
  decorations.pathreplacing,
  arrows,
  angles,
  quotes,
  calc,
  calligraphy,
  decorations.pathmorphing,
  positioning,
  shapes}

\theoremstyle{plain} 
\newtheorem{theorem}{Theorem}[section] %
\newtheorem{lemma}{Lemma}[section] %
\newtheorem{proposition}{Proposition}[section] %
\newtheorem{corollary}{Corollary}[section] %
\theoremstyle{definition} %
\newtheorem{definition}{Definition}[section] %
\theoremstyle{remark} %
\newtheorem*{remark}{Remark} %

\newcommand{\PT}{\textnormal{\textsf{P}}\xspace}
\newcommand{\NP}{\textnormal{\textsf{NP}}\xspace}
\newcommand{\FPT}{\textnormal{\textsf{FPT}}\xspace}

\newcommand{\XP}{\textnormal{\textsf{XP}}\xspace}

\newcommand{\W}{\textnormal{\textsf{W}}\xspace}
\newcommand{\BQP}{\textnormal{\textsf{BQP}}\xspace}

\newcommand{\QMA}{\textnormal{\textsf{QMA}}\xspace}
\newcommand{\FPQT}{\textnormal{\textsf{FPQT}}\xspace}

\newcommand{\QW}{\textnormal{\textsf{QW}}\xspace}

\newcommand{\QWP}{\textnormal{\textsf{QW$[\text{\textsf{P}}]$}}\xspace}

\newcommand{\SQW}{\textnormal{\textsf{SQW}}\xspace}
\newcommand{\ETH}{\textnormal{\textsf{ETH}}\xspace}
\newcommand{\QETH}{\textnormal{\textsf{QETH}}\xspace}
\newcommand{\QCETH}{\textnormal{\textsf{QCETH}}\xspace}
\newcommand{\M}{\textnormal{\textsf{M}}\xspace}
\newcommand{\QM}{\textnormal{\textsf{QM}}\xspace}
\newcommand{\poly}{\operatorname{poly}}

\renewcommand{\real}{\mathbb{R}}
\newcommand{\microspace}{\mspace{.5mu}} %
\renewcommand{\ket}[1]{\ensuremath{\lvert\microspace#1%
    \microspace\rangle}} %
\newcommand{\bigket}[1]{\bigl\lvert\microspace#1%
  \microspace\bigr\rangle} %
\newcommand{\Bigket}[1]{\Bigl\lvert\microspace#1%
  \microspace\Bigr\rangle} %
\renewcommand{\bra}[1]{\ensuremath{\langle\microspace#1%
    \microspace\rvert}} %
\newcommand{\bigbra}[1]{\bigl\langle\microspace#1%
  \microspace\bigr\rvert} %
\newcommand{\Bigbra}[1]{\Bigl\langle\microspace#1%
  \microspace\Bigr\rvert} %

\setlist[description]{font={\it}, leftmargin=2.2cm, style=nextline, itemsep=0pt,
  topsep=5pt}

\graphicspath{{graphics/}}

\newcommand{\problem}[1]{\textsc{#1}}

\begin{document}

\title{\Large\bf Parameterized Complexity of Weighted Local Hamiltonian Problems
  and the Quantum Exponential Time Hypothesis}
\author[1,2,3]{Michael J. Bremner\thanks{michael.bremner@uts.edu.au}}
\author[4]{Zhengfeng Ji\thanks{jizhengfeng@tsinghua.edu.cn}}
\author[4]{Xingjian Li\thanks{lxj22@mails.tsinghua.edu.cn}}
\author[2,3]{Luke Mathieson\thanks{luke.mathieson@uts.edu.au}}
\author[1,2,3]{\hspace{5em} Mauro E.S. Morales\thanks{mauricio.moralessoler@student.uts.edu.au}}
\affil[1]{Centre for Quantum Computation and Communication Technology}
\affil[2]{Centre for Quantum Software and Information}
\affil[3]{School of Computer Science, Faculty of Engineering \& Information Technology, University of Technology Sydney, NSW 2007, Australia}
\affil[4]{Department of Computer Science and Technology, Tsinghua University, Beijing, China}

\maketitle

\begin{abstract}
  We study a parameterized version of the local Hamiltonian problem, called the
  weighted local Hamiltonian problem, where the relevant quantum states are
  superpositions of computational basis states of Hamming weight $k$.
  The Hamming weight constraint can have a physical interpretation as a
  constraint on the number of excitations allowed or particle number in a system.
  We prove that this problem is in $\QW[1]$, the first level of the quantum weft
  hierarchy, and that it is hard for $\QM[1]$, the quantum analogue of $\M[1]$.
  Our results show that this problem cannot be fixed parameter quantum tractable
  (FPQT) unless certain natural quantum analogue of the exponential time
  hypothesis (ETH) is false.
\end{abstract}

\section{Introduction}

Parameterized complexity theory~\cite{downey_fundamentals_2013} aims to analyze
problems in a more refined manner than classical complexity theory by creating
tools for comparing complexity over multiple parameters, as opposed to simply
the input size.
In principle, this opens up more possibilities for the analysis of real-world
problems that may have complicated parameter dependencies.
The key idea of the theory is to confine the possible super-polynomial
dependence of the runtime to the parameters only.
Tractability in the parameterized setting is described by the fixed parameter
tractable class \FPT, which can be thought of as a parameterized version of
$\PT$ where the dependence on the parameter of the runtime can be any computable
function while the dependence on the input size is still a polynomial.

One of the most useful results in parameterized complexity theory is that
problems which are complete for the class $\W[1]$, one of many natural
parameterized generalizations of \NP{}, are not fixed-parameter tractable unless
the exponential time hypothesis (\ETH) fails (see~\cite{Chen2006W1ETH} for the
initial result,~\cite{downey_fundamentals_2013} for further exposition
and~\cite{Impagliazzo1999eth,Impagliazzo2001eth} where \ETH{} is defined).
More concisely, if $\W[1]=\FPT$ then $\ETH$ is false.
This links parameterized intractability to classical intractability, tying the
intractability of SAT with the parameterized intractability of $\W[1]$-complete
problems such as \problem{$k$-Independent Set} and \problem{$k$-Clique}.

The \problem{Local Hamiltonian Problem}~\cite{kitaev2002classical} has been one
of the most studied problems in \emph{quantum} complexity theory over the last
two decades.
There have been many interesting recent works studying variants of this problem
with relevance for quantum chemistry.
In~\cite{Gorman2021electronicstructure}, the authors establish the
\QMA-completeness of a variant of the local Hamiltonian problem considering a
fixed basis describing the orbitals of the electronic structure problem,
inspired by the problem posed in~\cite{Whitfield2013electronic}.
Another work in this direction is that of~\cite{Gharibian2021dequantizing},
where the authors study the so called \problem{Guided Local Hamiltonian Problem}
in which the instance description includes a local Hamiltonian $H$ and a state
vector $u$ promised to be close to the ground state of $H$.
In this work it is shown that when the Hamiltonian is $6$-local then the
decision problem is \BQP-hard, further
work~\cite{Gharibian2022improved,Cade2022complexityguided} has shown that the
problem remains \BQP-hard when considering 2-local Hamiltonians.

In this paper we link the complexity of the \problem{Weight-$k$ $\ell$-Local
  Hamiltonian} problem to the classical \ETH and quantum variants of it, \QETH{}
and \QCETH{}.
It is shown that if the \problem{Weight-$k$ $\ell$-Local Hamiltonian} problem can
be solved in \FPT or \FPQT (the quantum generalization of \FPT introduced
in~\cite{bremner2022parameterized}) then versions of these hypotheses will fail.
The weight in this problem refers to the Hamming weight of the states in the
promise of the local Hamiltonian problem, either there is a weight-$k$ state
with a small eigenvalue, or all weight-$k$ states are above a certain energy.
The restriction of the weight on the states considered in the problem finds a
physical interpretation when considering the $1$s in the computational basis as
particle excitations and thus the weight corresponds to fixing the particle
number to $k$.
We remark that when considering Fermionic Hamiltonians, using the Jordan-Wigner
transform in general makes the Hamiltonian non-local, in this sense our results
should be considered as a first step towards the goal of studying the complexity
of problems with fixed particle number.

Mirroring the situation with the parameterized complexity class $\W[1]$, we
establish our main result, that the \problem{Weight-$k$ $\ell$-Local Hamiltonian}
is contained in $\QW[1]$, a quantum generalization of \W[1] that was introduced
in~\cite{bremner2022parameterized}.
These classes are parameterized analogues of \NP{} and \QMA{} where the promise
has bounded weight and the verifier is limited to the use of ``weft-1'' circuits
-- circuits that have at most a single ``large'' gate on any path from the input
to the output.
In the quantum case this is defined to be a single multiply controlled Toffoli
gate acting on $O(n)$ qubits.
Analogous to the classical case, the link to the exponential-time hypotheses is
made via introducing a miniaturized version of the circuit satisfiability
problem that is used to define a class $\QM[1]$ and proving that the
\problem{Weight-$k$ $\ell$-Local Hamiltonian} is $\QM[1]$-hard.

By establishing that the \problem{Weight-$k$ $\ell$-Local Hamiltonian} is contained
in $\QW[1]$ and that it can be linked to the quantum exponential time
hypotheses, we have resolved a clear question that emerged
from~\cite{bremner2022parameterized}, whether or not there was evidence of any
problems in $\QW[1]$ that were likely outside both $\W[1]$ and \FPQT.

We believe that there are a number of interesting open problems emerging from
our work.
The first is that it remains an open question as to whether this problem is
complete for $\QW[1]$, or even if there are any natural problems at all that are
complete for this class.
One of the roadblocks to proving this is the absence of a \emph{normalization}
theory for quantum weft circuits.
In an attempt to take the first steps towards resolving this problem, we
identify a class inside $\QW[1]$ (called $\SQW_{1}[1]$) that has a normal form
and we prove that the \problem{Frustration-Free Weighted Hamiltonian} problem is
complete for that class.

From a physical perspective, in this paper we have established the likely
intractability of the weighted local Hamiltonian problem in an attempt to better
connect parameterized complexity theory to problems of interest in quantum
chemistry.
Moving forward it would be of significant interest if there are additional
structures for this problem that would place it in either \FPQT or \FPT.
A natural variation to consider would be to further restrict the locality
conditions on the Hamiltonian, for example to consider lattice problems or other
regular graph structures.
Another direction would be to further restrict the set of potential promises on
this problem.

\subsection{Summary of Main results}\label{sec:main-results}

Our main contribution in this paper is to initiate the study of local
Hamiltonian problems in the context of parameterized complexity theory.
In classical complexity theory, many important parameterized problems such as
\problem{$k$-Vertex Cover}, \problem{$k$-Independent Set}, and
\problem{Weight-$k$ SAT} consider problems parameterized by the Hamming weight
of the solution.
Local Hamiltonian problems are natural generalizations of constraint
satisfiability problems (CSP) and it is natural to study the complexity of the
local Hamiltonian problem restricting the solution Hilbert space to the span of
basis strings of Hamming weight $k$ (for the definition of weight see
\cref{def:weight-state} and for the weighted local Hamiltonian see
\cref{def:weight-LH}).
This setup has some physical relevance as the weight $k$ may be interpreted as
the particle number or the number of excitations in a physical system.

We establish the likely intractability of the weighted local Hamiltonian problem.
Our first result puts this constrained-weight local Hamiltonian problem in
$\QW[1]$, a quantum version of the classical parameterized intractability class
$\W[1]$, which uses ``weft-$1$'' verifiers, these are circuits of constant depth
and at most one large gate (see \cref{def:weft-quantum,def:QW} for more
details).

\begin{theorem}[Informal version of \cref{thm:weighted-ham}]%
  \label{thm:informal-weighted-ham}
    \problem{Weight-$k$ $\ell$-Local Hamiltonian} is in $\QW[1]$.
\end{theorem}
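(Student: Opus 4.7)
The plan is to construct a weft-$1$ quantum verifier whose acceptance probability separates the YES and NO instances of \problem{Weight-$k$ $\ell$-Local Hamiltonian} by an inverse-polynomial gap. Given $H = \sum_{i=1}^{m} h_i$ on $n$ qubits, with each $h_i$ being $\ell$-local and $m = \poly(n)$, the verifier receives a quantum witness $\ket{\psi}$ and must both estimate $\bra{\psi}H\ket{\psi}$ and certify that $\ket{\psi}$ lies in the span of Hamming-weight-$k$ basis states, so that a dishonest prover cannot escape by submitting a low-energy state of the wrong weight.

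I would build the verifier from two subcircuits executed in parallel. The first is the standard local-Hamiltonian verifier: classical randomness samples $i\in[m]$ uniformly, the constant-depth unitary that diagonalizes $h_i$ is applied to its $\ell$ relevant qubits of $\ket{\psi}$, and a computational-basis measurement produces an accept/reject bit whose expectation is affine in $\bra{\psi}h_i\ket{\psi}$. Because each $h_i$ acts on $O(\ell)=O(1)$ qubits, this subcircuit has constant depth and no multi-controlled gates, so it is weft-$0$; the standard analysis yields a completeness--soundness separation of order $(b-a)/(m\|H\|)=1/\poly(n)$ between the YES and NO cases.

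The second subcircuit enforces the weight-$k$ constraint, and it is here that the single multi-controlled Toffoli permitted by the weft-$1$ budget is spent. The intended template is: a constant-depth preprocessing loads parity or aggregation information about $\ket{\psi}$ into a small ancilla register; a single multi-controlled gate acting on the $n$ witness qubits together with the ancilla flips an acceptance flag exactly when the encoded Hamming weight equals $k$; a constant-depth uncomputation using only bounded fan-in gates then restores the ancillas. Accepting only when both subcircuits accept gives an overall verifier of constant depth and weft at most $1$, placing the problem in $\QW[1]$.

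The principal obstacle is realizing this weight-$k$ check within the strict weft-$1$ budget, since the naive Hamming-weight computation using bounded fan-in gates already requires $\Theta(\log n)$ depth, and the uncomputation must not introduce a second large gate on any input-to-output path. The cleanest resolution is to offload the only depth-costly step to the single allowed multi-controlled operation --- for example via a phase-kickback or fan-out style construction that aggregates information from all $n$ witness qubits in one layer --- while all other pre- and post-processing is carried out by shallow bounded fan-in circuitry. Verifying that no wire is touched by more than one large gate, and in parallel showing that the combined verifier still separates the YES and NO promises by an inverse polynomial, is the delicate technical point where the proof will spend most of its effort.
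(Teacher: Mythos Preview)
Your proposal has two genuine gaps.

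First, the weight-check subcircuit is unnecessary. In the definition of \problem{Weight-$k$ Weft-$t$ Depth-$d$ Quantum Circuit Satisfiability}, both the YES and NO promises quantify only over weight-$k$ witness states; the verifier never has to certify the weight itself. So the entire second subcircuit, and with it your plan for spending the big Toffoli, is moot. (Even if a check were needed, a single generalised Toffoli computes one conjunction, not the symmetric predicate ``Hamming weight equals $k$''; no constant-depth pre- and post-processing with bounded fan-in gates turns one conjunction into a threshold-$k$ test.)

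Second, and more seriously, your first subcircuit is not constant depth as written. You sample $i\in[m]$ with $\lceil\log m\rceil$ classical bits and then apply the diagonaliser of $h_i$ to its $\ell$ qubits. But implementing ``apply $U_i$ conditioned on the index being $i$'' requires decoding the $\log m$ select bits into a one-hot indicator, i.e.\ computing $m$ many $(\log m)$-input \textsc{and}s. With only one- and two-qubit gates, classical fanout, and a \emph{single} large Toffoli, you cannot compute $m=\poly(n)$ such conjunctions; with bounded fan-in alone each one already costs depth $\Theta(\log\log n)$. Even granting the indicators, the $m$ controlled-$U_i$ gates share target qubits (the $h_i$ overlap arbitrarily) and must be sequenced, again breaking constant depth. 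The paper flags exactly this obstruction: measuring a random local term of a general Hamiltonian cannot be done in constant depth because quantum inputs cannot be fanned out to all terms.

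The paper's proof takes a completely different route, precisely to manufacture the structure your direct verifier lacks. It first reduces \problem{Weight-$k$ $\ell$-Local Hamiltonian} to a \emph{weight-preserving} (but not constant-depth) circuit satisfiability problem; performs strong error reduction via a QSVT-based Marriott--Watrous variant that stays weight-preserving; applies a Kitaev circuit-to-Hamiltonian construction with an indicator clock to obtain an \emph{almost spatially sparse} weighted Hamiltonian; and only then builds the weft-$1$ verifier. Spatial sparsity is the crucial gain: the non-clock terms split into $O(1)$ colour classes of pairwise non-overlapping projectors, so the verifier samples a colour (a constant-size choice), measures all terms of that colour in parallel, and outputs their \textsc{and}. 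The single big gate is this final \textsc{and} (together with the pairwise clock check), and the strong energy gap produced by error reduction survives the $1/n^2$ loss from \cref{lem:sum-vs-prod}. None of this machinery is optional: it is what converts an arbitrary overlapping Hamiltonian into one whose energy can be sampled in constant depth.
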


We have not been able to prove the $\QW[1]$-completeness of this problem.
It is known that this problem is contained in $\XP$ in contrast to the weighted
quantum circuit satisfiability problem which cannot be in $\XP$ unless
$\PT=\BQP$~\cite{bremner2022parameterized}.
In~\cite{bremner2022parameterized} it is shown that $\QW[1]$ is not a subset of
$\XP$ unless $\PT = \BQP$, however as it is also shown that $\FPQT$ is not a
subset of $\W[1]$ under the same assumption, it may be that although
\problem{Weight-$k$ $\ell$-Local Hamiltonian} is in $\XP$, its closure under $\FPQT$
reductions could be $\QW[1]$ (i.e., it could be $\QW[1]$-complete).

\emph{Nonetheless}, the proof technique leveraged in
\cref{thm:informal-weighted-ham} enabled us to show that the weighted local
Hamiltonian problems are $\QM[1]$-hard, where $\QM[1]$ is the natural quantum
analogue of $\M[1]$, a class introduced in classical parameterized complexity
theory to develop intractability results in reference to parameterized
sub-exponential time algorithms, and thence to the exponential time hypothesis
(ETH).
This establishes the intractability of the weighted local Hamiltonian problem
under quantum analogues of the ETH we introduce in this paper.

In this work we consider a \emph{weak} version of the \ETH as given in
\cite{downey_fundamentals_2013} (see \cref{def:eth}).
This version states that there are no classical algorithms solving the circuit
satisfiability problem in subexponential time.
Moreover, we establish that quantum parameterized complexity is connected to two
quantum generalizations of the \ETH which we define as $\QCETH$ and $\QETH$ (See
\cref{def:QCETH,def:QETH}).
The first variant, $\QCETH$, says that there are no quantum algorithms solving
classical circuit satisfiability in subexponential time, while $\QETH$ roughly
states that no such subexponential quantum algorithms exist for a version of the
quantum circuit satisfiability problem.
We show that, as expected, \QCETH implies \QETH (see
\cref{prop:QCETH-imp-QETH}).

In \cref{sec:qw-eth} we make these connections explicit, first by recalling a
theorem known in classical parameterized complexity which states that if
$\W[1]=\FPT$ then \ETH is false.
A simple reduction from \problem{Independent Set} to the \problem{Weighted Local
  Hamiltonian} problem shows that the existence of \FPT algorithms for the local
Hamiltonian problem would also lead to refuting \ETH, though we remark there
could be \FPT algorithms under different, but still interesting,
parameterizations.
Moreover, we prove the following result concerning \QETH.

\begin{theorem}[Informal version of
  \cref{thm:LH-QETH}]\label{thm:informal-LH-qeth}
  If \problem{Weight-$k$ $\ell$-Local Hamiltonian} is in \FPQT then $\QETH$ is
  false.
\end{theorem}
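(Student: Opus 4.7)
The plan is to adapt the classical argument that $\W[1] = \FPT \Rightarrow \neg\ETH$ to the quantum setting, with $\QM[1]$ and $\QETH$ playing the analogous roles. First I would invoke the $\QM[1]$-hardness of \problem{Weight-$k$ $\ell$-Local Hamiltonian}, established earlier in the paper by the same circuit-to-Hamiltonian construction used for the $\QW[1]$ containment, only now starting from the natural $\QM[1]$-complete problem: a miniaturized quantum circuit satisfiability problem in which a quantum circuit on $O(k \log n)$ input qubits is presented with parameter $k$. Combined with the hypothesis that the weighted local Hamiltonian problem lies in \FPQT and the closure of \FPQT under parameterized quantum reductions, this yields $\QM[1] \subseteq \FPQT$; in particular, an \FPQT algorithm for the miniaturized quantum circuit satisfiability problem.

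Unpacking this algorithm gives the desired refutation of \QETH. Starting from an arbitrary instance of quantum circuit satisfiability on $N$ input qubits, I would present it as an instance of the miniaturized problem with parameter $k = \lceil N / \log N \rceil$ and new input size $n$ polynomial in $N$, by packaging the witness qubits into blocks of size $\log n$. The assumed \FPQT algorithm runs in time $f(k) \cdot \poly(n)$ for some computable $f$; substituting, we obtain a quantum algorithm for quantum circuit satisfiability running in time $f(N/\log N) \cdot \poly(N) = 2^{o(N)}$, directly contradicting \QETH.

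The main obstacle will be the quantum miniaturization reduction: it must bundle each block of $\log n$ witness qubits into a single super-register over an alphabet of size $n$ without damaging the completeness--soundness gap of the underlying \QMA-style verifier, and the resulting witness must respect the bounded Hamming-weight, weft-$1$ structure required by $\QM[1]$. A secondary point to verify carefully is that \FPQT is closed under the parameterized quantum reductions used in the $\QM[1]$-hardness proof, so that $\QM[1] \subseteq \FPQT$ follows cleanly from the assumed \FPQT-membership of the single problem \problem{Weight-$k$ $\ell$-Local Hamiltonian}. Once these pieces are in place, the contradiction with \QETH is immediate, and an essentially identical chain with classical witnesses yields the analogous statement for \QCETH.
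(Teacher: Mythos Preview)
Your overall plan matches the paper (prove $\QM[1]$-hardness of the weighted local Hamiltonian, deduce $\QM[1]\subseteq\FPQT$, infer $\neg\QETH$), but there is a genuine gap in the final step. You set $k=\lceil N/\log N\rceil$ and $n=\poly(N)$ and then assert that the \FPQT running time becomes $f(N/\log N)\cdot\poly(N)=2^{o(N)}$. This is false for a general computable $f$: take for instance $f(k)=2^{k^{2}}$, so that $f(N/\log N)=2^{N^{2}/(\log N)^{2}}\gg 2^{N}$. A parameter choice independent of $f$ cannot work here.

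The paper (following the classical $\M[1]$/\ETH argument in Downey--Fellows) uses the inverse-function trick instead: given the \FPQT bound $f(k)\,n^{c}$ with $f$ increasing and unbounded, set $k=f^{-1}(N)$ and $n=2^{N/k}$, so that $N=k\log n$. The running time is then $f(f^{-1}(N))\cdot 2^{cN/f^{-1}(N)}=N\cdot 2^{o(N)}=2^{o(N)}$, since $f^{-1}(N)\to\infty$ forces $N/f^{-1}(N)=o(N)$. The key point is that $k$ must be chosen \emph{after} seeing $f$. A smaller remark: the ``packaging into blocks of size $\log n$'' you describe is the content of the $\QM[1]$-hardness reduction (encoding $k\log n$ qubits as $k$ weight-one blocks of $n$ qubits), not of the embedding of a size-$N$ circuit into \problem{Mini-QCSAT}; the latter is immediate once $k$ and $n$ satisfy $k\log n\ge N$, with no re-encoding of the witness needed.
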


\Cref{thm:informal-LH-qeth} gives evidence to the claim that the weighted local
Hamiltonian is intractable as depicted in \cref{fig:param-class}.
Moreover, as stated in \cref{thm:informal-LH-qeth}, the tractability of the
weighted local Hamiltonian problem for quantum algorithms has implications
regarding the existence of subexponential quantum algorithms for a quantum
\QMA-complete problem.
As \QETH is implied by \QCETH, \cref{thm:informal-LH-qeth} implies that if
\problem{Weight-$k$ $\ell$-Local Hamiltonian} is in \FPQT then $\QCETH$ is
false.

\begin{figure}[ht]
  \centering

  \def\firstellipse{(0,0) ellipse (1 and 1)}
  \def\secondellipse{(0,.3) ellipse (1.2 and 1.5)}
  \def\thirdellipse{(0,.6) ellipse (1.5 and 2)}
  \def\fourthellipse{(0,1.4) ellipse (2 and 3)}

  \begin{tikzpicture}
    \filldraw[fill=gray!20]\fourthellipse;
    \filldraw[fill=gray!10]\thirdellipse;
    \filldraw[fill=gray!5]\secondellipse;
    \filldraw[fill=white]\firstellipse;
    \node at (0,0) {\FPT{}};
    \node at (0,1.25) {\FPQT{}};
    \node at (0,2.15) {$\QM[1]$};
    \node at (0, 3.85) {$\QW[1]$};

    \node[circle, fill=black, minimum width=5pt, inner sep=0] (p) at (0, 3.05) {};

    \node[font=\scriptsize] (problem) at (-2,4.8)
    {\problem{Weight-$k$ Local Hamiltonian}};
    \draw[->, shorten >=2pt, shorten <=2pt, thick] (problem) -- (p);

    \node[font=\scriptsize, inner sep=0] (ETH1) at (5,3)
    {\problem{Weight-$k$ Local Hamiltonian} $\in \FPQT$};
    \node[font=\scriptsize, node distance=0pt, right=of ETH1] (F1)
    {$\Rightarrow \QETH$ is false};
    \node[font=\scriptsize, node distance=1.2em, below=of F1.west,
    anchor=west] (F2) {$\Rightarrow \QCETH$ is false};

    \node[font=\scriptsize, inner sep=0, node distance=2, below=of ETH1.west,
    anchor=west] (ETH2) {\problem{Weight-$k$ Local Hamiltonian} $\in \FPT
      \Rightarrow \ETH \text{ is false}$};

  \end{tikzpicture}
  \caption{Summary of main results in our work.
    We prove that \problem{Weight-$k$ $\ell$-Local Hamiltonian} is in $\QW[1]$
    (previously it was proven to be in \XP~\cite{bremner2022parameterized}).
    Moreover, we show that if this problem lies in \FPT, then \ETH is false and
    if it is in \FPQT this implies that both $\QETH$ and $\QCETH$ are
    false.}\label{fig:param-class}
\end{figure}
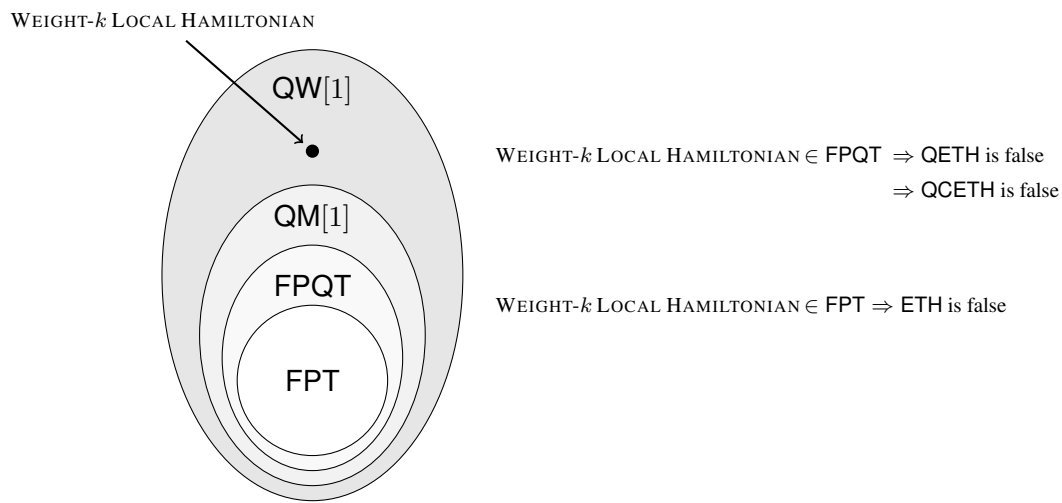

Finally, we also consider the \problem{Frustration-Free Weighted Hamiltonian}
problems and the equivalent \problem{Weighted Quantum SAT} problems.
Reusing the reductions in the proof of \cref{thm:informal-weighted-ham}, we are
able to show that frustration-free weighted local Hamiltonian problems are
contained in $\QW_{1}[1]$, the counterpart of $\QW[1]$ with perfect completeness
condition.
In fact, we prove a stronger statement that the problem is in $\SQW_{1}[1]$, a
class contained in $\QW_{1}[1]$ that models the weft-$1$ quantum circuit with
the large gate being the final AND gate.
Furthermore, we are able to show that the \problem{Frustration-Free Weighted
  Hamiltonian} and \problem{Weighted Quantum SAT} problems are actually complete
for the class $\SQW_{1}[1]$.

\begin{theorem}[Informal version of
  \cref{thm:SQW1-Frustration-QSAT}]\label{thm:informal-SQW}
  The \problem{Frustration-Free Weight-$k$ Hamiltonian} and \problem{Weight-$k$
    Quantum SAT} problems are $\SQW_{1}[1]$-complete.
\end{theorem}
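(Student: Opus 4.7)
The plan is to prove $\SQW_{1}[1]$-completeness in the two standard directions. Since \problem{Weight-$k$ Quantum SAT} and \problem{Frustration-Free Weight-$k$ Hamiltonian} are essentially the same problem up to normalising the local projectors, I would handle them together, treating the Hamiltonian as a sum of local projectors $H=\sum_{i=1}^{m}\Pi_{i}$ and exploiting the fact that a yes-instance admits a weight-$k$ state $\ket{\psi}$ annihilated by every $\Pi_{i}$ simultaneously.

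For containment, I would refine the $\QW[1]$ verifier constructed in the proof of \cref{thm:informal-weighted-ham}. The verifier receives the claimed weight-$k$ witness $\ket{\psi}$, uses auxiliary workspace to perform, in parallel for each $i$, a bounded-depth and bounded-fan-in local measurement circuit that writes a flag qubit $b_{i}$ into an output register (with $b_{i}=0$ iff the projective measurement of $\Pi_{i}$ returns outcome $0$), and then combines all flags using a single large multi-controlled Toffoli that fires iff every $b_{i}=0$. This circuit has weft exactly one, with the unique large gate being a final AND, which is precisely the $\SQW_{1}[1]$ template. Perfect completeness follows because on an accepting witness $\Pi_{i}\ket{\psi}=0$ for all $i$, so every flag is deterministically $0$; soundness is inherited from the energy-gap argument already used in \cref{thm:informal-weighted-ham}, after standard amplification that makes the probability of catching some unsatisfied term close to one. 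The weight-preserving input encoding for the witness register is taken verbatim from the $\QW[1]$ case.

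For hardness, I would reduce from the canonical $\SQW_{1}[1]$-complete problem: weight-$k$ satisfiability of a weft-$1$ quantum circuit $C$ whose unique large gate is the final AND. Applying Kitaev's circuit-to-Hamiltonian construction produces an $O(1)$-local positive-semidefinite Hamiltonian $H_{\mathrm{prop}}+H_{\mathrm{in}}+H_{\mathrm{out}}$ whose ground space consists of history states of accepting computations. The propagation and input-validity terms are naturally frustration-free projectors. The delicate step is the output term: instead of introducing a single non-local projector that reads the output bit of the final AND, I would replace that gate with many small frustration-free constraints by observing that acceptance is equivalent to every input wire of the final AND carrying a $1$ at the last clock time; for each such wire I would add a local projector onto $\ket{0}$ tensored with the appropriate unary clock projector. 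The resulting Hamiltonian is frustration-free precisely when $C$ has an accepting weight-$k$ input, and the $\SQW_{1}[1]$ structure of $C$ is what ensures this AND-decomposition is local.

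The main obstacle will be reconciling the weight-$k$ promise with the ancillas introduced by the Kitaev construction. The clock register, workspace qubits, and the propagation history itself all contribute to the Hamming weight of the encoded state, so the target weight $k'$ of the resulting Hamiltonian instance is no longer $k$ but some function of $k$ and the circuit structure. To obtain an \FPQT-reduction it suffices that $k'$ depend only on $k$ and not on the input size $n$; I would ensure this by using a unary clock (contributing a fixed weight of one), padding the workspace so its contribution is a fixed offset independent of $n$, and, where necessary, enforcing a per-wire weight schedule by adding further local projectors that remain compatible with frustration-freeness. Verifying that this bookkeeping survives the AND-decomposition and fits within the parameter budget of a weft-$1$ reduction is the most technical part of the argument.
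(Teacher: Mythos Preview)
Your containment sketch is broadly in the right spirit, but the description ``in parallel for each $i$'' glosses over the fact that the projectors $\Pi_{i}$ of an arbitrary local Hamiltonian overlap on qubits, so they cannot all be measured simultaneously in constant depth. The paper's containment argument does not measure the input Hamiltonian directly; it goes through the full chain of reductions (weight-preserving circuit $\to$ strong error reduction $\to$ almost spatially sparse Hamiltonian $\to$ weft-$1$ circuit) and then observes that perfect completeness survives each step and that the final circuit's only big gate is already a terminal AND. If you intend to invoke that chain, say so; the direct parallel-measurement picture you describe does not work as stated.

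The genuine gap is in your hardness direction. A generic $\SQW_{1}[1]$ verifier is \emph{not} weight-preserving: it has $\poly(n)$ ancillas, classical fanout gates that can raise the Hamming weight by $\poly(n)$, and a big Toffoli. Consequently the Kitaev history state $\frac{1}{\sqrt{T+1}}\sum_{t}\ket{t}\otimes U_{t}\cdots U_{1}\ket{\psi,0^{\ell}}$ is a superposition of vectors with \emph{different} Hamming weights, ranging from $k$ up to $\poly(n)$, so there is no single target weight $k'=k'(k)$ for the reduced instance. Your proposed fixes do not help: a unary clock $\ket{1^{t}0^{T-t}}$ has weight $t$, not $1$ (you mean an indicator clock); ``padding the workspace to a fixed offset'' cannot compensate for weight that changes dynamically during the computation; and adding projectors to enforce a weight schedule would destroy frustration-freeness unless the computation already respects that schedule, which it does not.

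The paper avoids clocks entirely. Given the $\SQW_{1}[1]$ circuit $V$ with final AND over output qubits $1,\ldots,n$, it simply sets $\Pi_{j} = V^{\dagger}(\ket{0}\bra{0}_{j}\otimes I)V$ and uses a light-cone argument: because $V$ has constant depth and only \emph{classical} fanouts, each $\Pi_{j}$ acts nontrivially on $2^{O(d)}$ qubits. The witness is just $\ket{\psi,0^{\ell}}$, whose weight is exactly $k$, and \cref{lem:sum-vs-prod} converts the AND acceptance probability into the required energy bounds. This is the key idea your proposal is missing.
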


The completeness of the weighted quantum SAT problems for $\SQW_{1}[1]$ raises
the questions like whether $\SQW_{1}[1]$ equals $\QW_{1}[1]$ and whether
$\SQW[1]$ equals $\QW[1]$.
The resolution of these questions relies on further investigations on
whether there is a quantum equivalent to the classical normalization theorems
which holds for quantum weft-$1$
circuits.
We leave it as an interesting open problem.

\subsection{Proof Techniques}

It is very natural to expect that, as the well-known quantum analogues of
constraint satisfaction problems, local Hamiltonian problems parameterized
using the weight are related to quantum analogues of the weft hierarchy.
Yet, even though classical \problem{Weight-$k$ SAT} problems are trivially
contained in $\W[1]$, to show that \problem{Weight-$k$ local Hamiltonian}
problems are in $\QW[1]$ is not easy.

The main technical challenge in proving this result is that, by definition, any
problem in $\QW[1]$ is characterized by being verifiable by a weft-$1$ circuit,
which implies we must check if the energy terms are satisfied using a circuit
with constrained depth.
A direct approach is by measuring the energy of each Hamiltonian term and then
either summing up or taking the average of the local energies.
However, this does not work as quantum measurements may perturb the witness
state, rendering later measurements problematic.
Even if we ignore this issue and assume somehow we can measure the local
energies without disturbing the state (probably like in the case of a quantum
SAT problem and we have perfect completeness), it is not easy to perform these
measurements in constant depth.
Classically, we can use fanout gates to wire the inputs to each checking term,
yet this is not possible in the quantum case!
Our solution to this is a long chain of reductions (shown in
\cref{fig:reductions}) that gradually normalizes the form of the Hamiltonian.

In the first step, we reduce the local Hamiltonian problem to a
\emph{weight-preserving} circuit satisfiability problem.
A weight-preserving quantum circuit is a circuit that consists of only unitary
gates that preserve the weight of any state.
It is the key technical tool we rely on in this chain of reductions.
On one hand, we can use weight-preserving circuits to perform energy
measurements for arbitrary Hamiltonian problems, including those whose local
terms are not weight-preserving operators (see
\cref{sec:weight-preserving-qcsat}).
On the other hand, we can convert the weight-preserving circuit problem back to
a weighted local Hamiltonian problem using Kitaev's
construction~\cite{kitaev2002classical,kempe2006complexity}.
As the history state of the Kitaev construction has the form
\begin{equation*}
  \ket{\psi_{\text{history}}} = \frac{1}{\sqrt{T+1}} \sum_{t=0}^{T}
  \bigl( U_{t} U_{t-1} \cdots U_{1} \ket{\psi}_{\text{witness}} \bigr)
  \otimes \ket{t}_{\text{clock}},
\end{equation*}
the weight-preserving property is essential to maintain the weight condition of
the resulting Hamiltonian problem.
If the circuit is weight-preserving and we choose to use the indicator clock (of
the form $\ket{0\cdots 010\cdots 0}$ where the $1$ is at $t$-th position), the
history state is a state of weight $k+1$.
The Kitaev-type quantum proof checking technique is crucial
for checking the propagation without blowing up the weight and without
resorting to any normalization theorem about the circuit.
If we were checking the propagation classically, the weight will be multiplied
by a factor of $T$, an overhead we won't be able afford.
Another advantage that the weight preserving circuit and Kitaev constructions
bring is that we do not need to have constant depth circuits yet, providing a
lot of flexibility to manipulate the circuit structure and the error reduction
we need to perform.

To the best of our knowledge, however, no previous work did a systematic study
on weight-preserving circuits.
Therefore, we first prove several basic facts about weight-preserving quantum
circuits.
Following the standard approach~\cite{Barenco1995gates} that established the
universal gate sets for standard quantum circuits, we show in
\cref{sec:universal} that there is similar theory of universal gate sets for
weight preserving circuits.
The result is of independent interest and may be useful elsewhere.

As the second step in the chain of reductions, we also show how to perform
strong $\QMA$ completeness and soundness error reduction for weight-preserving
verification circuits.
For standard quantum circuits, this is first proved by Marriott and
Watrous~\cite{marriott2005quantum}.
The Marriott and Watrous procedure utilizes the fact that the post-measurement
state of the verifier still contains ``most'' of the information about the
witness state and it is possible to perform a series of measurements with
outcomes $y_{1}, y_{2}, \ldots, y_{N}$ so that $z_{i} = y_{i-1} \oplus y_{i}$ are
identically independently distributed according to the acceptance probability.
Thus by doing simple statistics over $z_{i}$, we can reduce the error
exponentially.
This standard Marriott-Watrous construction does not seem to work in the weight
preserving setting.
The reason is that it is very hard to store or to ``forget'' the measurement
outcomes $y_{i}$.
For a weight-preserving circuit starting from a weight-$k$ state, the size of
the space it can possibly explore is at most $O(n^{k})$ dimensional and it is
incapable of retaining all measurement outcomes $y_{i}$.
There is also no easy way to ``forget'' about the $y_{i}$ information as quantum
computing is reversible and, in order to perform counting over $z_{i}$, we need
to remember the previous outcome $y_{i-1}$.
Fortunately for us, it is possible to adapt the fast $\QMA$ error reduction
based on quantum singular value transformation~\cite{gilyen2019quantum} in our
setting.
This is a rare example where the standard Marriott-Watrous error reduction won't
work while the fast $\QMA$ error reduction will.
What we utilize here is not that the fast $\QMA$ error reduction is
quadratically faster, but that it does not perform many measurements and that it
only uses a single extra qubit and easy-to-implement extra gates.

In the third step, we construct a weighted local Hamiltonian from the
weight-preserving circuit SAT problem.
Compared to the weighted local Hamiltonian problem we started with, this new
Hamiltonian problem has the nice property that the terms are (almost) spatially
sparse, meaning that each qubit is involved in a constant number of terms and it
is possible to partition the Hamiltonian into a finite number of groups of
non-overlapping terms.
This paved the way to constructing the final constant-depth weft-$1$ quantum
circuit.
This step is made possible by the spatially sparse construction introduced
in~\cite{OT08}.
There are two important changes we made to this construction in~\cite{OT08}.
First, we need to use \emph{indicator} clock to ensure the weight of the ground
state of the Hamiltonian is fixed to some number.
The checking of the indicator clock is much harder than the usual unary clock
and it will \emph{violate} the spatially sparse condition for the clock register.
We will see that this won't be a problem as for the checking of the clock, we
can measure all the clock qubits and perform a classical $\W[1]$ computation on
the measurement outcome.
Second, we introduced another simple encoding on the computational qubits,
mapping $\ket{0}$ to $\ket{00}$ and $\ket{1}$ to $\ket{11}$ for each qubit in
the computational register.
This ensures that the weight of the computational qubits is always an even
number.
So when we ask for a weight-$(2k+1)$ in the Hamiltonian problem, we do need to
check that the clock register is not in the all $0$ state $\ket{0^{T+1}}$.
This simple trick simplifies the checking of the indicator clock so that we only
need a big AND gate.
Otherwise, to rule out the all $0$ case, we needed a big OR gate and check that
the output is $1$.
This modification is crucial later on for obtaining the completeness result of a
weighted quantum satisfiability problem.

We note that due to a technical requirement in the next step, we needed a
stronger gap condition on the energy bounds $a,b$ of the local Hamiltonian.
The standard condition on the energy bounds says that $b-a$ is necessarily at
least $1/\poly(n)$.
We will require that $b/n^{2} - a$ is also at least $1/\poly(n)$.
This strong gap condition is made possible by the weight-preserving strong error
reduction procedure (given in \cref{sec:strong-error-reduction}).

In the final step of the proof that weighted local Hamiltonian is in $\QW[1]$,
we construct constant depth ``weft-$1$'' circuits for the almost spatially
sparse Hamiltonian problems we get from the third step.
This is now an easy step given all the preparations already performed.
The idea is that the circuit can measure constant number of groups of
non-overlapping terms in parallel.
To check the clock register, the circuit measures it and makes the decision
using classical fanout gates, NAND gates of fanin-$2$ and a classical big AND
gate.
Because we are measuring multiple Hamiltonian terms simultaneously, it is
necessary to consider the approximation between Hamiltonian of the form
$\sum_{i=1}^{n} \ket{0}\bra{0}_{i}$ and the project
$I - \ket{1^{n}}\bra{1^{n}}$.
This is why we need the stronger gap condition in the previous step.

\begin{figure}[ht]
  \centering
  \begin{tikzpicture}[node distance=.7cm,
    arr/.style={->, thick, shorten >= 2pt, shorten <= 2pt},
    box/.style={draw, rounded corners=5, fill=gray!10, align=center, font=\scriptsize,
      minimum height=3.5em, text width=4.5cm, inner sep=6}]

    \node[box] (MQCSAT) {\problem{Miniaturized Quantum
        Circuit Satisfiability}};
    \node[box, below=of MQCSAT] (WWPQCS) {\problem{Weight-$k$ Weight-Preserving\\
        Quantum Circuit Satisfiability}};
    \node[box, left=of WWPQCS] (WLH) {\problem{Weight-$k$ $\ell$-Local
        Hamiltonian}};
    \node[box, below=of WWPQCS] (WPSER) {\problem{Error-Reduced\\
        Weight-$k$ Weight-Preserving Quantum Circuit Satisfiability}};
    \node[box, below=of WPSER] (QW1) {\problem{Weight-$k$ Weft-$1$ Depth-$d$\\
        Quantum Circuit Satisfiability}};
    \node[box, left=of WPSER] (WLASSH) {\problem{Weight-$k$ $\ell$-Local
        Almost\\ Spatially Sparse Hamiltonian}};

    \draw[arr] (WLH) -- (WWPQCS);
    \draw[arr] (WWPQCS) -- (WPSER);
    \draw[arr] (WPSER) -- (WLASSH);
    \draw[arr] (WLASSH.south east) -- (QW1.north west);
    \draw[arr] (MQCSAT) -- (WWPQCS);
    \draw[arr] (WLASSH) -- (WLH);

  \end{tikzpicture}
  \caption{Reductions used to prove that the weighted local Hamiltonian problem
    is in $\QW[1]$ and $\QM[1]$-hard.
    Hamiltonian problems are the left while quantum circuit problems are on the
    right.}
  \label{fig:reductions}
\end{figure}

Having shown that the \problem{Weight-$k$ local Hamiltonian} is in $\QW[1]$, a
natural question is whether this problem is complete.
The main motivation to prove $\QW[1]$-completeness is to establish the likely
intractability of the problem.
Despite not showing this, we manage to give evidence of intractability in
\cref{thm:informal-LH-qeth}.
A proof sketch of this theorem is as follows.
First, a ``miniaturized'' version of the quantum circuit satisfiability problem
is defined where the instance is described by a weft-$t$ quantum circuit with at
most $k\log n$ inputs and gates, which we call \problem{Mini-QCSAT}.
From this problem we define the $\QM$-hierarchy as those problems
\FPQT-reducible to \problem{Mini-QCSAT} and show that $\QM[1]\subseteq\FPQT$ if
and only if $\QETH$ is false.
The next step is to prove that \problem{Weight-$k$ $\ell$-Local Hamiltonian} is
$\QM[1]$-hard.
For this, we reduce \problem{Mini-QCSAT} to \problem{Weight-$k$
  Weight-Preserving Quantum Circuit Satisfiability}, the proof of this reduction
is inspired by the so called ``$k\log n$'' trick used in classical parameterized
complexity (see Corollary 3.13 in~\cite{flum2006parameterized}).
More precisely, we reduce the miniaturized quantum circuit over $k\log n$ qubits
to a quantum circuit over $O(k n)$ qubits by encoding the $k\log n$ qubits in
$k$ groups of $\log n$ qubits and then further encode these with $k$ groups of
$n$ qubits with weight-$1$.
This last reduction allows us to leverage the reduction (Step 2 of the proof)
from \cref{thm:informal-weighted-ham} to reduce the weight-preserving circuit
satisfiability to weight-$k$ circuit satisfiability with constant-depth.
This implies that if the weighted Local Hamiltonian problem is in \FPQT then
\QETH is false, which further implies that \QCETH is false.

We also explored the complexity of the weighted quantum SAT problems and
frustration-free weighted quantum Hamiltonian problems.
This corresponds to the special case of the discussion on the containment of
weighted local Hamiltonian in $\QW[1]$ where we enforce perfect completeness.
It is easy to verify that along the chain of the reductions in
\cref{fig:reductions}, the perfect completeness condition is always kept and, as
we only used a big AND gate in the $\QW[1]$ verification circuit, this proves
that the weighted quantum SAT problems are in $\SQW_{1}[1]$, a special case of
$\QW_{1}[1]$ with the additional structural requirement that the big gate is the
last gate and it is the classical AND gate acting on measurement outcomes of a
constant-depth quantum circuit.

What is interesting is that for the quantum SAT problems, we are able to prove
that they are complete for $\SQW_{1}[1]$ even though the same idea fails for the
local Hamiltonian problems.
For this, we use a light cone argument to show that it is possible to directly
read off a set of local projectors forming a quantum SAT problem from the
constant depth $\SQW_{1}[1]$ circuit.
The light cone argument contracts the difference between classical fanout and
quantum fanout gates.
For an arbitrary classical fanout gate, the number of gates and wires needed to
be included in the light cone is always a constant while for a quantum fanout,
this number will include all qubits involved in the fanout gate and thus
unbounded.

\subsection{Organization}

The organization of this work is as follows.
\cref{sec:background} introduces the main parameterized complexity classes used
in our work together with the notations used.
\cref{sec:LH-qw1} provides the proof that the weighted local Hamiltonian problem
is in $\QW[1]$.
In \cref{sec:ff} some results on the Frustration Free Local Hamiltonian problems
are given and \cref{sec:qw-eth} presents the results regarding the Exponential
Time Hypothesis and the \QW-hierarchy.
Finally, we give some discussion of open problems and future directions in
\cref{sec:discussion}

\section{Background and Notation}\label{sec:background}

Here we present the main parameterized complexity classes defined
in~\cite{bremner2022parameterized} together with some important problems that
will be used.
For a more extended discussion of these definitions we direct the reader
to~\cite{bremner2022parameterized} with regard the quantum parameterized classes
and for a discussion of classical parameterized complexity we
suggest~\cite{downey1995,downey1999parameterized,downey_fundamentals_2013}.
We begin by recalling the definition of a parameterized problem, these are
problems where the description of the instance includes a parameter describing
certain property of the instance.

\begin{definition}[Parameterization]
  A parameterization of a finite alphabet $\Sigma$ is a mapping
  $\kappa:\Sigma^*\to\mathbb{Z}^+$ that is polynomial-time computable.
  The trivial parameterization $\kappa_\text{trivial}$ is the parameterization
  with $\kappa_\text{trivial}(x)=1$ for all $x\in\Sigma^*$.
\end{definition}
  
We now define a \emph{parameterized problem}.
\begin{definition}[Parameterized problem]
  A parameterized problem over a finite alphabet $\Sigma$ is a pair $(L,\kappa)$
  where $L\subseteq\Sigma^*$ is a set of strings over $\Sigma$ and $\kappa$ is a
  parameterization of $\Sigma$.
  We say that a parameterized problem $(L,\kappa)$ over the alphabet $\Sigma$ is
  \emph{trivial} if either $L=\varnothing$ or $L=\Sigma^*$.
\end{definition}

The complexity class of tractable bounded-error quantum parameterized problems
is \FPQT{} (see~\cite{bremner2022parameterized} for a formal definition).
While in quantum complexity theory the class \QMA is considered as the ``quantum
version'' of \NP, quantum parameterized complexity has many analogues of \NP (as
in the classical parameterized case).
The analogues of \NP we will focus on here are the classes \QWP and the those in
\QW-hierarchy~\cite{bremner2022parameterized}.
Here we give a modified definition of the \QW-hierarchy adequate for proving our
results.
The notion of Hamming weight is fundamental in parameterized complexity to
define intractability.
We will base our definitions in the following notion of weight for quantum
states.

\begin{definition}[Weight of a quantum state]\label{def:weight-state}
  A quantum state $\ket{\psi}=\sum_{x\in\{0,1\}^n}\alpha_x\ket{x}$ on $n$ qubits
  is said to have \emph{weight} $k$ if $\alpha_x=0$ for all $x$ not of Hamming
  weight $k$.
\end{definition}

A second notion important for defining the intractable $\W$-hierarchy in the
classical case is that of weft.

\begin{definition}[Circuit weft]\label{def:weft-classical}
  Given a Boolean circuit $\mathcal{C}$ comprising generalised Toffoli gates and
  one and two bit fan-in gates.
  The \emph{weft} of $\mathcal{C}$ is the maximum number of Toffoli gates that
  act on any path from input bit to output bit.
\end{definition}

This notion of weft is generalized to the quantum case:

\begin{definition}[Quantum circuit weft]\label{def:weft-quantum} 
  Given a quantum circuit $\mathcal{C}$ comprising generalised Toffoli gates,
  one and two-qubit gates, and unbounded \emph{classical} fanout.
  The \emph{weft} of $\mathcal{C}$ is the maximum number of Toffoli gates that
  act on any path from input qubit to output qubit.
\end{definition}

We remark that the fanout gate allowed in a weft-$1$ quantum circuit is
classical.
In a quantum circuit, a fanout gate is called classical if all of the target
qubits are initialized to the $\ket{0}$ state and no other gates acted on them
before the fanout gate.
After the fanout gate, a unitary gate can only act on the fanout qubits by using
them as controls.
The equivalence between this definition of classical fanout gates and the
standard definition follows from the principle of delayed measurements.
Because quantum fanout gates are very powerful and can simulate big Toffoli and
threshold gates~\cite{Hoyer2005_fanout}, they should be avoided when defining
weft-$t$ quantum circuits.

To define the \QW-hierarchy we proceed similarly as in
\cite{marriott2005quantum} for the class \QMA.
For functions $c,s:\mathbb{N}\to [0,1]$ we define the following problem

\begin{definition}[\problem{Weight-$k$ Weft-$t$ Depth-$d$ Quantum Circuit Satisfiability
  $(c,s)$}]\hfill
  \begin{description}
    \item[Instance:] A weft-$t$ depth-$d$ quantum circuit $\mathcal{C}$ on $n$
          witness qubits and $\poly(n)$ ancilla qubits.
    \item[Parameter:] A natural number $k$.
    \item[Yes:] There exists an $n$-qubit weight-$k$
          quantum state $\ket{\psi}$, such that
          $\Pr[\text{$\mathcal{C}(\ket{\psi})$ accepts}] \geq c$.
    \item[No:] For every $n$-qubit weight-$k$ quantum state $\ket{\psi}$,
          $\Pr[\text{$\mathcal{C}(\ket{\psi})$ accepts}] \leq s$.
 \end{description}
\end{definition}

\begin{definition}[{$\QW_{c,s}[t]$}]\label{def:QW}
  For $t\in\mathbb{N}$, the class $\QW_{c,s}[t]$ consists of all parameterized
  problems that are FPQT reducible to \problem{Weight-$k$ Weft-$t$ Depth-$d$
    Quantum Circuit Satisfiability$(c,s)$} for some constant depth $d \geq t$.
\end{definition}

Due to the constant depth requirement of weft-$t$ quantum circuits, it is not
clear if this class has the error reduction property.
These classes are most relevant when $c$ and $s$ have a polynomial gap, i.e.,
$c-s>1/\mathrm{poly}(n)$.
Based on this, we define the $\QW$-hierarchy as
\begin{definition}
  Define $\QW[t]$ as
  \begin{equation*}
    \QW[t] := \bigcup_{\mathclap{\substack{c,s\\ c-s>1/\mathrm{poly}(n)}}}
    \QW_{c,s}[t].
  \end{equation*}
\end{definition}

We have considered a slight variation for the definition of $\QW[t]$ as compared
to that in~\cite{bremner2022parameterized} where this class did not include a
possible dependence on $n$ in the completeness and soundness parameters.
We have chosen the present definition as we want to allow for the possibility of
a polynomial gap.
Central to our work is the weighted version of the local Hamiltonian problem
which we prove is in $\QW[1]$.
As is mentioned in the introduction, it was shown
in~\cite{bremner2022parameterized} this problem is in \XP (for a definition of
this class see~\cite{downey1999parameterized}), which is in stark contrast to
the \problem{Weight-$k$ Quantum Circuit Satisfiability} problem whose
slices are \BQP-hard and hence cannot be in \XP unless $\PT=\BQP$.
By proving that the weighted local Hamiltonian problem is in $\QW[1]$ we
demonstrate in this paper a likely separation between this problem and other
parameterized variants of \QMA-complete problems such as Quantum Circuit
Satisfiability under \FPQT{} reductions.

Define the weighted version of the local Hamiltonian
problem~\cite{bremner2022parameterized} as
\begin{definition}[\problem{Weight-$k$ $\ell$-Local
  Hamiltonian$(a,b)$}]\label{def:weight-LH}\hfill
  \begin{description}
    \item[Instance:] An $\ell$-local Hamiltonian $H\coloneqq\sum_iH_i$ on $n$
          qubits that comprises at most a polynomial in $n$ many terms
          $\{H_i\}$, which each act non-trivially on at most $\ell$ qubits and
          have operator norm $\norm{H_i}\leq 1$.
    \item[Parameter:] A natural number $k$.
    \item[Yes:] There exists an $n$-qubit weight-$k$ quantum state $\ket{\psi}$,
          such that $\matrixel{\psi}{H}{\psi} \leq a$.
    \item[No:] For every $n$-qubit weight-$k$ quantum state
          $\ket{\psi}$, $\matrixel{\psi}{H}{\psi} \geq b$.
  \end{description}
\end{definition}

\section{Weighted Local Hamiltonian is in $\QW[1]$}\label{sec:LH-qw1}

In this section we prove that the weighted version of the Local Hamiltonian
problem is in the class $\QW[1]$.
We state this as a theorem.

\begin{theorem}\label{thm:weighted-ham}
  Given $a,b$ such that $b-a>1/\mathrm{poly}(n)$, then \problem{Weight-$k$
    $\ell$-Local Hamiltonian$(a,b)$} is in $\QW_{c,s}[1]$ for
  some $c,s$ such that $c-s>1/\mathrm{poly}(n)$.
\end{theorem}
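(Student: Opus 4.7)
The plan is to follow the chain of reductions sketched in the introduction, gradually normalizing the Hamiltonian until its ground-state energy can be verified by a constant-depth weft-$1$ circuit acting on a weight-$k$ witness. The first step is to translate the weighted local Hamiltonian instance into a Weight-$k$ Weight-Preserving Quantum Circuit Satisfiability instance. For each term $H_i$ I would construct a weight-preserving measurement circuit that estimates $\bra{\psi} H_i \ket{\psi}$ on a copy of the witness, averaging over a randomly chosen term; the circuit need not be constant depth at this stage, which gives a lot of flexibility. Before this can be done cleanly I would develop a basic universality theory for weight-preserving gates along the lines of Barenco et al., so that arbitrary weight-preserving unitaries can be assembled from a small gate set without ever leaving the weight-$k$ subspace.

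Next I would amplify the completeness-soundness gap of the resulting weight-preserving circuit SAT instance. Because the Hilbert space accessible from a weight-$k$ state under weight-preserving dynamics has dimension only $\binom{n}{k} = O(n^k)$, the standard Marriott-Watrous procedure cannot be used: there is no room to record or uncompute polynomially many intermediate measurement outcomes. Instead I would adapt the fast \QMA error reduction based on quantum singular value transformation, which uses only one extra qubit and a small set of additional gates and is compatible with the weight-preserving requirement. This will allow me to push the parameters so strongly that, after the next step, the translated energies satisfy not merely $b - a > 1/\poly(n)$ but the stronger condition $b/n^2 - a > 1/\poly(n)$ needed to absorb a loss incurred in the final step.

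The third step is to apply Kitaev's history-state construction to the amplified weight-preserving circuit, with two crucial modifications. I would use an indicator clock of the form $\ket{0\cdots 010\cdots 0}$ rather than the usual unary clock, so that weight-preservation yields a history state of total weight exactly $k+1$ on witness-plus-clock. I would also duplicate each computational qubit by the encoding $\ket{0} \mapsto \ket{00}$ and $\ket{1} \mapsto \ket{11}$, forcing the computational register to have even weight; asking instead for a total weight of $2k+1$ then automatically rules out the degenerate all-zero clock, so the indicator-clock constraint can be enforced by checking that the clock register is not all zero rather than by checking a big OR. Combined with the spatially sparse gadgetry of~\cite{OT08}, this yields an instance of Weight-$k$ $\ell$-Local Almost Spatially Sparse Hamiltonian in which every qubit outside the clock participates in only $O(1)$ local terms.

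Finally I would assemble the weft-$1$ depth-$d$ verifier. The $O(1)$-sparse local terms can be partitioned into a constant number of groups of pairwise disjoint terms, each of which can be measured simultaneously in a single layer of bounded-size gates; the clock register is measured qubit-by-qubit and routed through classical fanout and fan-in-$2$ NAND gates into a single big AND gate, which is the only weft-$1$ Toffoli in the whole circuit. The main obstacle will be the parallel measurement of many local projectors: replacing $\sum_i \ket{0}\bra{0}_i$ by the single projector $I - \ket{1^n}\bra{1^n}$ incurs a factor-$n^2$ loss in the effective gap, and this is precisely what the strong error reduction of the second step is designed to compensate. With those ingredients the acceptance probability lies above some $c$ on yes-instances and below some $s$ on no-instances with $c - s > 1/\poly(n)$, placing \problem{Weight-$k$ $\ell$-Local Hamiltonian$(a,b)$} in $\QW_{c,s}[1]$ as claimed.
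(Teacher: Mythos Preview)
Your proposal is correct and follows essentially the same four-step chain of reductions as the paper: weight-preserving circuit SAT via a universality theory for weight-preserving gates, QSVT-based strong error reduction in place of Marriott--Watrous, the Oliveira--Terhal spatially sparse Kitaev construction with indicator clock and the $\ket{0}\mapsto\ket{00}$, $\ket{1}\mapsto\ket{11}$ duplication, and finally the constant-depth weft-$1$ verifier with a single big AND gate. The only minor imprecision is that the qubit-duplication trick means you need not separately check that the clock is nonzero at all---the odd total weight forces this automatically---so the remaining clock constraint is simply ``no two $1$'s,'' which is what the big AND of pairwise NANDs handles.
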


The proof of \cref{thm:weighted-ham} consists of a series of reductions.
In the first step, we reduce the weighted local Hamiltonian problem to a
weight-preserving quantum circuit satisfiability problem defined below.
This step is discussed in \cref{sec:weight-preserving-qcsat}.

\begin{definition}[Weight-$k$ Weight-Preserving Quantum Circuit
  Satisfiability$(c,s)$]\label{def:weightP-wQCSAT}\hfill
  \begin{description}
    \item[Instance:] A weight-preserving quantum circuit $\mathcal{C}$
          on $n$ witness qubits, $\poly(n)$ ancilla qubits with circuit
          size $\mathrm{poly}(n)$.
    \item[Parameter:] A natural number $k$.
    \item[Yes:] There exists an $n$-qubit weight-$k$ quantum
          state $\ket{\psi}$, such that
          \begin{equation*}
            \Pr[\text{$\mathcal{C}(\ket{\psi})$ accepts}] \geq c.
          \end{equation*}
    \item[No:] For every $n$-qubit weight-$k$ quantum state
          $\ket{\psi}$,
          \begin{equation*}
            \Pr[\text{$\mathcal{C}(\ket{\psi})$ accepts}] \leq s.
          \end{equation*}
  \end{description}
\end{definition}

\begin{remark}
  Note that when initializing the ancilla qubits, we can set at most $f(k)$ of
  them to $\ket{1}$, where $f$ is some computable function.
  This guarantees our reduction still contained in \FPQT.
  Also this problem doesn't require the circuit to be constant depth.
  We will design a constant depth circuit in the last step
  (in \cref{sec:in-qwone}).
\end{remark}

In the second step (\cref{sec:strong-error-reduction}), we prove that strong
completeness and soundness error reduction is also possible for the
weight-preserving circuits using the quantum singular value transformation.
This step is necessary for the reductions in the later steps.
In the third step we reduce the weight-preserving quantum circuit satisfiability
problem to instances of the Local Hamiltonian problem that are \emph{almost
  spatially sparse}.
This notion will be defined below in \cref{sec:spatially-sparse} of the proof of
\cref{thm:weighted-ham}.
Finally in the fourth step (\cref{sec:in-qwone}), we reduce the weighted almost
spatially sparse Hamiltonian to an instance of the weighted constant-depth,
weft-$1$, quantum circuit satisfiability problem.
Before proceeding to the proof of these reductions, we will prove some
preliminary results about weight-preserving quantum circuits first.

\subsection{Universality of Weight-Preserving Circuits}\label{sec:universal}

In this section, we will show how the classic proof of quantum universality
in~\cite{Barenco1995gates} can be adapted to show universality of
weight-preserving circuits.

\begin{definition}\label{def:wp-op}
  An operator $O$ acting on ${(\mathbb{C}^{2})}^{\otimes n}$ is
  weight-preserving if for any $k$ and any computational basis state $\ket{x}$
  of weight $k$, $O\ket{x}$ is a vector in ${(\mathbb{C}^{2})}^{\otimes n}$ of
  weight exactly $k$.
\end{definition}

\begin{definition}\label{def:circ-weight-preserving}
  A circuit $C$ is \emph{weight-preserving} if its corresponding unitary
  operator is weight-preserving.
\end{definition}

We also define the weight-preserving version of one-qubit gates.

\begin{definition}\label{def:U-weightpreserving}
  For any single qubit gate $U$, define a two-qubit gate
\begin{equation*}
  \hat{U} =
  \begin{pmatrix}
    1 & 0 & 0\\
    0 & U & 0\\
    0 & 0 & 1
  \end{pmatrix}.
\end{equation*}
\end{definition}
It is easy to check that $\hat{U}$ is always a weight-preserving gate.
Note that When $U = X$, $\hat{U}$ is the SWAP gate, this fact will be used
regularly below.

The Fredkin gate (control-SWAP gate) is another example of weight-preserving
gate.
We will also need in \cref{lem:weight-universal} the following weight-preserving
gate $E=\begin{pmatrix} 1 & 0\\ 0 & e^{i\delta} \end{pmatrix}$.
This phase gate is necessary for universality as otherwise we will not be able
to create relative phases between states such as $\ket{00}$ and $\ket{11}$.

\begin{definition}
  A set of weight-preserving gates is \emph{weight-universal} if they can
  (approximately) generate all weight-preserving unitary transformations.
\end{definition}

\begin{lemma}\label{lem:weight-universal}
  If a set of single-qubit gates $U_{1}, U_{2}, \ldots, U_{s}$ and CNOTs form a
  standard universal gate set, then
  $\hat{U}_{1}, \hat{U}_{2}, \ldots, \hat{U}_{s}$, Fredkin and $E$ gates form a
  weight-universal gate set when allowed two extra ancilla qubits in the state
  $\ket{01}$.
\end{lemma}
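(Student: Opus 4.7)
The plan follows the structure of the Barenco et al.\ universality argument, replacing each building block with a weight-preserving counterpart. A weight-preserving unitary on $n$ qubits is block-diagonal with respect to the decomposition of ${(\mathbb{C}^2)}^{\otimes n}$ into weight-$k$ subspaces; by a Givens-type argument each block further decomposes as a product of \emph{two-level weight-preserving unitaries} that act non-trivially only on a two-dimensional subspace $\mathrm{span}\{\ket{x}, \ket{y}\}$ for two computational basis states $\ket{x}, \ket{y}$ of the same Hamming weight, together with diagonal phase factors. So it suffices to implement each such two-level unitary and each diagonal phase using $\hat U_1, \ldots, \hat U_s$, Fredkin, $E$, and the two ancillas in $\ket{01}$.

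First, I would handle ``simple transitions'', where $\ket{x}$ and $\ket{y}$ differ in exactly two positions $i, j$. The operator $\hat U$ on qubits $i, j$ is the identity on $\mathrm{span}\{\ket{00}, \ket{11}\}$ and acts as $U$ on $\mathrm{span}\{\ket{01}, \ket{10}\}$; consequently, if the remaining qubits hold the configuration common to $x$ and $y$, then $\hat U$ on qubits $i, j$ realizes any desired $U\in U(2)$ on the subspace $\mathrm{span}\{\ket x, \ket y\}$. For general same-weight pairs that differ in $2m$ positions, I would reduce to the simple case via a Gray-code path through intermediate weight-$|x|$ basis states, conjugating a simple transition by a product of SWAPs (each SWAP is just $\hat X$, the $U=X$ instance of $\hat U$).

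To make the $\hat U$ act non-trivially \emph{only} when the remaining qubits are in the correct configuration, I would use cascades of Fredkin gates together with the $\ket{01}$ ancillas, mirroring the CNOT cascades used classically to build multi-controlled gates (note that CNOT itself is not weight-preserving, so Fredkin is the natural substitute for the classical reversible control primitive). The ancilla pair provides a weight-preserving ``$(\ket 0, \ket 1)$'' resource: a sequence of Fredkins controlled by the remaining qubits swaps the ancilla pair into a marked configuration exactly when the control pattern is matched, at which point $\hat U$ (or $E$) is applied on the ancilla pair and the cascade is reversed to decouple and restore the ancillas. The $E$ gate, inserted inside this control machinery, produces the diagonal phase factors, and density of achievable phases on the unit circle follows from the universality assumption on $\{U_1, \ldots, U_s, \mathrm{CNOT}\}$.

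The main obstacle I expect is showing that the multi-control cascades can be uncomputed cleanly using only weight-preserving gates. In the standard Barenco construction, ancillary bookkeeping is erased freely with CNOTs, but here every intermediate step must preserve total weight, so the uncomputation must be carried out entirely with Fredkins, $\hat U$, and $E$; verifying that this suffices for arbitrary control patterns while returning both ancillas to exactly $\ket{01}$ and fully decoupling them from the data register is the technical heart of the argument.
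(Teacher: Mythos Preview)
Your outline matches the paper's proof almost exactly: decompose an arbitrary weight-preserving unitary into two-level unitaries within each weight sector, use a weight-preserving Gray-code path of SWAPs ($\hat X$) to bring any same-weight pair $\ket{x},\ket{y}$ to a ``simple transition'' differing in two positions, and then implement the required multi-controlled $\hat U$ using Fredkins, $E$, and the $\ket{01}$ ancilla pair.

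The one place your sketch needs sharpening is the multi-control construction. A plain ``cascade of Fredkins controlled by the remaining qubits'' acting on the single ancilla pair computes the \emph{parity} of the controls, not their AND, so it will not mark the correct configuration; and applying $\hat U$ ``on the ancilla pair'' is a slip, since the action must land on the data qubits. The paper fills this in by following Barenco et al.\ directly: first build a \emph{singly}-controlled $\hat W$ from the $ABC$ decomposition ($ABC=I$, $AXBXC=W$) by replacing $X$ with SWAP (so CNOT becomes Fredkin) and using $E$ for the global-phase correction; then obtain multi-controlled $\hat W$ via the recursive $V^2=W$ trick of Barenco's Lemma~6.1, with the $\ket{01}$ ancilla pair playing the role that CNOT-toggled ancillas play classically. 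The ancilla pair is restored and reused at each level, so only the two ancillas are ever needed --- which also resolves the uncomputation concern you flagged.
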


\begin{proof}
  We follow the steps of~\cite[Chapt.~4]{NC00}.
  In this proof the first step is to show that two-level unitary gates are
  universal and can generate any $d\times d$ unitary from the group $U(d)$.
  Recall that two-level unitaries are gates which only act on the subspace
  spanned by two computational basis state, for example for $d=3$ a two-level
  unitary could be
  
  \begin{equation*}
    \begin{pmatrix}
    a & 0 & b\\
    0 & 1 & 0\\
    c & 0 & d
  \end{pmatrix}.
  \end{equation*}
  
  The authors prove that $d\times d$ unitaries can be obtained using $d(d-1)/2$
  two level unitaries.
  In our case we simply need to recognize that this proof will hold in any
  chosen weight-$k$ subspace.
  Hence we can always use the same inductive steps as those
  in~\cite[Sec.~4.5.1]{NC00} where non-trivial unitaries are limited to this
  subspace.
  This requires at most $\binom{n}{k} \left(\binom{n}{k}-1\right) / 2$.
  
  Then, by following the proof in~\cite[Sec.~4.5.2]{NC00} it can be shown that
  if we can implement all $\hat{U}$ operators (where $U$ is a single qubit
  gate), $E$, and Fredkin, then we can implement any two-level unitary.

  Recall that in~\cite[Sec.~4.5.2]{NC00} the authors use the Gray code, which
  given two bitstrings generates a sequence of strings that differ by a single
  bit.
  That is the hamming weight changes by one in each step of the sequence.
  This sequence is used to generate a circuit of multiply-controlled
  single-qubit gates to define an arbitrary two-level unitary.

  In our case, we cannot use this construction as it is not weight-preserving.
  However, note that we have the Fredkin gate in our gate set, which allows
  controlled swaps, and also note that we are operating in a weight-preserving
  space.
  Hence, we only need a sequence of operations that controllably swap qubits in
  this space and then will ultimately perform $\hat{U}$ gate.
  Suppose we want to implement a two level operator in the subspace of
  $\ket{s}=\ket{10001}$ and $\ket{t}=\ket{11000}$.
  We can consider the following transformations $10001 \to 10100 \to 11000$.
  Essentially, we want to place $(k-1)$ of the $1$'s from $\ket{s}$ in the same
  positions of $(k-1)$ $1$'s in $\ket{t}$.
  The remaining non-swapped $1$ of $\ket{s}$ is placed in a position next to the
  remaining $1$ in $\ket{t}$, for instance in the previous example we performed
  the transformation $10001 \to 10100$ placing the last $1$ in the third
  position, next to the second position where the last $1$ of $\ket{t}$ is
  located.
  This can be implemented in the same way as in~\cite[Sec.~4.5.2]{NC00} with the
  difference that now we apply controlled $\mathrm{SWAP}$ operators controlled
  on the rest of the qubits, see \cref{fig:weightcode}.
  Finally the operator $\hat{U}$ acts on qubits $2$ and $3$ (corresponding to
  the second and third bits from left to right).
  This operator is controlled on the rest of the qubits and finally we revert
  the $\mathrm{SWAP}$ operations.
  For weight-$k$ states we will require at most $2k$ SWAP gates plus the
  controlled $\hat{U}$.
  
  \begin{figure}[ht]
    \centering
    \begin{tikzpicture}[node distance=1cm,
      gate/.style={draw, minimum size=16, fill=ChannelColor, inner sep=2},
      ctrl/.style={circle, fill=black, minimum size=4, inner sep=0},
      octrl/.style={circle, draw=black, fill=white, minimum size=4, inner sep=0},
      target/.style={circle, draw, minimum size=8, inner sep=0},
      measure/.style={draw, minimum height=14, minimum width=16, fill=gray},
      readout/.style={draw, minimum height=9, minimum width=12, fill=white},
      crossx/.style={path picture={\draw[inner sep=0pt]
          (path picture bounding box.south east) --
          (path picture bounding box.north west)
          (path picture bounding box.south west) --
          (path picture bounding box.north east);}},
      box/.style={draw, fill=ChannelColor, minimum height=1.8cm, minimum width=1.2cm}]

      \node (In0) {};
      \node[above of=In0] (In1) {};
      \node[above of=In1] (In2) {};
      \node[above of=In2] (In3) {};
      \node[above of=In3] (In4) {};
      \node (Out0) at (7cm,0) {};
      \node[above of=Out0] (Out1) {};
      \node[above of=Out1] (Out2) {};
      \node[above of=Out2] (Out3) {};
      \node[above of=Out3] (Out4) {};

      \draw (In0) -- (Out0)
      (In1) -- (Out1)
      (In2) -- (Out2)
      (In3) -- (Out3)
      (In4) -- (Out4);

      \node[crossx] (B1) at (1cm,0) {};
      \node[octrl] (B2) at (3.5cm,0) {};
      \node[crossx] (B3) at (6cm,0) {};

      \node[ctrl] (A1) at (1cm,4cm) {};
      \node[ctrl] (A2) at (3.5cm,4cm) {};
      \node[ctrl] (A3) at (6cm,4cm) {};

      \draw (A1) -- (B1.center)
      (A2) -- (B2)
      (A3) -- (B3.center);

      \node[octrl] at (1cm,1cm) {};
      \node[octrl] at (3.5cm,1cm) {};
      \node[octrl] at (6cm,1cm) {};

      \node[crossx] at (1cm,2cm) {};
      \node[crossx] at (6cm,2cm) {};

      \node[octrl] at (1cm,3cm) {};
      \node[octrl] at (6cm,3cm) {};

      \node[box] at (3.5cm,2.5cm) {$\hat{V}$};
    \end{tikzpicture}
    \caption{Circuit implementing a two-level unitary between states
      $\ket{s}=\ket{10001}$ and $\ket{t}=\ket{11000}$.
      The transformation represented by the controlled SWAP gates is
      $10001 \to 10100$.
      The controlled $\hat{V}$ gate implements the two-level transformation in
      the subspace spanned by $\ket{s}$ and $\ket{t}$.
      The black dots denote the control operations activated if the qubit is in
      the state $\ket{1}$ and white dots denote controls activated when the
      qubit is in state $\ket{0}$.
      The crosses indicate SWAP operations.}
    \label{fig:weightcode}
\end{figure}
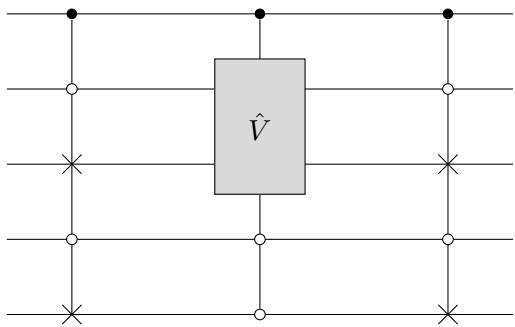

We now show that we can implement weight-$1$ two-qubit gates $\hat{V}$ with
multiple controls using only weight-$1$ two-qubit gates, the Fredkin, and $E$
gates.
We follow the technique employed in~\cite{Barenco1995gates} to prove this.
First, by Lemma 5.1 of~\cite{Barenco1995gates}, it's known that a controlled
version of $W\in \textrm{SU}(2)$ can be implemented by considering
$A,B,C\in \textrm{SU}(2)$ such that $ABC=I$ and $AXBXC=W$.
Directly employing the same decomposition in the case where $\hat{W}$ is a
weight-$1$ two-qubit gate, by noting that $\hat{A}\hat{B}\hat{C}=I$ and
$\hat{A}(\mathrm{SWAP})\hat{B}(\mathrm{SWAP})\hat{C}=\hat{W}$.
Note that in our case the CNOT gates become Fredkin gates.
To implement a single control version of $W\in \textrm{U}(2)$, a controlled
phase gate is included, in the weight-preserving case we use the gate $E$.

To construct a multiple controlled version of a unitary $\hat{W}$ with
$W\in \textrm{U}(2)$, consider the construction from Lemma 6.1
in~\cite{Barenco1995gates}.
We can create a weight-preserving version of this construction as in
\cref{fig:w-2control}, which includes two ancilla qubits set to $\ket{0}\ket{1}$
and requires finding $\hat{V}$ such that $\hat{V}^2=\hat{W}$.
This qubits can be reused for each gate we want to construct and thus only
increases the weight of all the qubits in $1$.
The intuition behind the circuit is that we use the $\ket{01}$ ancilla to decide
if we should apply the controlled $V^{\dag}$, since in the original construction
there are two CNOTs, we can replace them with SWAPs and the ancilla system.
When considering more control qubits, the construction generalizes in the same
way, by considering more Fredkin gates acting on ancillas instead of CNOTs.
Note that if we want the controls to be activated by $\ket{0}$ instead of
$\ket{1}$, we can simply introduce SWAPs in the ancilla system.
With these considerations we can implement any two-level unitary constructed
from circuits such as the one in \cref{fig:weightcode} using only weight-$1$ two
qubit gates, the Fredkin gate, and $E$.
If we want to use the discrete set $\hat{U}_1,\cdots,\hat{U}_s$ instead of all
weight-$1$ preserving two qubit gates, then the Solovay-Kitaev theorem applies
in this case and thus proves the result.
\end{proof}

\begin{remark}
  The proof above shows that to implement a two-level unitary over the
  weight-$k$ subspace requires $O(2^n)$ gates from our weight-universal gate
  set.
  This exponential comes mainly from the implementation we used for the
  controlled $\hat{W}$ gate.
  For our work in this paper, this exponential dependence is sufficient.
  We remark that a more efficient construction is possible, with caveat that it
  includes non-trivial operations outside the weight-$k$ subspace which might be
  of interest to some readers.
  In~\cite{Barenco1995gates} a more efficient construction is offered which
  scales like $O(n^2)$.
  We can adapt our proof to improve the scaling in the same way provided that we
  don't care how the two-level unitary acts outside the weight-$k$ subspace of
  dimension $2$.
  This improvement is obtained by noticing that circuits implementing two-level
  unitaries as in \cref{fig:weightcode} only require $k$ controls since we
  need to check the position of the 1's.
  This will imply that outside the weight-$k$ subspace the action of the unitary
  will be non-trivial, but if we only care about this subspace, then the
  dependence will be on $k$ rather than $n$ for implementing them.
  Even more improvements can be obtained using the techniques from Lemma 7.2 and
  Lemma 7.3 in~\cite{Barenco1995gates}.
\end{remark}

\begin{figure}[ht]
  \centering
  \begin{tikzpicture}[node distance=1cm and 0.5cm,
    gate/.style={draw, minimum size=16, fill=ChannelColor, inner sep=2},
    ctrl/.style={circle, fill=black, minimum size=4, inner sep=0},
    octrl/.style={circle, draw=black, fill=white, minimum size=4, inner sep=0},
    target/.style={circle, draw, minimum size=8, inner sep=0},
    measure/.style={draw, minimum height=14, minimum width=16, fill=gray},
    readout/.style={draw, minimum height=9, minimum width=12, fill=white},
    crossx/.style={path picture={\draw[inner sep=0pt]
        (path picture bounding box.south east) --
        (path picture bounding box.north west)
        (path picture bounding box.south west) --
        (path picture bounding box.north east);}},
    box/.style={draw, fill=ChannelColor, minimum height=1.8cm, minimum width=.8cm}]

    \begin{scope}
      \node (In0) at (0cm, 0cm) {};
      \node[above of=In0] (In1) {};
      \node[above of=In1] (In2) {};
      \node[above of=In2] (In3) {};
      \node (Out0) at (2cm, 0cm) {};
      \node[above of=Out0] (Out1) {};
      \node[above of=Out1] (Out2) {};
      \node[above of=Out2] (Out3) {};

      \draw (In0) -- (Out0)
      (In1) -- (Out1)
      (In2) -- (Out2)
      (In3) -- (Out3);

      \node[ctrl] at (1cm,2cm) {};
      \node[ctrl] (A) at (1cm,3cm) {};

      \node[box] (B) at (1cm,0.5cm) {$\hat{W}$};
      \draw (A) -- (B.north);

      \node at (3cm,2.5cm) {$\longrightarrow$};
    \end{scope}

    \begin{scope}[xshift=4.8cm]
      \node (In0) at (.5cm, 0cm) {};
      \node[above of=In0] (In1) {};
      \node[above of=In1] (In2) {};
      \node[above of=In2] (In3) {};
      \node[label=left:{$\ket{1}$}, above of=In3] (In4) {};
      \node[label=left:{$\ket{0}$}, above of=In4] (In5) {};
      \node (Out0) at (6.5cm, 0cm) {};
      \node[above of=Out0] (Out1) {};
      \node[above of=Out1] (Out2) {};
      \node[above of=Out2] (Out3) {};
      \node[above of=Out3] (Out4) {};
      \node[above of=Out4] (Out5) {};

      \draw (In0) -- (Out0)
      (In1) -- (Out1)
      (In2) -- (Out2)
      (In3) -- (Out3)
      (In4) -- (Out4)
      (In5) -- (Out5);

      \node[ctrl] (C1) at (3.5cm, 5cm) {};
      \node[box] (B1) at (3.5cm,.5cm) {$\hat{V}^{\dagger}$};
      \draw (C1) -- (B1.north);

      \node[ctrl] (C0) at (1.5cm, 2cm) {};
      \node[box] (B0) at (1.5cm,.5cm) {$\hat{V}$};
      \draw (C0) -- (B0.north);

      \node[ctrl] (C2) at (5.5cm, 3cm) {};
      \node[box] (B2) at (5.5cm,.5cm) {$\hat{V}$};
      \draw (C2) -- (B2.north);

      \begin{scope}[xshift=3cm]
        \node[ctrl] at (0cm,2cm) {};
        \node[crossx] at (0cm,4cm) {};
        \node[crossx] at (0cm,5cm) {};
        \draw (0cm,2cm) -- (0cm,5cm);
      \end{scope}

      \begin{scope}[xshift=4cm]
        \node[ctrl] at (0cm,2cm) {};
        \node[crossx] at (0cm,4cm) {};
        \node[crossx] at (0cm,5cm) {};
        \draw (0cm,2cm) -- (0cm,5cm);
      \end{scope}

      \begin{scope}[xshift=2.5cm]
        \node[ctrl] at (0cm,3cm) {};
        \node[crossx] at (0cm,4cm) {};
        \node[crossx] at (0cm,5cm) {};
        \draw (0cm,3cm) -- (0cm,5cm);
      \end{scope}

      \begin{scope}[xshift=4.5cm]
        \node[ctrl] at (0cm,3cm) {};
        \node[crossx] at (0cm,4cm) {};
        \node[crossx] at (0cm,5cm) {};
        \draw (0cm,3cm) -- (0cm,5cm);
      \end{scope}

    \end{scope}
  \end{tikzpicture}
  \caption{Circuit implementing a controlled version of $\hat{W}$ with two
    controls.
    This requires two ancillas initiated in the state $\ket{01}$ and can be
    reused in the construction of other gates.
    In this circuit $\hat{V}^2=\hat{W}$.}%
  \label{fig:w-2control}
\end{figure}
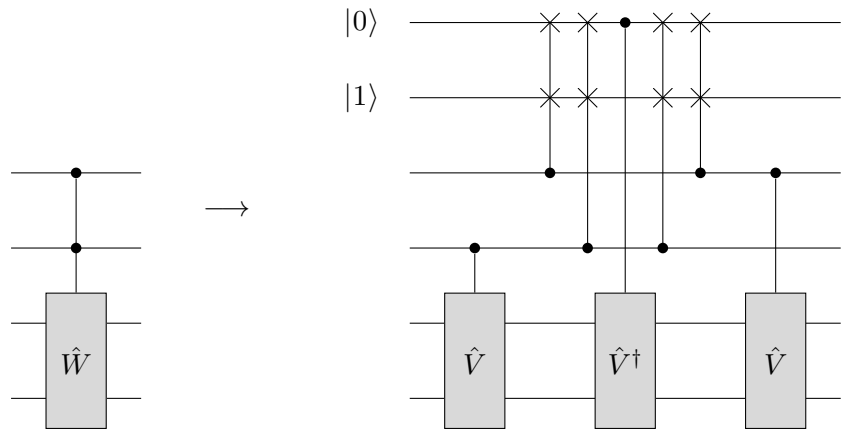

The following lemmas will be necessary for our proof of \cref{thm:weighted-ham}.

\begin{lemma}\label{lem:W-state-gen}
  Let $n = 2^{r}$ be an integer power of $2$.
  The W state
  \begin{equation*}
    \ket{W_n}=\frac{1}{\sqrt{n}}\left(\ket{10\cdots0}+\ket{01\cdots0}+
      \cdots+\ket{00\cdots1}\right)
  \end{equation*}
  of $n$ qubits can be computed from $\ket{0^{n-1}1}$ by a weight-preserving
  quantum circuit efficiently.
\end{lemma}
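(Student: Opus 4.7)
The plan is to build $\ket{W_n}$ from $\ket{0^{n-1}1}$ by a balanced binary-tree circuit of depth $r = \log_{2} n$ using only the weight-preserving two-qubit gate $G := \hat{H}$, where $H$ is the single-qubit Hadamard and the hat is as in \cref{def:U-weightpreserving}. By construction $G$ is of the form $\hat{U}$, hence weight-preserving, and acts on the weight-$1$ subspace of two qubits as $G\ket{01} = \frac{1}{\sqrt{2}}(\ket{01}+\ket{10})$ while fixing $\ket{00}$ and $\ket{11}$.

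At layer $1$ the circuit applies $G$ to the qubit pair $(n/2, n)$, splitting the amplitude equally between positions $n/2$ and $n$. At each subsequent layer $\ell = 2, \ldots, r$ it applies $G$ in parallel to the $2^{\ell-1}$ disjoint qubit pairs obtained by pairing every currently active position $j = i \cdot n/2^{\ell-1}$ (for $i = 1, \ldots, 2^{\ell-1}$) with the fresh midpoint $j - n/2^{\ell}$ lying between $j$ and the next lower active position. Each midpoint is in state $\ket{0}$ and each active position in state $\ket{1}$, so on every pair $G$ acts on $\ket{01}$ and produces the symmetric split $\frac{1}{\sqrt{2}}(\ket{01}+\ket{10})$. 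A short induction on $\ell$ then shows that after layer $\ell$ the state is the uniform superposition, with amplitude $2^{-\ell/2}$, over the $2^{\ell}$ weight-$1$ basis vectors whose single $1$ lies at some position of the form $i\cdot n/2^{\ell}$. After $r$ layers every integer position carries amplitude $1/\sqrt{n}$, which is exactly $\ket{W_n}$.

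The circuit uses $1 + 2 + \cdots + 2^{r-1} = n - 1$ gates arranged in $r$ parallel layers, hence is efficient. Every gate is of the form $\hat{U}$ and is therefore manifestly weight-preserving, and $\hat{H}$ can be implemented (exactly, or to any desired accuracy by Solovay--Kitaev) from the weight-universal gate set of \cref{lem:weight-universal}. The only real sanity check is that the $2^{\ell-1}$ qubit pairs chosen at each layer are disjoint, which is immediate from the regular dyadic spacing of active positions; no serious obstacle is anticipated.
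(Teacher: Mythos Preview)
Your proposal is correct and is essentially the same construction as the paper's proof: both build a depth-$r$ binary tree of $\hat{H}$ gates, relying on $\hat{H}\ket{00}=\ket{00}$ and $\hat{H}\ket{01}=\tfrac{1}{\sqrt{2}}(\ket{01}+\ket{10})$. The only difference is presentational---the paper phrases it as a recursion (apply one $\hat{H}$ to split between the two halves, then recurse on each half using that $C_{n}\ket{0^{n}}=\ket{0^{n}}$), whereas you unroll the same recursion into explicit layers indexed by $\ell$.
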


\begin{proof}
  We prove by induction on $r$ that there is such circuits $C_{n}$ such that
  $C_{n} \ket{0^{n}} = \ket{0^{n}}$ and $C_{n} \ket{0^{n-1}1} = \ket{W_{n}}$.
  First for $r=1$, the result follows by applying the gate
  \begin{equation}
    \label{eq:weight-H-gate}
    \begin{pmatrix}
      1 & 0 & \phantom{-}0 & 0\\
      0 & \frac{1}{\sqrt{2}} & \phantom{-}\frac{1}{\sqrt{2}} & 0\\
      0 & \frac{1}{\sqrt{2}} & -\frac{1}{\sqrt{2}} & 0\\
      0 & 0 & \phantom{-}0 & 1
    \end{pmatrix}.
  \end{equation}
  Assume the claim is proved for $n=2^{r-1}$ and we shall show the same for
  $n'=2^{r}$.
  Notice that
  \begin{equation*}
    \ket{W_{n'}} = \frac{1}{\sqrt{2}} \bigl( \ket{W_{n}} \ket{0^{n}} +
    \ket{0^{n}} \ket{W_{n}}\bigr),
  \end{equation*}
  which can be prepared by first apply the gate in \cref{eq:weight-H-gate} to
  the $n+1$ and the last qubit followed by two $C_{n-1}$ circuits acting on the
  first and second half of the qubits.
\end{proof}

\subsection{Weight-Preserving Quantum Circuit
  Satisfiability}\label{sec:weight-preserving-qcsat}

In this section, we construct a weight-preserving verification circuit from the
local Hamiltonian problem.
We emphasize that the Hamiltonian does \emph{not} need to be weight-preserving
and that the resulting circuit is not of constant depth yet.

\begin{lemma}\label{lem:wLH-to-wpQCSAT}
  Given a weight-$k$ $\ell$-local Hamiltonian problem $H = \sum_{j=1}^{m} H_{j}$ of
  $m$ terms on $n$ qubits and energy bounds $a$ and $b$ with gap
  $b-a > 1/\poly(n)$.
  Suppose also that $\norm{H_{j}} \le 1$ for all $j=1, 2, \ldots, m$.
  Then there is a weight-preserving circuit $W_{H}$ of $\poly(n)$ size on
  $n + M + k + 2$ qubits that accepts with probability
  \begin{equation*}
    1 - \frac{m + \bra{\psi} H \ket{\psi}}{2M}
  \end{equation*}
  where $\ket{\psi}$ is the input witness state and $M = 2^{\lceil\log_2m\rceil}$,
  the smallest integer power of $2$ larger than $m$.
\end{lemma}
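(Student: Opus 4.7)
The plan is to realize a weight-preserving analog of Kitaev's standard ``sample a random $H_j$ and measure $(I-H_j)/2$'' protocol. Using \cref{lem:W-state-gen}, I would first prepare the W-state $\ket{W_M}=\frac{1}{\sqrt M}\sum_{j=1}^M\ket{e_j}$ on $M$ qubits (starting from $\ket{0^{M-1}1}$), giving a WP uniform superposition over the indices $j\in[M]$.

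The hard step is to implement, controlled on the index $j$, a weight-preserving measurement of $(I-H_j)/2$: neither $H_j$ nor its usual Naimark dilation is WP in general. I would resolve this in two moves. First, replace each $H_j$ by its block-diagonal WP part $H_j^{(0)}:=\sum_{w=0}^{\ell}\Pi_w^{S_j}H_j\Pi_w^{S_j}$, where $\Pi_w^{S_j}$ projects the $\ell$-qubit support $S_j$ of $H_j$ onto weight-$w$ strings. Any contribution to $\bra\psi H_j\ket\psi$ with $|x_{S_j}|\neq|x'_{S_j}|$ carries a $\delta_{x_r,x'_r}$ on the complement $r:=[n]\setminus S_j$, which combined with $|x|=|x'|=k$ would force $|x_{S_j}|=|x'_{S_j}|$---a contradiction; hence $\bra\psi H_j\ket\psi=\bra\psi H_j^{(0)}\ket\psi$, and the POVM elements $E_0^j:=(I+H_j^{(0)})/2$, $E_1^j:=(I-H_j^{(0)})/2$ together with their square roots are WP on the $\ell$-qubit support with operator norm at most one. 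Second, dual-rail the Naimark ancilla: introduce a pair $(a,b)$ initialized to $\ket{01}$ and define a unitary $V_j$ on the $\ell$ target qubits together with $(a,b)$ by
\begin{align*}
V_j\ket{01}\ket v &= \ket{01}\sqrt{E_0^j}\ket v + \ket{10}\sqrt{E_1^j}\ket v, \\
V_j\ket{10}\ket v &= \ket{01}\sqrt{E_1^j}\ket v - \ket{10}\sqrt{E_0^j}\ket v,
\end{align*}
extended as the identity on $\ket{00}\ket v$ and $\ket{11}\ket v$. Unitarity follows from $E_0^j+E_1^j=I$ and $[E_0^j,E_1^j]=0$; weight preservation holds because $(a,b)$ carries weight exactly one in the active branches while $\sqrt{E_i^j}$ preserves target weight. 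By \cref{lem:weight-universal}, $V_j$ then admits an $O(1)$-size WP implementation.

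I would then wire these pieces as follows: for each $j\in[m]$ apply $V_j$ controlled on the $j$-th W-state qubit with target on $S_j\cup\{a,b\}$, and for each $j\in(m,M]$ apply a Fredkin controlled on the $j$-th W-state qubit that swaps $a$ with a fixed $\ket 1$ ancilla $c$, forcing $a=1$ in those branches. Finally, measure $a$ and accept on outcome $1$. Since $\langle e_j|e_{j'}\rangle=\delta_{j,j'}$, the accept probability decomposes into a uniform average over $j$: for $j\le m$ we get $\Pr[a=1\mid j]=(1-\bra\psi H_j\ket\psi)/2$, and for $j>m$ we get $\Pr[a=1\mid j]=1$, so
\[
\Pr[\text{accept}] \;=\; \frac{1}{M}\sum_{j=1}^m\frac{1-\bra\psi H_j\ket\psi}{2}+\frac{M-m}{M} \;=\; 1-\frac{m+\bra\psi H\ket\psi}{2M}.
\]

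All subcircuits (W-state prep, controlled $V_j$'s, and Fredkins) are weight-preserving, and the overall size is $\poly(n)$ since $m,M=\poly(n)$ and each $V_j$ uses $O(1)$ gates. The main obstacle is the second move above: the textbook Naimark dilation flips an ancilla qubit based on the measurement outcome, which breaks weight preservation; only the dual-rail encoding absorbs this flip into a weight-one pair, while the reduction to $H_j^{(0)}$ is what makes $\sqrt{E_i^j}$ weight-preserving so that the dual-rail trick applies at all.
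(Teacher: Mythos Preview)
Your proposal is correct and follows essentially the same architecture as the paper: prepare $\ket{W_M}$ for uniform term selection (via \cref{lem:W-state-gen}), observe that only the weight-block-diagonal part of each $H_j$ contributes on weight-$k$ states, and build a weight-preserving Naimark dilation on a dual-rail $\ket{01}$ ancilla pair.

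There is one genuine difference worth noting. The paper does not package the block-diagonal part $H_j^{(0)}=\sum_w \Pi_w^{S_j} H_j \Pi_w^{S_j}$ into a single operator; instead it introduces an extra $(k{+}1)$-qubit counting register $\ket{10\cdots 0}$, uses controlled cyclic shifts to record the Hamming weight of the $n-\ell$ complement qubits, and then applies a separate dilation $U^{(w)}$ for each weight sector, controlled on that register. This is where the ``$k+2$'' in the lemma's qubit count $n+M+k+2$ comes from. Your observation that $\sum_w O^{(w)}$ is itself a weight-preserving $\ell$-qubit operator lets you skip the counting register entirely and use a single $V_j$ on $S_j\cup\{a,b\}$; this is cleaner and uses only $O(1)$ ancillas rather than $k+O(1)$. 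Both routes rely on the same identity $\bra{\psi}H_j\ket{\psi}=\bra{\psi}H_j^{(0)}\ket{\psi}$ for weight-$k$ $\ket{\psi}$, so the final acceptance probability computation is identical. Your construction thus proves a slightly stronger statement (fewer qubits) than the one written, which is harmless since one can always pad with unused ancillas.
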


\begin{proof}[Proof of \cref{lem:wLH-to-wpQCSAT}]
  We use $P^{(k)}_{m}$ to denote the projector onto the subspace of weight-$k$
  basis states of length $m$.
  By convention, If $k>m$ then $P^{(k)}_{m}$ is the zero operator.
  We first show how we can implement a weight-preserving unitary circuit that
  accepts with probability $\bra{\psi} (I-H_{j}) \ket{\psi}/2$.
  Assume for simplicity that the term $H_{j}$ acts on the first $\ell$ qubits
  and let $O = (I-H_{j})/2$ be a positive semi-definite operator.
  We are interested in the quantity $\bra{\psi}O\ket{\psi}$ and we claim the
  following identity
  \begin{equation*}
    \bra{\psi} O \otimes I_{n-l} \ket{\psi} =
    \sum_{w=0}^{l'} \bra{\psi} O^{(w)} \otimes P^{(k-w)}_{n-l} \ket{\psi}
  \end{equation*}
  for state $\ket{\psi}$ of weight $k$ where
  $O^{(w)} = P^{(w)}_{l} O P^{(w)}_{l},l'=\min(k,l)$.
  This follows by computing the matrix entries of $O\otimes I_{n-l}$ with
  indices $i,i'$ of weight $k$.
  Alternatively, one can see that

  \begin{align*}
    \matrixel{\psi}{O\otimes I}{\psi}
    & = \Bigbra{\psi} \left(\sum_{w=0}^{l'} P^{(w)}_{l} \otimes P^{(k-w)}_{n-l}
      \right) O \otimes I \left(\sum_{w'=0}^{l'}P^{(w')}_{l} \otimes
      P^{(k-w')}_{n-l} \right) \Bigket{\psi}\\
    & = \bigbra{\psi} \sum_{w=0}^{l'} P^{(w)}_{l} O P^{(w)}_{l} \otimes
      P^{(k-w)}_{n-l} \bigket{\psi}\\
    & = \bigbra{\psi} \sum_{w=0}^{l'} O^{(w)}\otimes P^{(k-w)}_{n-l} \bigket{\psi}
  \end{align*}

  Now we introduce two ancilla qubits starting in state $\ket{01}$.
  Then the following matrix
  \begin{equation*}
    U^{(w)} =
    \begin{pmatrix}
      I & 0 & 0 & 0\\
      0 & \sqrt{O^{(w)}} & \sqrt{I-O^{(w)}} & 0\\
      0 & \sqrt{I-O^{(w)}} & -\sqrt{O^{(w)}} & 0\\
      0 & 0 & 0 & I
    \end{pmatrix}
  \end{equation*}
  is unitary and weight-preserving.
  It is unitary as $U^{(w)} \bigl( U^{(w)} \bigr)^{\dagger} = I$ follows by
  direct calculations.
  The weight-preserving property follows from the weight-preserving property of
  $O^{(w)}$, and therefore also $\sqrt{O^{(w)}}$ and $\sqrt{I - O^{(w)}}$.
  The ancilla qubits in the state $\ket{01}$ are chosen such that
  $U^{(w)} \ket{\psi}\ket{01} = \sqrt{O^{(w)}}
  \ket{\psi}\ket{01} + \sqrt{I-O^{(w)}} \ket{\psi}\ket{10}$.
  We want to act with $U^{(w)}$ conditioned on the remaining $n-l$ qubits having
  weight $k-w$.
  We can do this by adding $k+1$ ancillas in the state $\ket{100\cdots0}$ and
  then act on this ancilla registers with controlled gates that perform a cyclic
  shift of the registers controlled by the original $n-l$ qubits.
  We define $S$ as the circular shift operator that act as
  $S\ket{i_1i_2\dots i_n}=\ket{i_ni_1\dots i_{n-1}}$.
  We can define the circuit $V_{\text{weight}}$ formally as

  \begin{align*}
    V_{\text{weight}}=\sum_{i=0}^{l'} P^{(k-i)}_{n-l}\otimes S^i.
  \end{align*}

  The circuit is drawn in \cref{fig:circ_localOp}.
  If the remaining $n-l$ have weight $k-i$, then the $k+1$ ancillas gets rotated
  from $\ket{10^k}$ to $\ket{0^i10^{k-i}}$.
  Consider the probability that we measure the first group of ancillary qubits
  in basis $\ket{01}$,
  \begin{align*}
  \norm{(\ket{01}\bra{01}\otimes I) U \ket{01}\ket{\psi} \ket{10^{l'}}}^2
  & = \norm{\ket{01}\otimes\left(\sum_{w=0}^{l'}\sqrt{O^{(w)}} \otimes
    P^{(k-w)}_{n-l}\ket{\psi}\otimes\ket{0^w10^{k-w}}\right)}^2\\
  & = \matrixel{\psi}{\sum_{w=0}^{l'} O^{(w)}\otimes P^{(k-w)}_{n-l}}{\psi}\\
  & = \bra{\psi}O\ket{\psi}.
  \end{align*}

  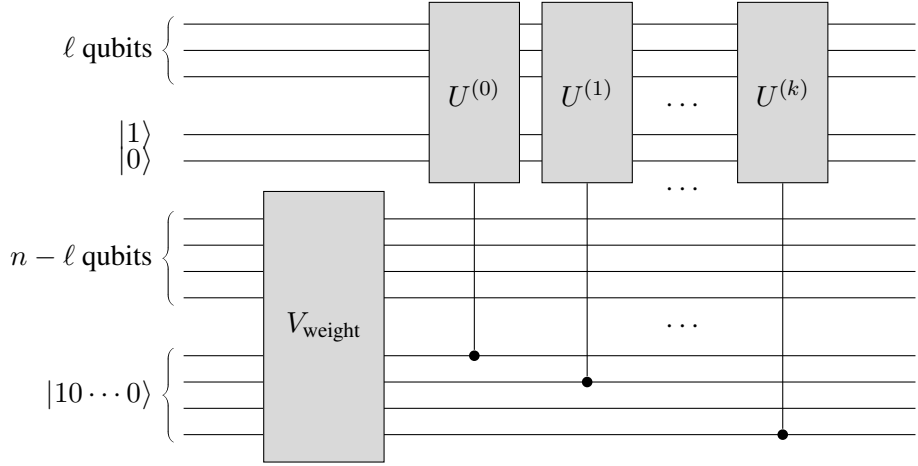
\begin{figure}[ht]
    \centering
    \begin{tikzpicture}[node distance=0.35cm,
      gate/.style={draw, minimum size=16, fill=ChannelColor, inner sep=2},
      ctrl/.style={circle, fill=black, minimum size=4, inner sep=0},
      octrl/.style={circle, draw=black, fill=white, minimum size=4, inner sep=0},
      target/.style={circle, draw, minimum size=8, inner sep=0},
      brace/.style={decorate, decoration = {calligraphic brace, amplitude=4}},
      measure/.style={draw, minimum height=14, minimum width=16, fill=gray},
      readout/.style={draw, minimum height=9, minimum width=12, fill=white},
      crossx/.style={path picture={\draw[inner sep=0pt]
          (path picture bounding box.south east) --
          (path picture bounding box.north west)
          (path picture bounding box.south west) --
          (path picture bounding box.north east);}},
      box/.style={draw, fill=ChannelColor, minimum height=2.4cm, minimum width=1.2cm},
      big/.style={draw, fill=ChannelColor, minimum height=3.6cm, minimum width=1.6cm},
      ]

      \node (In0) at (0,0) {};
      \node[above of=In0] (In1) {};
      \node[above of=In1] (In2) {};
      \node[above of=In2] (In3) {};
      \node[above=.5cm of In3] (In4) {};
      \node[above of=In4] (In5) {};
      \node[above of=In5] (In6) {};
      \node[above of=In6] (In7) {};
      \node[above=.5cm of In7] (In8) {};
      \node[above of=In8] (In9) {};
      \node[above=.5cm of In9] (InA) {};
      \node[above of=InA] (InB) {};
      \node[above of=InB] (InC) {};

      \node (Out0) at (10cm,0) {};
      \node[above of=Out0] (Out1) {};
      \node[above of=Out1] (Out2) {};
      \node[above of=Out2] (Out3) {};
      \node[above=.5cm of Out3] (Out4) {};
      \node[above of=Out4] (Out5) {};
      \node[above of=Out5] (Out6) {};
      \node[above of=Out6] (Out7) {};
      \node[above=.5cm of Out7] (Out8) {};
      \node[above of=Out8] (Out9) {};
      \node[above=.5cm of Out9] (OutA) {};
      \node[above of=OutA] (OutB) {};
      \node[above of=OutB] (OutC) {};

      \draw (In0) -- (Out0)
      (In1) -- (Out1)
      (In2) -- (Out2)
      (In3) -- (Out3)
      (In4) -- (Out4)
      (In5) -- (Out5)
      (In6) -- (Out6)
      (In7) -- (Out7)
      (In8) -- (Out8)
      (In9) -- (Out9)
      (InA) -- (OutA)
      (InB) -- (OutB)
      (InC) -- (OutC);

      \node[big] (big) at ([xshift=2cm]$(In3) !.5! (In4)$) {$V_{\text{weight}}$};

      \node[ctrl] (C0) at ([xshift=4cm]$(In3)$) {};
      \node[box] (B0) at ([xshift=4cm]$(In8) !.5! (InC)$) {$U^{(0)}$};
      \draw (C0) -- (B0.south);

      \node[ctrl] (C1) at ([xshift=5.5cm]$(In2)$) {};
      \node[box] (B1) at ([xshift=5.5cm]$(In8) !.5! (InC)$) {$U^{(1)}$};
      \draw (C1) -- (B1.south);

      \node at ([xshift=6.8cm]$(In7) !.5! (In8)$) {$\cdots$};
      \node at ([xshift=6.8cm]$(In3) !.5! (In4)$) {$\cdots$};
      \node at ([xshift=6.8cm]$(In9) !.5! (InA)$) {$\cdots$};

      \node[ctrl] (C2) at ([xshift=8.1cm]$(In0)$) {};
      \node[box] (B2) at ([xshift=8.1cm]$(In8) !.5! (InC)$) {$U^{(k)}$};
      \draw (C2) -- (B2.south);

      \draw[brace] ([yshift=-.1cm]$(In0)$) -- ([yshift=.1cm]$(In3)$)
      node[pos=0.5, label=left:{$\ket{10\cdots 0}$}] {};

      \draw[brace] ([yshift=-.1cm]$(In4)$) -- ([yshift=.1cm]$(In7)$)
      node[pos=0.5, label=left:{$n-\ell$ qubits}] {};

      \draw[brace] ([yshift=-.1cm]$(InA)$) -- ([yshift=.1cm]$(InC)$)
      node[pos=0.5, label=left:{$\ell$ qubits}] {};

      \node[label=left:{$\ket{0}$}] at (In8) {};
      \node[label=left:{$\ket{1}$}] at (In9) {};
    \end{tikzpicture}
    \caption{Circuit implementing the observable $O=(I-H_j)/2$ described in the
      text.
      The unitary $V_{\text{weight}}$ writes the weight of the $n-l$ qubits on the
      counting registry $\ket{10\cdots0}$.
      The circuit acts on the $\ell$ qubits (and the pair of ancillas) depending on
      this weight.}
    \label{fig:circ_localOp}
  \end{figure}

  We are now ready to construct the weight-preserving circuit for the local
  Hamiltonian $H$.
  It consists of two registers of qubits.
  The first is the term selection register of $M = 2^{\lceil\log_2m\rceil}$
  qubits.
  The second register contains $n$ qubits representing the witness state to the
  Hamiltonian problem.
  The circuit starts with the preparation of the $M$-qubit $\ket{W}$ state in
  the term selection register.
  For all $j=1, 2, \ldots, m$ and conditioned on the $j$-th qubit in the term
  selection register being in state $\ket{1}$, we perform the network of SWAP
  gates that moves the qubits that $H_{j}$ acts on to the first $\ell$ qubits,
  apply the weight-preserving energy measurement circuit for $O=(I-H_{j})/2$ as
  described above, note that the measurement performed depends on the chosen $j$
  as well.
  For all $j=m+1, \ldots, M$, the circuit accepts immediately.

  It is easy to check that all gates used in the circuit are weight-preserving
  and the circuit accepts with probability
  \begin{equation*}
    \frac{M-m}{M} + \sum_{j=1}^{m}  \frac{1 - \bra{\psi}  H_{j} \ket{\psi}}{2M} =
    1- \frac{m + \bra{\psi} H \ket{\psi}}{2M}.
  \end{equation*}

\end{proof}

\subsection{Weight-Preserving Marriott-Watrous Amplification}\label{sec:strong-error-reduction}

In this section, we prove that it is possible to amplify the completeness and
soundness gap for weight-preserving verification circuits with one copy of the
witness state.
For standard $\QMA$ verifiers, this is known as the strong
completeness-soundness gap amplification first established by Marriott and
Watrous in~\cite{marriott2005quantum}, where the construction iteratively measures the
post-measurement states of the verifier circuit in some structured way and makes
the final decision by performing a counting procedure on the measurement
outcomes.
This standard construction does not fit will in the weight-preserving scenario
as it is difficult to encode polynomially many measurement outcome bits in a
Hilbert space of dimension roughly $n^{k}$.

For this reason, we would use the fast $\QMA$
reduction~\cite{NWZ09,gilyen2019quantum}.
We employ a version inspired by the quantum singular value transformation (QSVT)
algorithm in the following~\cite{gilyen2019quantum} to amplify the error gap of
the verification circuit in a weight preserving manner.

\begin{theorem}\label{thm:FastQMA}
  Given a verifier circuit $V$ for a language $L\in\QMA$ with acceptance
  probability thresholds $(a,b)$, we can construct a new verifier circuit $V'$
  with threshold $a'=\epsilon, b'=1-\epsilon$ with one extra ancillary qubit,
  and $m = O \left(\frac{1}{\max{[ \sqrt{b} - \sqrt{a}, \sqrt{1-a} -
        \sqrt{1-b}]}} \log(\frac{1}{\epsilon}) \right)$
  calls to $V$ and $V^{\dagger}$ as in \cref{fig:SVT}.
\end{theorem}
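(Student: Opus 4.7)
The plan is to realize the verifier as a block encoding and then use quantum singular value transformation (QSVT) to apply a sharp polynomial that separates the acceptance probabilities. Write the verifier's action on witness register $W$ and ancilla register $A$ (initialised to $\ket{0^a}$), and let $\Pi$ be the projector onto the accepting subspace. Define the block-encoded operator
\begin{equation*}
  M \;=\; \Pi\, V\, \bigl(I_{W} \otimes \ket{0^{a}}\bra{0^{a}}\bigr),
\end{equation*}
whose singular values, on an orthonormal basis of witness states, are precisely $\{\sqrt{p_{j}}\}$, where $p_{j}$ is the acceptance probability on the $j$-th basis witness. Yes-instances correspond to some $p_{j}\geq b$, and no-instances force $p_{j}\leq a$ for all witnesses. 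The amplification task thus becomes: apply a polynomial $P$ to the singular values of $M$ that is $\geq 1-\epsilon$ on $[\sqrt{b},1]$ and $\leq \epsilon$ on $[0,\sqrt{a}]$.

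First I would invoke the standard polynomial approximation result (an appropriately shifted and scaled error function, as in Gily\'en--Su--Low--Wiebe) to get such a $P$ of odd degree
\begin{equation*}
  d \;=\; O\!\left(\frac{1}{\sqrt{b}-\sqrt{a}}\,\log\frac{1}{\epsilon}\right).
\end{equation*}
By repeating the same argument with the rejecting projector $I-\Pi$, whose singular values are $\sqrt{1-p_{j}}$, one obtains the other branch of the maximum in the theorem's statement, and I would simply pick whichever choice is cheaper. Next I would invoke the QSVT theorem to implement $P$ applied to the singular values of $M$: the resulting circuit interleaves $d$ applications of $V$ and $V^{\dagger}$ with reflections about the ancilla subspace $I_{W}\otimes(2\ket{0^{a}}\bra{0^{a}}-I)$ and about the accepting subspace $2\Pi-I$, together with single-qubit phase rotations on one additional ``phase'' qubit that implements the QSP phase factors.

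The new verifier $V'$ then runs this QSVT circuit on input $\ket{\psi}_{W}\ket{0^{a}}_{A}\ket{0}_{\text{qsp}}$ and measures the phase qubit (together with the original output qubit). By the QSVT correctness guarantee, the probability of acceptance on witness $\ket{\psi}$ equals $\|P(M)\ket{\psi}\ket{0^{a}}\|^{2}$ up to the usual unitarity book-keeping, so yes-instances accept with probability at least $1-\epsilon$ and no-instances accept with probability at most $\epsilon$, after a trivial rescaling of $\epsilon$. The gate count bound follows because the QSVT circuit makes $d$ calls to $V$ and $V^{\dagger}$ plus $O(d)$ elementary reflections and phase rotations on the single extra qubit.

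The main obstacle I anticipate is not the amplification itself but making the construction compatible with the weight-preserving setting that the rest of \cref{sec:LH-qw1} relies on. The delicate point is that the two reflections only touch the ancilla register $A$ and the accept/reject qubit, in a fixed computational basis, so they can be expressed as products of $E$-type phase gates controlled on classical patterns, which are weight-preserving gates in the sense of \cref{lem:weight-universal}. The QSP phase rotations act only on the extra qubit, so they never move weight. Consequently, if the original $V$ is weight-preserving, the whole amplified circuit $V'$ is weight-preserving as well; the Hilbert-space-dimension issue that blocks the Marriott--Watrous construction does not arise here because we never store polynomially many measurement outcomes, using instead a single coherent phase qubit.
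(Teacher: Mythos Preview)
Your proposal is correct and follows the same QSVT approach as the paper; in fact the paper does not prove \cref{thm:FastQMA} at all but simply cites it from~\cite{gilyen2019quantum}, so you have supplied more detail than the paper does, writing the construction in block-encoding language where the paper just records the explicit phase circuit $U_{\Phi}=\prod_{j}\bigl(e^{i\phi_{2j-1}(2\Pi-I)}V^{\dagger}e^{i\phi_{2j}(2\tilde{\Pi}-I)}V\bigr)$ and the controlled-$U_{\pm\Phi}$ acceptance test.

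Your final paragraph is really about \cref{col:FastwpMW}, not \cref{thm:FastQMA}. There the paper's treatment differs slightly from yours: rather than arguing that the reflections and phase rotations are already weight-preserving, the paper explicitly encodes the single extra QSVT ancilla into the two-qubit $\{\ket{01},\ket{10}\}$ subspace and replaces every gate on it by its weight-preserving counterpart (so the Hadamard preparing $\ket{+}$ becomes the $\hat{H}$ of \cref{eq:weight-H-gate}, and the $\mathrm{C}_{\Pi}\mathrm{NOT}$ in \cref{fig:SVT} becomes a controlled-SWAP). Your higher-level argument that the reflections $2\Pi-I$ and $2\tilde{\Pi}-I$ are computational-basis diagonal is correct, but you should be aware that the paper's implementation of the controlled phase (via the $\mathrm{C}_{\Pi}\mathrm{NOT}$ sandwich) would flip the ancilla bit and hence is not weight-preserving without the dual-rail encoding; this is the one concrete detail your sketch omits.
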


By careful examination of the new circuit constructed
in~\cite{gilyen2019quantum}, we can show that the circuit could be implemented
in a weight preserving manner, giving us the following corollary:

\begin{corollary}\label{col:FastwpMW}
  Given an instance circuit $\mathcal{C}$ of weight-$k$ weight preserving
  quantum circuit with completeness and soundness $c,s, (c-s>1/\poly(n))$, we
  can construct a new weight preserving circuit $\mathcal{C}'$ with threshold
  $c'=1-\epsilon, s'=\epsilon$, by making $\poly(n)\log(1/\epsilon)$ calls to
  the circuit $\mathcal{C}$.
\end{corollary}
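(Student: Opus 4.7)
The plan is to take the fast $\QMA$ amplification circuit of \cref{thm:FastQMA} applied to $\mathcal{C}$ and verify, gate by gate, that every ingredient of its construction can be realized by weight-preserving gates, using only a constant number of additional ancillas placed in a fixed initial state. The standard QSVT-based amplification interleaves calls to $\mathcal{C}$ and $\mathcal{C}^\dagger$ with three kinds of extra operations: (i) reflections $2\Pi_{\mathrm{acc}} - I$ about the accepting subspace (the designated output qubit being $\ket{1}$ and all checking ancillas reset to $\ket{0}$); (ii) reflections $2\ket{0\cdots 0}\bra{0\cdots 0} - I$ about the initial ancilla state; and (iii) a constant number of single-qubit operations on one extra ``signal'' qubit implementing the phase schedule $\{\phi_k\}$, namely phase rotations $e^{i\phi_k Z}$ together with initialization and measurement-basis rotations.

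First I would observe that the calls to $\mathcal{C}$ and $\mathcal{C}^\dagger$ are weight-preserving by hypothesis. The reflections in (i) and (ii) are diagonal in the computational basis and hence preserve weight on every weight sector; they can be implemented using multiply-controlled $E$-type phase gates built from the weight-universal gate set of \cref{lem:weight-universal}. The phase rotations $e^{i\phi_k Z}$ in (iii) are also diagonal, so they are already weight-preserving. The only gates that are not automatically weight-preserving are the single-qubit non-diagonal rotations used to initialize the signal qubit and rotate it into the measurement basis. Following \cref{def:U-weightpreserving}, I would replace each such gate $U$ by its lift $\hat{U}$ acting on the signal qubit together with a partner ancilla, with the pair initialized to $\ket{01}$ exactly as in the proof of \cref{lem:weight-universal}.

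The hard part will be careful weight bookkeeping: with the handful of extra ancillas prepared in fixed computational-basis states, one must check that the whole construction acts within a single weight sector of the joint Hilbert space, and that each $\hat{U}$ substitution implements exactly $U$ on the two-dimensional subspace that actually matters for the QSVT analysis. Once this is verified, the acceptance probability of the amplified circuit $\mathcal{C}'$ on a weight-$k$ witness coincides with that of the original fast $\QMA$ circuit, and the analysis of \cref{thm:FastQMA} carries over unchanged. Plugging the initial gap $c - s > 1/\poly(n)$ into that query bound then yields $m = \poly(n) \log(1/\epsilon)$ calls to $\mathcal{C}$ and $\mathcal{C}^\dagger$, completing the proof.
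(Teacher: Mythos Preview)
Your proposal is correct and follows essentially the same approach as the paper: both take the QSVT-based fast $\QMA$ amplification of \cref{thm:FastQMA}, observe that the phased reflections about the computational-basis projectors $\Pi$ and $\tilde{\Pi}$ are diagonal and hence weight-preserving, and handle the single extra signal qubit by encoding it in the $\{\ket{01},\ket{10}\}$ subspace so that the non-diagonal gates on it become their $\hat{U}$ lifts. The paper is slightly more explicit than you are about one implementation detail---it realizes the controlled phased reflection via the $C_{\Pi}\mathrm{NOT}$ sandwich of \cref{fig:SVT} and notes that the $\Pi$-controlled part can be built with the weight-counting ``shifting trick'' from \cref{lem:wLH-to-wpQCSAT}---but this is exactly what your appeal to \cref{lem:weight-universal} would produce.
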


Now we explicitly write out the circuit in the previous construction.
If we assume
$V\ket{\psi}\ket{1^{f(k)}0^p}=\alpha\ket{1}\ket{\varphi_1}+\beta\ket{0}\ket{\varphi_0}$,
let $\Phi\in \mathbb{R}^{2m}$, define the following circuit $U_{\Phi}$:
\begin{align*}
  U_{\Phi}=\prod_{j=1}^n\left(e^{i\phi_{2j-1}(2\Pi-I)}V^{\dagger}e^{i\phi_{2j}(2\tilde{\Pi}-I)}V\right),
\end{align*}
where the $\Pi=I\otimes\ketbra{1^{f(k)}0^p}$ is the projector that checks the
ancillary qubits are correctly initialized, and
$\tilde{\Pi}=\ketbra{1}\otimes I$ is the accepting projector on the output qubit
of $V$.

It is shown in~\cite{gilyen2019quantum} that there exists some
$\Phi\in \mathbb{R}^{2m}$, where $m$ is set as in \cref{thm:FastQMA}, such that
\begin{align*}
  ||(\bra{+}\otimes\Pi)(\ket{0}\bra{0}\otimes
  U_{\Phi}+\ket{1}\bra{1}\otimes U_{-\Phi})
  (\ket{+}\otimes\ket{\psi})||^2
  & \geq 1-\epsilon,
  &\text{ if }||\tilde{\Pi}V\ket{\psi}||^2\geq b;\\
  ||(\bra{+}\otimes\Pi)(\ket{0}\bra{0}\otimes U_{\Phi}+
  \ket{1}\bra{1}\otimes U_{-\Phi})(\ket{+}\otimes\ket{\psi})||^2
  &\leq \epsilon,
  &\text{ if }||\tilde{\Pi}V\ket{\psi}||^2\leq a;
\end{align*}

To implement the controlled $U_{\Phi}$ in the previous formula, we only need to
implement the gates $\sum_b\ket{b}\bra{b}\otimes e^{i(-1)^b\phi(2\Pi-I)}$ as in
\cref{fig:SVT}.

\begin{figure}[ht]
  \centering  
  \begin{tikzpicture}[scale=1.2, node distance=1cm,
    gate/.style={draw, minimum height=0.8cm, fill=ChannelColor, inner sep=3},
    ctrl/.style={circle, fill=black, minimum size=4, inner sep=0},
    octrl/.style={circle, draw=black, fill=white, minimum size=4, inner sep=0},
    target/.style={circle, draw, minimum size=12, inner sep=0},
    brace/.style={decorate, decoration = {calligraphic brace, amplitude=3}},
    measure/.style={draw, minimum height=14, minimum width=16, fill=gray},
    readout/.style={draw, minimum height=9, minimum width=12, fill=white},
    crossx/.style={path picture={\draw[inner sep=0pt]
        (path picture bounding box.south east) --
        (path picture bounding box.north west)
        (path picture bounding box.south west) --
        (path picture bounding box.north east);}},
    box/.style={draw, fill=ChannelColor, minimum height=2.4cm, minimum width=1.2cm}]

    \begin{scope}
      \node (In0) at (0,0) {};
      \node[above of=In0] (In1) {};
      \node[above of=In1] (In2) {};
      \node[above of=In2] (In3) {};

      \node (Out0) at (6cm,0) {};
      \node[above of=Out0] (Out1) {};
      \node[above of=Out1] (Out2) {};
      \node[above of=Out2] (Out3) {};

      \draw (In0) -- (Out0)
      (In2) -- (Out2)
      (In3) -- (Out3);

      \node[target] (T) at ([xshift=1.5cm]$(In3)$) {};
      \node[box] (B) at ([xshift=1.5cm]$(In0) !.5! (In2)$) {$\Pi$};
      \draw (T.north) -- (B.north);

      \node[gate] at ($(In3) !.5! (Out3)$) {$e^{-i\phi\sigma_{z}}$} {};

      \node[target] (T) at ([xshift=4.5cm]$(In3)$) {};
      \node[box] (B) at ([xshift=4.5cm]$(In0) !.5! (In2)$) {$\Pi$};
      \draw (T.north) -- (B.north);

      \node at ([xshift=.3cm]$(In1)$) {$\cdots$};
      \node at ([xshift=3cm]$(In1)$) {$\cdots$};
      \node at ([xshift=5.7cm]$(In1)$) {$\cdots$};

    \end{scope}
  \end{tikzpicture}
  \caption{Implementing $\sum_b\ket{b}\bra{b}\otimes e^{i(-1)^b\phi(2\Pi-I)}$.}\label{fig:SVT}
\end{figure}

The $\mathrm{C_{\Pi}NOT}$ gate is defined as $\Pi\otimes X+(I-\Pi)\otimes I$.
In our weight preserving reduction, we replace the circuit $V$ with our weight
preserving instance $\mathcal{C}$, and encode the ancillary qubit in the
$\{\ket{01},\ket{10}\}$ space as before, replacing all operations on the ancilla
with their weight preserving counterpart.
In the end, we measure the circuit with
$\left(\frac{(\ket{01}+\ket{10})(\bra{01}+\bra{10})}{2}\otimes\Pi\right)$, and
accept if the output is 1.
The projector $\Pi$ could be implemented by counting the weight of first $f(k)$
qubits and rest of qubits using the shifting trick.

We complete the proof by examining that each gate in the new constructed circuit
is weight preserving.

\subsection{Spatially Sparse Weighted Local Hamiltonian}\label{sec:spatially-sparse}

We now show that any weight-preserving circuit $W$ with $R$ gates acting on $n$ qubits and with a
weight-$k$ witness state can be transformed to a weight-$(2k+1)$ local
Hamiltonian problem that is \emph{almost spatially sparse} (defined below).
The almost spatially sparsity will be used in the end to prove that the
problem is in $\QW[1]$.

\begin{definition}[Spatially Sparse Local Hamiltonian]
  A local Hamiltonian problem is \emph{spatially sparse} if each qubit is only
  acted by $O(1)$ Hamiltonians.
\end{definition}

\begin{definition}[Almost Spatially Sparse Local Hamiltonian]
  A local Hamiltonian problem is \emph{almost} spatially sparse with respect to
  a register of qubits if the Hamiltonian becomes spatially sparse if we remove
  all terms acting only on qubits in this register.
\end{definition}

The spatially sparse local Hamiltonian is proven to be $\QMA$ complete
in~\cite{OT08}, their key lemma is stated as follows:

\begin{lemma}\label{lem:OTconstruct}
  Given a verifier circuit $V_x$ for a language $L\in \QMA$, there exists a
  spatially sparse local Hamiltonian $H=\sum_i H_i$ and $T=\poly(n)$ that
  satisfies the following conditions:
  \begin{itemize}
    \item If $V_x$ accepts some state $\ket{\xi}$ with probability $1-\epsilon$,
          there exists state $\ket{\psi}$ that
          $\bra{\psi}H\ket{\psi}\leq \frac{\epsilon}{T+1}$.
    \item If $V_x$ accepts any state $\ket{\xi}$ with probability no larger than
          $\epsilon$, then all eigenvalues of $H$ is larger than
          $\frac{c(1-\sqrt{\epsilon}-\epsilon)}{T^3}$, where $c$ is some
          constant.
 \end{itemize}
\end{lemma}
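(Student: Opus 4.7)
The plan is to follow the approach of Oliveira and Terhal: take the standard Kitaev circuit-to-Hamiltonian construction and embed the qubits and gates into a two-dimensional geometry so that every qubit touches only $O(1)$ Hamiltonian terms. I would start from the verifier $V_{x} = U_{T} U_{T-1} \cdots U_{1}$ with $T = \poly(n)$ two-qubit gates drawn from a fixed universal set, attach a $(T+1)$-dimensional clock register in unary, and form the usual
\begin{equation*}
  H = H_{\text{in}} + H_{\text{out}} + H_{\text{prop}} + H_{\text{clock}},
\end{equation*}
where $H_{\text{in}}$ enforces that the ancilla qubits are $\ket{0}$ at clock value $0$, $H_{\text{out}}$ penalizes a rejecting output at clock value $T$, $H_{\text{prop}} = \sum_{t} H_{\text{prop},t}$ checks that the computation advances by $U_{t}$ at each step, and $H_{\text{clock}}$ enforces that the clock register is a valid unary string.

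For the spatial sparseness, I would place the data qubits along one row of a grid and route SWAP gates so that each $U_{t}$ acts on two spatially adjacent data qubits, while placing the $t$-th clock qubit next to the spatial location where $U_{t}$ executes. The standard $5$-local Kitaev propagation term
\begin{equation*}
  H_{\text{prop},t} = \tfrac{1}{2} \bigl( I \otimes \ket{100}\bra{100} + I \otimes \ket{110}\bra{110} - U_{t} \otimes \ket{110}\bra{100} - U_{t}^{\dagger} \otimes \ket{100}\bra{110} \bigr),
\end{equation*}
acting on clock qubits $t-1, t, t+1$ and the two data qubits touched by $U_{t}$, together with $H_{\text{clock}} = \sum_{t} \ket{01}\bra{01}_{t,t+1}$, then yields a Hamiltonian in which each data qubit appears only in the $O(1)$ propagation terms whose gates act on it and each clock qubit appears in only $O(1)$ propagation and clock-validity terms. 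This gives the spatial sparseness we want.

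The spectral analysis follows Kitaev's standard two-step argument. For completeness, the history state
\begin{equation*}
  \ket{\psi_{\text{hist}}} = \frac{1}{\sqrt{T+1}} \sum_{t=0}^{T} \bigl( U_{t} \cdots U_{1} \ket{\xi}\ket{0^{p}} \bigr) \otimes \ket{t}_{\text{clock}}
\end{equation*}
is annihilated by $H_{\text{in}}$, $H_{\text{prop}}$, and $H_{\text{clock}}$, while $\bra{\psi_{\text{hist}}} H_{\text{out}} \ket{\psi_{\text{hist}}} \le \epsilon/(T+1)$, giving the first claim. For soundness, I would first restrict to the valid-clock subspace, where a unitary change of basis maps $H_{\text{prop}}$ to a quantum random walk on a line whose spectral gap above its history-state nullspace is $\Omega(1/T^{2})$; then apply Kitaev's geometric lemma with $H_{\text{in}} + H_{\text{out}}$, whose restriction to the history subspace is an averaged input-output projector rejecting any witness with gap $\Omega(1 - \sqrt{\epsilon} - \epsilon)$. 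Combining the two gaps and paying another factor of $T$ from the geometric lemma gives the claimed $\Omega\bigl( (1 - \sqrt{\epsilon} - \epsilon)/T^{3} \bigr)$ lower bound.

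The main obstacle is the combinatorial geometric layout rather than the spectral analysis: a direct use of the unary clock together with SWAP routing can easily create a clock qubit or a data qubit with incidence degree $\Theta(T)$ if many gates happen to pass through the same spatial region, so one must design the embedding carefully enough to keep every degree $O(1)$ while blowing the gate count up by at most a $\poly(n)$ factor. Once the layout is fixed, the spectral bookkeeping is essentially standard because the modified propagation terms coincide with the textbook $5$-local Kitaev terms on the valid-clock subspace, so the constants in the geometric lemma carry over unchanged.
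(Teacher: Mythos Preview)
Your outline correctly identifies the overall architecture (Kitaev construction plus geometric layout) and the spectral analysis is standard and fine. But the key content of this lemma is precisely the layout that you flag as ``the main obstacle'' and then do not resolve. Your proposed scheme---keep the $n$ data qubits in a single row and route SWAPs so each $U_t$ acts on adjacent qubits---does not yield spatial sparseness: a data qubit sitting in a frequently traversed region will be hit by $\Theta(T)$ SWAPs, and with the unary clock each of those SWAPs contributes a propagation term, so that qubit has unbounded degree. Likewise, your $H_{\text{in}}$, which checks every ancilla against clock value $0$, forces the first clock qubit to appear in $\Theta(n)$ terms.

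The Oliveira--Terhal construction solves both issues by a different mechanism than routing. First, it replaces $V_x$ by an equivalent circuit $U_{\text{sp}}$ on a grid of $(R+1)\times n$ fresh qubits: layer $j$ holds a snapshot of the $n$ qubits at time $j$, the nontrivial gate $U_j$ is applied within layer $j$, and then all qubits are SWAPped columnwise into layer $j{+}1$. In this blown-up circuit every physical qubit is touched by at most three gates, so the propagation Hamiltonian for $U_{\text{sp}}$ is automatically spatially sparse. Second, $H_{\text{in}}$ is made \emph{lazy}: each ancilla qubit $q$ is checked not at time $0$ but at the time $t_q$ it is first used, so different ancillas are paired with different clock qubits and no clock qubit is overloaded. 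These two ideas---duplicating qubits across layers rather than routing within a fixed set, and lazy initialization---are what you are missing; once they are in place, the spectral analysis proceeds exactly as you sketch.
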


We closely follow the construction in~\cite{OT08} to prove our weight preserving
variant of \cref{lem:OTconstruct}.
We first transform the original verification circuit $V_x$ to an equivalent
circuit $U_{\text{sp}}$ on a grid, such that each qubit on the grid is only
acted upon by a constant number of gates.
Then we apply a modified version of Kitaev's circuit-to-Hamiltonian construction
to obtain our Hamiltonian.

Assume $V_x=U_R\dots U_2U_1$ acts on $n$ qubits, where $U_i$ are local gates
from a universal gate set.
We introduce a grid with $R+1$ layers, each consists of $n$ qubits.
Intuitively, we want to use the $i$th qubit in layer $j$ in $U_{\text{sp}}$ to
simulate the state of $i$th qubit at time step $j$ in $V_x$.

To simulate $V_x$, we initialize the qubits corresponding to the input qubits of
$V_x$ in the first layer as the witness state $\ket{\psi}$ for $V_x$, and rest
of qubits in the first layer as $V_x$ initial work space.
For the qubits in other layers, we initialize them as $\ket{0}$.
On the $i$th layer, we perform the nontrivial gate $U_i$ on the corresponding
qubits, then perform SWAP gates on the qubits in the same column between layer
$i$ and $i+1$.
In the $R+1$ layer, we perform the measurement on the output qubit.
It is easy to verify that the circuit $U_{\text{sp}}$ simulates $V_x$
faithfully.

Moreover, if our $V_x$ is a weight preserving circuit, since every gate $U_i$
and SWAP gate are weight preserving, we can see that $U_{\text{sp}}$ is also
weight preserving.
The order of actual computational gates, and SWAP gates are applied in the same
order as specified in~\cite{OT08}.

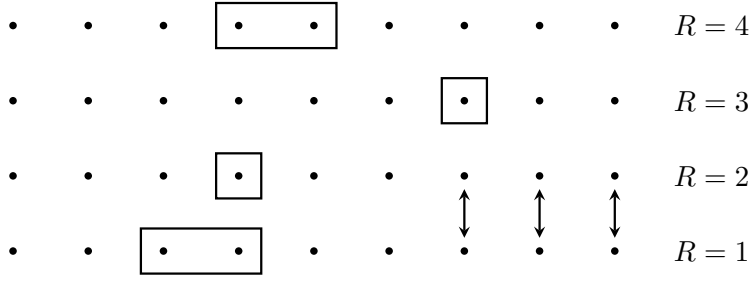
\begin{figure}
  \centering
  \begin{tikzpicture}[line width=.3mm, draw=black, >=stealth,
    shorten >=5pt, shorten <=5pt, auto,
    dot/.style={circle, fill=black, minimum width=1mm, inner sep=0},
    box/.style={rectangle, draw=black, minimum width=16mm, minimum height=6mm},
    square/.style={rectangle, draw=black, minimum height=6mm, minimum width=6mm}]

    \foreach \x in {0,1,...,8}{
      \foreach \y in {0,1,...,3} {
        \node[dot] at (\x,\y) {};
      }
    }

    \foreach \y in {1,2,...,4} {
      \node[label=right:{$R=\y$}] at (8.5,\y-1) {};
    }

    \node[box] at (2.5,0) {};
    \node[square] at (3,1) {};
    \node[square] at (6,2) {};
    \node[box] at (3.5,3) {};

    \draw[<->] (8,0) -- (8,1);
    \draw[<->] (7,0) -- (7,1);
    \draw[<->] (6,0) -- (6,1);

  \end{tikzpicture}
  \caption{Reproduced Figure 1 of~\cite{OT08}.
    Each row of the qubits has the same number as the starting circuit.
    The number of rows is one more than the number of gates in the starting
    circuit.
    The $R$-th gate is performed on the $R$-th row and then all qubuts are
    swappwd with those in the $(R+1$)-th row.
    This lazy simulation of the circuit will ensure that each qubit is acted on
    by a gate at most three times.
  }
  \label{fig:OT}
\end{figure}

Now we follow~\cite{OT08} and consider a variant of Kitaev's
circuit-to-Hamiltonian construction.
We denote $Q_{\text{in}}$ the set of qubits which correspond to the witness,
$Q_{\text{out}}$ the output qubit and $C=C_1,\cdots,C_T$ the clock registers.
The necessary change we make here is that we shall use an indicator clock
instead of the unary clock for maintaining the weight property.
Let $U_{\text{sp}}=W_T\dots W_2W_1$, the clock register will have $T+1$ qubits
and valid clock basis states have the form $\ket{0^{t-1}10^{T-t}}_{C}$ for
$t=1, 2, \ldots, T$.
Analogous to Kitaev's reduction, our legal history state for the circuit
$U_{\text{sp}}$ is
\begin{align*}
  \ket{\phi}=\frac{1}{\sqrt{T+1}}\sum_{t=0}^T
  \ket{0^{t-1}10^{T-t}}_{C}\otimes\ket{\xi_t},
\end{align*}
where
$\ket{\xi_t}=W_t\ket{\xi_{t-1}}, \ket{\xi_0} = \ket{\psi}\otimes\ket{1^{f(k)}0^m}$.
Since each $W_t$ is weight preserving, the initial state $\ket{\xi_0}$ for
$U_{\text{sp}}$ has weight $k+f(k)=k'$.

First, we recall the Hamiltonian construction in~\cite{OT08}, by replacing the
clock checking term for unary clock to indicator clock, we have the following
construction:
\begin{align*}
  H'_{\text{in}} & = \sum_{q\not\in Q_{\text{in}}}
                  \ketbra{\bar{i}_q}_{q} \otimes \ketbra{1}{1}_{C_{t_{q}-1}},\\
  H'_{\text{out}} & = \ketbra{0}_{Q_{\text{out}}} \otimes \ketbra{1}{1}_{C_{T}},\\
  H'_{\text{prop}} & = \sum_{t=1}^{T} H'_{\text{prop},t},
\end{align*}
and 
\begin{equation*}
  H'_{\text{prop},t} = (\ketbra{10}{10} + \ketbra{01}{01})_{C_{t,t+1}}
  - W_{t} \otimes \ketbra{01}{10}_{C_{t,t+1}} -
  W_{t} ^{\dagger} \otimes \ketbra{10}{01}_{C_{t,t+1}},
\end{equation*}
where $t_{q}$ stands for the earliest time step when qubit $q$ is actually used,
$\bar{i}_q$ is the inverse value in which ancillary qubit $q$ should be
initialized (so if qubit $q$ should be initialized in state $\ket{0}$, then
$\bar{i}_q=1$).
Here we omitted the clock checking term, since it will be reconstructed in our
final construction.

In the next step, we will perform the following isometry $\mathcal{U}$ on the
state registers of our Hamiltonian: for each qubit $q$, we will duplicate it in
the computational basis:
$\mathcal{U}\ket{0}_q=\ket{00}_{I_q},\mathcal{U}\ket{1}_q=\ket{11}_{I_q}$, where
$I_q$ are two qubits indicating the original qubit $q$, and
$I_q\cap I_{q'}=\emptyset$ for $q\neq q'$.

Thus our new Hamiltonian $H$ could be constructed by conjugating $\mathcal{U}$
over the previous construction $H=\mathcal{U}H'\mathcal{U}^{\dagger}$.
Our final local Hamiltonian will have the form
$H = H_{\text{in}} + H_{\text{out}} + H_{\text{prop}}+H_{\text{clock}}+H_{\text{state}}$
where
\begin{align*}
  H_{\text{in}}
  & = \sum_{q\not\in Q_{\text{in}}} \ketbra{\bar{i}_q\bar{i}_q
    }{\bar{i}_q\bar{i}_q}_{I_q} \otimes \ketbra{1}{1}_{C_{t_{q}-1}},\\
  H_{\text{out}}
  & = \ketbra{00}{00}_{I_{Q_{\text{out}}}} \otimes \ketbra{1}{1}_{C_{T}},\\
  H_{\text{clock}}
  & = \sum_{t<{t'}}\ket{11}\bra{11}_{C_{t,t'}}\\
  H_{\text{state}}
  & =\sum_q\ket{01}\bra{01}_{I_q}+\ket{10}\bra{10}_{I_q},\\
  H_{\text{prop}}
  & = \sum_{t=1}^{T} H_{\text{prop},t},
\end{align*}
and
\begin{equation*}
  H_{\text{prop},t} = (\ketbra{10}{10} + \ketbra{01}{01})_{C_{t,t+1}}
  - W'_{t} \otimes \ketbra{01}{10}_{C_{t,t+1}} -
  (W'_{t}) ^{\dagger} \otimes \ketbra{10}{01}_{C_{t,t+1}},
\end{equation*}
where
$W'_t=\mathcal{U}|_{Q_{W_t}}W_t\mathcal{U}|_{Q_{W_t}}^\dagger\otimes I_{2n-2|Q_{W_t}|}$,
$Q_{W_t}$ for the qubits that $W_t$ acts on.
In our new construction, our history state could be defined as
$\ket{\phi'}=(I\otimes\mathcal{U})\ket{\phi}$.
We can observe that $\mathcal{U}$ doubles the weight on the state registers, the
weight of our new witness state is $2k'+1$.

The difference between our construction and Oliveira-Terhal~\cite{OT08} is in
the clock design and checking terms.
We use the indicator clock and it is easy to see $H_{\text{clock}}$ and
$H_{\text{state}}$ are $2$-local Hamiltonians.
$H_{\text{state}}$ guarantees the two mapped qubits in $I_q$ always have the
same value, thus all legal witness should have even weight on the state
registers.
Since we require the weight of witness state to be odd, the clock registers must
have non-zero weight, and $H_{\text{clock}}$ guarantees the only valid clock
states are the indicator states $\ket{0^{t-1}10^{T-t}}_{C}$.

For the completeness part, observe that if original $V$ accepts $\ket{\psi}$
with probability $1-\epsilon$, the history state $\ket{\phi}$ would be projected
to 0 for all hamiltonian terms but $H_{\text{out}}$.
Since $U_{\text{sp}}$ simulates $V$ faithfully, we obtain that
$\bra{\phi}H_{\text{out}}\ket{\phi}\leq\frac{\epsilon}{T+1}$.

For the soundness part, observe that $H$ preserves the subspace of legal history
states
$\mathcal{S}=\{\ket{\phi} \colon H_{\text{clock}}\ket{\phi}=H_{\text{state}}\ket{\phi}=0\}$,
thus we can discuss the eigenvalue of $H$ on $\mathcal{S}$ and
$\mathcal{S}^{\perp}$ separately.
Since any eigenvector in $\mathcal{S}^{\perp}$ has eigenvalue at least 1, we can
focus on $H|_{\mathcal{S}}$.
Define $H'=H'_{\text{in}} + H'_{\text{out}} + H'_{\text{prop}}$, we have that
$\mathcal{U}H'\mathcal{U}^\dagger|_{\cal S}=H|_{\cal S}$.
In~\cite{OT08}, they performed analysis of eigenvalue on
$H'|_{\cal U^{\dagger}SU}$, which is isometric to
$\mathcal{U}H'\mathcal{U}^\dagger|_{\cal S}$, thus we obtain the same eigenvalue
lower bound $\frac{c(1-\sqrt{\epsilon}-\epsilon)}{T^3}$.

The resulting Hamiltonian in our reduction is not spatially sparse as
in~\cite{OT08} because the clock checking Hamiltonian $S_{\text{clock}}$ is not
sparse.
Excluding the clock checking terms, however, all other terms are spatially
sparse.
Therefore, this Hamiltonian is \emph{almost} spatially sparse with respect to
the clock register.
Note that if we use \cref{lem:weight-universal} and a finite gate set, the types
of resulting Hamiltonian terms will also be finite.
We conclude with the following corollary:

\begin{corollary}\label{col:wpOT}
  Given a weight-$k$ weight-preserving quantum circuit satisfiability instance
  $\mathcal{C}$ with parameter $(\epsilon,1-\epsilon)$, we can construct a
  weight-$2k'+1$ almost spatially sparse local hamiltonian instance with energy
  thresholds
  $a = \frac{\epsilon}{T+1}, b = \frac{c(1-\sqrt{\epsilon}-\epsilon)}{T^3}$.
  Furthermore, if we assume $\cal C$ acts on $n$ qubits, we have
  $T\leq 3n(|\mathcal{C}|+1)$, the resulting Hamiltonian would act on
  $2n(|\mathcal{C}|+1)+T+1$ qubits, and $k'=f(k)$ for some computable function
  $f$.
\end{corollary}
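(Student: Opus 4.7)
The plan is to essentially repackage the construction developed in the section into a clean corollary statement, so the proof proposal amounts to laying out the construction end-to-end, verifying the parameters, and checking the completeness/soundness bounds.

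First I would take the input weight-preserving circuit $\mathcal{C} = U_R U_{R-1} \cdots U_1$ on $n$ qubits and apply the Oliveira--Terhal grid layout to obtain a spatially local circuit $U_{\text{sp}}$: arrange $|\mathcal{C}|+1$ rows of $n$ qubits each, apply $U_t$ on row $t$ in the appropriate column window, then swap row $t$ with row $t+1$ column by column. Because the original gates are weight-preserving and SWAP is weight-preserving, $U_{\text{sp}}$ remains weight-preserving with the same acceptance probability on the weight-$k$ witness (tensored with the ancilla initialization $\ket{1^{f(k)}0^{m}}$ of total weight $k' = k + f(k)$). The total number of time steps is $T \le 3n(|\mathcal{C}|+1)$ since each original gate contributes one computational step plus at most a constant number of SWAP rounds per column.

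Next I would apply the modified Kitaev circuit-to-Hamiltonian construction described above, using the \emph{indicator} clock whose legal states are $\ket{0^{t-1}10^{T-t}}$ on the $T+1$ clock qubits, and then pull everything through the duplication isometry $\mathcal{U}: \ket{b}\mapsto \ket{bb}$ on each computational qubit. This yields the five terms $H_{\text{in}}, H_{\text{out}}, H_{\text{prop}}, H_{\text{clock}}, H_{\text{state}}$ already written out. The total number of qubits is $2n(|\mathcal{C}|+1) + T + 1$ (duplicated state register plus clock), and a valid history state has weight $2k'$ on the state register (forced even by $H_{\text{state}}$) plus weight $1$ on the clock register (forced by $H_{\text{clock}}$ once we insist the total weight is odd), i.e.\ total weight $2k'+1$. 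Almost spatial sparsity with respect to the clock register is then immediate: each computational qubit is touched by only $O(1)$ terms from $H_{\text{in}}, H_{\text{out}}, H_{\text{prop}}$ (by the Oliveira--Terhal layout) and by $O(1)$ terms of $H_{\text{state}}$; the only globally-non-sparse terms are the pairwise clock terms inside $H_{\text{clock}}$.

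For completeness, I would take the history state $\ket{\phi'} = (I\otimes \mathcal{U})\ket{\phi}$ corresponding to the accepting run on the optimal witness. By construction $H_{\text{clock}}\ket{\phi'}=H_{\text{state}}\ket{\phi'}=0$, the propagation and input terms vanish exactly (they are designed to), and only $H_{\text{out}}$ contributes, giving energy at most $\epsilon/(T+1)$ since $U_{\text{sp}}$ accepts with probability $1-\epsilon$. For soundness, the key step is to restrict to the legal subspace $\mathcal{S}=\ker(H_{\text{clock}})\cap \ker(H_{\text{state}})$; any eigenvector orthogonal to $\mathcal{S}$ has energy at least $1$, so it suffices to analyze $H|_{\mathcal{S}}$. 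Under the isometry $\mathcal{U}$ this is unitarily equivalent to the Oliveira--Terhal Hamiltonian $H'|_{\mathcal{U}^{\dagger}\mathcal{S}\mathcal{U}}$ with indicator clock, for which the standard geometric-lemma/Kitaev argument gives the lower bound $c(1-\sqrt{\epsilon}-\epsilon)/T^{3}$.

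The main obstacle, and the only step that is genuinely new relative to the Oliveira--Terhal proof, is the bookkeeping around the weight constraint: I need the duplication encoding $\mathcal{U}$ precisely so that legal states have even weight on the state register, so that the parity trick combined with the odd total weight $2k'+1$ forces exactly one clock qubit to be $\ket{1}$ and lets a purely AND-type $H_{\text{clock}}$ (penalizing any two simultaneous $\ket{1}$'s) enforce the indicator form without needing a big OR to rule out the all-zero clock. Verifying that this parity trick is consistent with completeness (the history state really has weight $2k'+1$), with the propagation terms (which act by $W'_t = \mathcal{U}|_{Q_{W_t}} W_t \mathcal{U}|_{Q_{W_t}}^{\dagger}$ and are thus weight-preserving on the encoded register), and with the soundness reduction to the Oliveira--Terhal bound, is what the proof really needs to nail down; everything else is routine from the construction above and from \cref{lem:OTconstruct}.
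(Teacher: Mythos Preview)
Your proposal is correct and follows essentially the same route as the paper: the corollary is stated in the paper as a direct summary of the construction developed in the preceding subsection, and you have accurately reconstructed all the pieces --- the Oliveira--Terhal grid layout, the indicator clock, the duplication isometry $\mathcal{U}$ and its parity trick, the completeness via the history state, and the soundness via restriction to $\mathcal{S}$ and appeal to \cref{lem:OTconstruct}. You have also correctly singled out the duplication/parity argument as the one genuinely new bookkeeping step beyond the standard Oliveira--Terhal analysis.
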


\subsection{QW[1] Verification for Almost Spatially Sparse Hamiltonian Problems}\label{sec:in-qwone}

We are now ready to show that the almost spatially sparse Hamiltonian problem we
end up with in the last subsection is in $\QW[1]$.
We shall design the constant depth circuit verifying the Hamiltonian problem
using a combination of two techniques described in the following.

First, we show how to check the spatially sparse terms in constant depth.
To do so, we color the terms using constant number of different colors so that
all terms having the same color act on different sets of qubits, a condition
that leads to constant-depth energy measurements of many Hamiltonian terms in
parallel.
This is easy to do by observing the structure of the terms in $H_{\text{in}}$,
$H_{\text{out}}$, $H_{\text{prop}}$, and \cref{fig:OT}.

Second, for the checking of the indicator clock format, we can simply measure
the clock register and perform classical $\W[1]$ computation to check the
result.
Thanks to the simplification of the clock checking term using the weight
constraint, it suffices to check that there are no two $1$'s in the measurement
outcome of the clock register.
This can be done in $\W[1]$, and therefore simulated by a constant depth quantum
circuit with one big AND gate.

We need the following lemma to relate parallel measurements and Hamiltonian sum
later on.

\begin{lemma}\label{lem:sum-vs-prod}
  Let $M_{1}, M_{2}, \ldots, M_{m}$ be $m$ commuting operators satisfying
  $0 \le M_{j} \le I$, then we have
  \begin{equation*}
    I - \sum_{j=1}^{m} M_{j} \le \prod_{j=1}^{m} (I - M_{j})
    \le I - \frac{1}{m} \sum_{j=1}^{m} M_{j}.
  \end{equation*}
\end{lemma}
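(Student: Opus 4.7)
The plan is to reduce the operator inequality to a scalar inequality via simultaneous diagonalization. Since $M_1, \ldots, M_m$ are pairwise commuting Hermitian operators (the condition $0 \le M_j \le I$ implies Hermiticity), they admit a common orthonormal eigenbasis. On a joint eigenvector, each $M_j$ acts as multiplication by an eigenvalue $x_j \in [0,1]$, and both sides of the chain of inequalities become diagonal in this basis. Thus it suffices to prove, for all tuples $(x_1, \ldots, x_m) \in [0,1]^m$, the scalar inequalities
\begin{equation*}
  1 - \sum_{j=1}^m x_j \;\le\; \prod_{j=1}^m (1-x_j) \;\le\; 1 - \frac{1}{m} \sum_{j=1}^m x_j.
\end{equation*}

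For the lower bound, I would use the Weierstrass product inequality, proved by induction on $m$. The base case $m=1$ is immediate. For the inductive step, write $\prod_{j=1}^m (1-x_j) = (1-x_m)\prod_{j=1}^{m-1}(1-x_j)$. If $1 - \sum_{j=1}^{m-1} x_j \ge 0$, the induction hypothesis together with $1-x_m \in [0,1]$ gives $(1-x_m)(1 - \sum_{j=1}^{m-1} x_j) = 1 - \sum_{j=1}^m x_j + x_m\sum_{j=1}^{m-1}x_j \ge 1 - \sum_{j=1}^m x_j$, since the cross term is non-negative. If instead $1 - \sum_{j=1}^{m-1} x_j < 0$, then $1 - \sum_{j=1}^m x_j < 0$ as well, while $\prod_{j=1}^m(1-x_j) \ge 0$, so the inequality holds trivially.

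For the upper bound, I would apply the arithmetic-geometric mean inequality to the non-negative numbers $(1-x_j)$. Setting $s = \frac{1}{m}\sum_{j=1}^m x_j \in [0,1]$, AM-GM gives $\prod_{j=1}^m (1-x_j) \le \left(\frac{1}{m}\sum_{j=1}^m (1-x_j)\right)^m = (1-s)^m$. Finally, since $1-s \in [0,1]$ and $m \ge 1$, we have $(1-s)^m \le 1-s$, yielding $\prod_{j=1}^m (1-x_j) \le 1 - \frac{1}{m}\sum_{j=1}^m x_j$, as required. Translating back via the common eigenbasis produces the operator inequalities.

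There is no real obstacle here; the lemma is routine once one recognizes the commutativity permits reducing to eigenvalues. The only point worth flagging is that the AM-GM step only works because $1-x_j \ge 0$, which is why the hypothesis $M_j \le I$ (and not merely $M_j$ bounded) is essential.
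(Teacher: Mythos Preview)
Your proposal is correct and follows essentially the same approach as the paper: reduce to scalars by simultaneous diagonalization, prove the lower bound by induction (the Weierstrass product inequality), and prove the upper bound via AM--GM together with the observation that a number in $[0,1]$ raised to a power $\ge 1$ only shrinks. Your write-up is slightly more detailed in handling the case split for the induction, but the arguments are the same.
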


\begin{proof}
  By the commutativity of the $m$ operators and the spectral decomposition
  theorem, this problem reduces to the scalar case.
  For real numbers $x_{j} \in [0,1]$ where $j=1, 2, \ldots, m$,
  \begin{equation*}
    1 - \sum_{j=1}^{m} x_{j} \le \prod_{j=1}^{m} (1-x_{j})
  \end{equation*}
  follows from a simple induction on $m$ and
  \begin{equation*}
    1 - \frac{1}{m} \sum_{j=1}^{m} x_{j} \ge \prod_{j=1}^{m} (1-x_{j})
  \end{equation*}
  follows from the geometric and arithmetic mean inequality
  \begin{equation*}
    \frac{\sum_{j} (1-x_{j})}{m} \ge {\Bigl( \prod_{j}(1-x_{j}) \Bigr)}^{1/m}
    \ge \prod_{j} (1-x_{j}).
  \end{equation*}
\end{proof}

\begin{lemma}\label{lem:almost-sparse-to-circ}
  Let $H = \sum_{j} H_{j}$ be a local Hamiltonian problem that acts on $n$
  qubits.
  The energy thresholds $a$ and $b$ for the problem satisfies
  $b/n^{2} - a \ge 1/\poly(n)$.
  Suppose that Hamiltonian $H$ is almost spatially sparse with respect to a
  clock register of $n_{\text{clock}}$ qubits and that each term $H_{j}$ in the
  Hamiltonian is a projector.
  That is, except clock checking terms $\ket{11}\bra{11}_{C_{t,t'}}$ acting on qubits
  $C_t$ and $C_{t'}$ in the clock register, all other Hamiltonian terms in $H$ are
  spatially sparse.
  Then, there is a $\QW[1]$ verification circuit $V$ and $c,s \in \real$
  satisfying $c-s\ge 1/\poly(n)$ such that if the ground state energy of $H$ is
  at most $a$, $V$ accepts with probability $c$ while if the ground state energy
  of $H$ is at least $b$, $V$ accepts with probability $s$.
  Furthermore, $V$ can be chosen so that the big gate is a classical AND gate
  and it is the last gate in $V$.
\end{lemma}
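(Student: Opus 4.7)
The plan is to produce a weft-$1$ verifier that randomizes over a constant number of ``checks'' and combines all Boolean outcome bits with a single big classical AND as its final gate. First, I would exploit almost-spatial-sparsity: since each qubit outside the clock register lies in only $O(1)$ terms and each term has support of size at most $\ell$, a greedy proper coloring of the conflict hypergraph of the non-clock terms produces a partition $S_1,\ldots,S_\chi$ with $\chi=O(1)$ such that terms inside the same class act on disjoint qubits (and hence commute). The verifier then prepares a selector register $R$ of $\lceil\log(\chi+1)\rceil$ qubits in uniform superposition via one Hadamard layer, and uses the classical fanout of $R$ allowed by \cref{def:weft-quantum} to route the rest of the computation. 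For each $c=1,\ldots,\chi$, controlled on ``$R=c$'' every projector in $S_c$ is measured in parallel into fresh ancillas $a_{c,j}$; controlled on ``$R=0$'', each clock qubit is copied into an ancilla $m_t$.

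A constant-depth classical post-processing layer then computes a bit $u_{t,t'}=\neg(m_t\wedge m_{t'})\vee(R\neq 0)$ for every pair $(t,t')$ of clock qubits and a bit $v_{c,j}=\neg a_{c,j}\vee(R\neq c)$ for every $j\in S_c$; these each involve only fan-in $O(1)$ gates because $R$ has $O(1)$ qubits. The verifier outputs the AND of all these bits, which is the unique large gate in the circuit and its last gate. Within each branch the measurements act on disjoint qubits, and across the $\chi+1$ branches the controlled operations that share a witness qubit can be serialized in depth $O(\chi)=O(1)$, so the quantum portion has constant depth and the whole circuit is weft-$1$ as required.

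By deferred measurement the effect of the selector is to draw $c^*\in\{0,\ldots,\chi\}$ uniformly at random: if $c^*=c>0$ the acceptance probability equals $p_c=\bra{\psi}\prod_{j\in S_c}(I-H_j)\ket{\psi}$, and if $c^*=0$ it equals $p_0=1-\bra{\psi}P\ket{\psi}$, where $P$ is the projector onto clock-register strings of weight at least $2$. Since $P\le H_{\text{clock}}\le \binom{n_{\text{clock}}}{2}\,P$ and $|S_c|\le n$ for every class, applying \cref{lem:sum-vs-prod} in both directions yields, writing $H=H_{\text{clock}}+H_{\text{other}}$,
\begin{equation*}
p_{\text{acc}}\ \ge\ \tfrac{1}{\chi+1}\bigl((\chi+1)-\bra{\psi}H\ket{\psi}\bigr)\ \ge\ 1-\tfrac{a}{\chi+1}
\end{equation*}
in the YES case, and
\begin{equation*}
p_{\text{acc}}\ \le\ 1-\tfrac{1}{(\chi+1)n^{2}}\bra{\psi}H\ket{\psi}\ \le\ 1-\tfrac{b}{(\chi+1)n^{2}}
\end{equation*}
in the NO case. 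The assumed strong gap $b/n^{2}-a\ge 1/\poly(n)$ therefore delivers $c-s\ge 1/\poly(n)$.

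The main obstacle is the asymmetry between $H_{\text{clock}}=\sum_{t<t'}\ket{11}\bra{11}_{C_{t,t'}}$, whose computational-basis value on a weight-$k$ string is $\binom{k}{2}$, and the projector $P$, which is the most that a single AND applied to measurement outcomes can test: the ratio $\binom{n_{\text{clock}}}{2}=\Theta(n^{2})$ between them appears unavoidable, which is precisely why the previous subsection had to upgrade the gap condition to $b/n^{2}-a\ge 1/\poly(n)$ via the weight-preserving QSVT amplification. A secondary subtlety is to verify that ``classical fanout'' of the superposed selector $R$ into $\ket{0}$ ancillas used solely as controls stays within \cref{def:weft-quantum}, so the $\chi+1$ controlled checks can be implemented without introducing any large gate other than the final AND.
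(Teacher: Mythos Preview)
Your proposal is correct and follows essentially the same approach as the paper: partition the non-clock terms into $O(1)$ color classes of pairwise disjoint supports, uniformly select one class (or the clock check) via a constant-size selector, measure the selected class in parallel, handle the clock branch by measuring in the computational basis and taking pairwise NANDs, and feed all resulting bits into a single final AND; the analysis via \cref{lem:sum-vs-prod} and the bounds $1-a/(\chi+1)$ and $1-b/((\chi+1)n^{2})$ match the paper's exactly. The only cosmetic differences are that the paper measures the selector first and then pads the unselected branches with $1$'s via controlled SWAPs into a common register, whereas you keep the selector coherent and absorb the branch selection into the Boolean clauses $\vee(R\neq c)$---these are equivalent by deferred measurement and both respect the classical-fanout restriction in \cref{def:weft-quantum}.
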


\begin{proof}
  As the Hamiltonian is almost spatially sparse, it is possible to color the
  terms using $n_{\text{color}} + 1$ colors where $n_{\text{color}}$ is a
  constant.
  We use $G^{(h)}$ to denote the set of terms of color $h$.
  For the first $n_{\text{color}}$ sets $G^{(h)}$ where
  $h=0, 1, \ldots, n_{\text{color}}-1$, the terms $H^{(h)}_{j}$ in the color
  group
  \begin{equation*}
    G^{(h)} = \bigl\{ H^{(h)}_{j} \mid j=1,2,\ldots, m_{h} \bigr\}
  \end{equation*}
  acts on different qubits for all $j$.
  Here, $m_{h}$ is the number of terms in group $G^{(h)}$.
  For the last group $G^{(n_{\text{color}})}$, the terms are
  $H^{(n_{\text{color}})}_{j} = \ket{11}\bra{11}_{C_{t,t'}}$ acting on all pairs of
  qubits $C_t,C_{t'}$ in the clock register.
  The number of terms in this group is $m_{n_{\text{color}}}$.
  Define
  \begin{equation*}
    m_{\max} = \max\, \{m_{i} \mid i=0,1,\ldots, n_{\text{color}}\}.
  \end{equation*}
  For each $h=0, 1, \ldots, n_{\text{color}}-1$, the size $m_{h}$ is at most $n$
  as the terms in $G^{(h)}$ all act on different qubits.
  For $h=n_{\text{color}}$, $m_{h}$ is at most $n^{2}$ as
  $n_{\text{clock}} \le n$ and the terms run over a pair of clock qubits.
  This implies that $m_{\max} \le n^{2}$.

  We now present the $\QW[1]$ verification circuit $V$ as follows.
  \begin{enumerate}
    \item First the circuit samples a random integer
          $h\in \{0, 1, \ldots, n_{\text{color}}\}$.
    \item Conditioned on $h$ the circuit checks all the terms in the group
          $G^{(h)}$.
          In particular,
    \begin{enumerate}
      \item If $h < n_{\text{color}}$, the circuit performs measurements
      \begin{equation*}
        \{M^{(h)}_{j, 1} = I - H^{(h)}_{j}, M^{(h)}_{j, 0} = H^{(h)}_{j}\},
      \end{equation*}
      for all $j=1, 2, \ldots, m_{h}$.
      The circuit outputs the AND of all measurement outcomes.
      \item If $h = n_{\text{color}}$, the circuit performs computational basis
            measurements on all the clock qubits.
            The circuit outputs the AND of all pairwise NAND of the measurement
            outcomes.
    \end{enumerate}
  \end{enumerate}

  Next, we argue that the circuit $V$ can be implemented as a $\QW[1]$
  verification circuit where the big gate is an AND gate at the end of the
  circuit.
  First, we note that the sampling of the integer $h$ can be done using a
  constant size quantum circuit and computational basis measurement.
  We can fanout the measurement outcomes to control the later parts in the
  circuit.
  Second, as the Hamiltonian terms in each group $G^{(h)}$ act on different
  qubits for all $h=0, 1, \ldots, n_{\text{color}}-1$, the measurements
  $\{M_{j,0}, M_{j,1}\}$ can be implemented in parallel.
  These measurements output $x_{h}$, an $m_{h}$-bit vector of classical
  information.
  For $h = n_{\text{color}}$, the circuit first measures all the clock qubits
  and computes the pairwise NAND of the outcome.
  We denote this vector of classical bits as $x_{n_{\text{color}}}$, its length
  is $m_{n_{\text{color}}}$.
  So far, all gates involved are constant size quantum circuits and the
  classical fanout gates.
  Finally, the output of the circuit $V$ is the AND of $x_{h}$ for the sampled
  integer $h$.
  It is easy to reuse the AND gate in all $n_{\text{color}} + 1$ cases as we can
  use fanout of input $1$ to pad short $x_{h}$'s so that they all have length
  $m_{\max}$.
  Then we use controlled SWAP gates to move the bits in $x_{h}$ to the same
  register that can hold $m_{\max}$ qubits and output their AND.

  To complete the proof, we will relate the acceptance probability of $V$ to the
  promise conditions we have for the Hamiltonian problem.
  Notice that when $h = n_{\text{color}}$, the circuit accepts with probability
  \begin{equation*}
    \bra{\psi} \Bigl( \sum_{x: \abs{x} \le 1} \ket{x}\bra{x} \Bigr) \ket{\psi}
    = \bra{\psi} \prod_{k,l} (I-\ket{11}\bra{11})_{k,l} \ket{\psi}
    = \bra{\psi} \prod_{j} \bigl( I - H^{(n_{\text{color}})}_{j}
    \bigr) \ket{\psi}.
  \end{equation*}
  Therefore, we can write the overall probability that this circuit accepts as
  \begin{equation}
    \label{eq:qwone-1}
    \Pr(V \text{ accepts}) =
    \frac{1}{n_{\text{color}} + 1} \sum_{h=0}^{n_{\text{color}}}
    \bra{\psi} \bigotimes_{j=1}^{m_{h}} (I - H^{(h)}_{j})  \ket{\psi}.
  \end{equation}

  In the yes case, the Hamiltonian has ground state energy at most $a$, which
  means that there is a witness state $\ket{\psi}$
  \begin{equation*}
    \bra{\psi} H \ket{\psi} = \bra{\psi} \sum_{j=1}^{m} H_{j} \ket{\psi} \le a.
  \end{equation*}
  Hence, continuing on \cref{eq:qwone-1}, we have
  \begin{equation*}
    \begin{split}
      \Pr(V \text{ accepts})
      & \ge \frac{1}{n_{\text{color}}+1}
        \sum_{h=0}^{n_{\text{color}}} \bra{\psi}
        \Bigl(I - \sum_{j=1}^{m_{h}} H^{(h)}_{j}\Bigr) \ket{\psi}\\
      & = 1 - \frac{\bra{\psi} H \ket{\psi}}{n_{\text{color}}+1}
        \ge 1 - \frac{a}{n_{\text{color}} + 1},
    \end{split}
  \end{equation*}
  where the inequality follows from \cref{lem:sum-vs-prod}.

  In the no case, we have for all state $\ket{\psi}$ of certain weight
  \begin{equation*}
    \bra{\psi} H \ket{\psi} = \bra{\psi} \sum_{j=1}^{m} H_{j} \ket{\psi} \ge b.
  \end{equation*}
  So from \cref{eq:qwone-1}, this gives
  \begin{equation*}
    \begin{split}
      \Pr(V \text{ accepts})
      & \le \frac{1}{n_{\text{color}}+1}
        \sum_{h=0}^{n_{\text{color}}} \bra{\psi}
        \Bigl(I - \frac{1}{m_{h}} \sum_{j=1}^{m_{h}} H^{(h)}_{j}\Bigr)
        \ket{\psi} \\
      & \le \frac{1}{n_{\text{color}}+1}
        \sum_{h=0}^{n_{\text{color}}} \bra{\psi}
        \Bigl(I - \frac{1}{m_{\max}} \sum_{j=1}^{m_{h}} H^{(h)}_{j}\Bigr)
        \ket{\psi} \\
      & = 1 - \frac{\bra{\psi} H \ket{\psi}}{m_{\max}(n_{\text{color}}+1)}
        \le 1 - \frac{b}{n^{2}(n_{\text{color}}+1)},
    \end{split}
  \end{equation*}
  where the first inequality follows from \cref{lem:sum-vs-prod}.
  That is, we can choose
  \begin{equation*}
  c = 1- \frac{a}{n_{\text{color}}+1},\quad s = 1 - \frac{b}{n^{2}(n_{\text{color}}+1)}.
  \end{equation*}
  The condition on the gap
  $c-s = (b/n^{2} - a)/(n_{\text{color}} + 1) \ge 1/\poly(n)$ follows from the
  strong gap condition on $a,b$ for the Hamiltonian problem.
\end{proof}

From this proof we can also conclude that \problem{Weight-$k$ $\ell$-Local
  Hamiltonian} and {\problem{Weight-$k$ Weight-Preserving Quantum Circuit
    Satisfiability}} can be reduced to each other.

\begin{corollary}\label{cor:WLH-WPQCS}
  Given $a,b$ with $b-a>1/\mathrm{poly}(n)$,\problem{Weight-$k$ $\ell$-Local
    Hamiltonian$(a,b)$} reduces to \problem{Weight-$k$ Weight-Preserving Quantum
    Circuit Satisfiability$(c,s)$} under \FPT reduction for some $c,s$ such that
  $c-s>1/\poly(n)$.
  The same is true when reducing \problem{Weight-$k$ Weight-Preserving Quantum
    Circuit Satisfiability$(c,s)$} to \problem{Weight-$k$ $\ell$-Local
    Hamiltonian$(a,b)$}.
\end{corollary}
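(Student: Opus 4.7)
The plan is to observe that both directions of the corollary follow directly by combining reductions already constructed in this section, so the work is simply to assemble them and to verify that the parameter and gap conditions survive.

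For the forward direction, \problem{Weight-$k$ $\ell$-Local Hamiltonian$(a,b)$} to \problem{Weight-$k$ Weight-Preserving Quantum Circuit Satisfiability$(c,s)$}, I would invoke \cref{lem:wLH-to-wpQCSAT} directly. That lemma takes an $\ell$-local Hamiltonian $H = \sum_{j=1}^m H_j$ with $m = \poly(n)$ and gap $b - a \ge 1/\poly(n)$, and produces a weight-preserving circuit $W_H$ of $\poly(n)$ size on $n + M + k + 2$ qubits whose acceptance probability on a weight-$k$ witness $\ket{\psi}$ equals $1 - (m + \bra{\psi} H \ket{\psi})/(2M)$. The ancillas contribute only a constant amount of Hamming weight (the $M$-qubit $\ket{W_M}$ state, the $\ket{01}$ pair for the energy-measurement gadget, and a weight-$1$ counting register), which is easily within the $f(k)$-ancilla allowance from the Remark following \cref{def:weightP-wQCSAT}. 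Setting $c = 1 - (m+a)/(2M)$ and $s = 1 - (m+b)/(2M)$ gives $c - s = (b-a)/(2M) \ge 1/\poly(n)$, and since $M \le 2m = \poly(n)$ and the construction runs in polynomial time, this is an FPT reduction.

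For the reverse direction, \problem{Weight-$k$ Weight-Preserving Quantum Circuit Satisfiability$(c,s)$} to \problem{Weight-$k$ $\ell$-Local Hamiltonian$(a,b)$}, the natural route is the weight-preserving history-state construction of \cref{col:wpOT}, which already produces a (local) Hamiltonian from a weight-preserving circuit. The only catch is that \cref{col:wpOT} expects the input circuit to have thresholds in $(\epsilon, 1-\epsilon)$ form, while the starting instance only satisfies $c - s \ge 1/\poly(n)$. To bridge this, I would first run the weight-preserving amplification of \cref{col:FastwpMW} to drive completeness and soundness to $1-\epsilon$ and $\epsilon$ respectively, at the cost of $\poly(n) \log(1/\epsilon)$ calls to the original circuit and a single extra ancilla qubit. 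Applying \cref{col:wpOT} then yields a weight-$(2f(k)+1)$ local Hamiltonian with thresholds $a' = \epsilon/(T+1)$ and $b' = c_0(1 - \sqrt{\epsilon} - \epsilon)/T^3$, where $T = \poly(n)$. Choosing $\epsilon = 1/\poly(n)$ small enough (say, $\epsilon \le 1/100$) makes $1 - \sqrt{\epsilon} - \epsilon$ bounded below by a constant, so $b' - a' \ge 1/\poly(n)$ holds as required.

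The only genuine item to verify is that the second direction is still FPT: the target parameter is $2f(k)+1$, a computable function of $k$ alone, and the circuit size, qubit count, and number of Hamiltonian terms are all polynomial in $n$. The amplification step only adds a constant number of ancilla qubits and a bounded amount of extra Hamming weight, and the history-state step appends a weight-$1$ indicator clock while doubling the state register via the $\ket{0} \mapsto \ket{00}$, $\ket{1} \mapsto \ket{11}$ encoding, so the final weight matches the claimed $2f(k)+1$. No new combinatorial or analytic work is needed beyond what \cref{lem:wLH-to-wpQCSAT}, \cref{col:FastwpMW}, and \cref{col:wpOT} already supply; the proof reduces to bookkeeping about parameters and gaps.
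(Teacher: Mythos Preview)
Your proposal is correct and follows the same route as the paper: the forward direction is exactly \cref{lem:wLH-to-wpQCSAT}, and the reverse direction is the composition of the weight-preserving amplification (\cref{col:FastwpMW}) with the almost spatially sparse history-state construction (\cref{col:wpOT}), which is precisely what the paper's two-sentence proof refers back to. Your version simply makes the parameter and gap bookkeeping explicit where the paper is content to point at the already-established chain of reductions.
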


\begin{proof}
  That \problem{Weight-$k$ $\ell$-Local Hamiltonian$(a,b)$} reduces to
  \problem{Weight-$k$ Weight-Preserving Quantum Circuit Satisfiability$(c,s)$}
  has been already shown.
  It has been shown also that \problem{Weight-$k$ Weight-Preserving Quantum
    Circuit Satisfiability$(c,s)$} reduces to almost spatially sparse weighted
  Local Hamiltonians.
\end{proof}

Finally combining the beyond sections together, we could provide a proof for
\cref{thm:weighted-ham}.
\begin{proof}
  By \cref{lem:wLH-to-wpQCSAT}, given a \problem{weight-$k$ local Hamiltonian
    $(a,b)$} instance $H=\sum_{j=1}^m H_j$ on $n$ qubits with $b-a>1/\poly(n)$,
  we can obtain a \problem{weight-$k$ weight-preserving quantum circuit
    satisfiability} instance $W$ with size $O(km\poly(n))=O(k\poly(n))$, acting
  on $O(n+M+k)=\poly(n)+k$ qubits, completeness $1-\frac{m+a}{M}$ and soundness
  $1-\frac{m+b}{M}$.

  Now we can apply \cref{col:FastwpMW} to amplify the gap to
  $\left(2^{-n},1-2^{-n}\right)$, and the new circuit has size
  $|\mathcal{C}|=O\left(\frac{m}{b-a}|W|\log(2^n)\right)=O(k\poly(n))$ acting on
  $n'=\poly(n)+k$ qubits.
  Using the parameters in \cref{col:wpOT}, we can construct a weight-$2k'+1$
  almost spatially sparse local Hamiltonian instance $H_{\text{sp}}$ with following
  parameters: $k'=k+O(1)$, $T\leq 3n'(|\mathcal{C}|+1)=O(k^2\poly(n))$,
  $a=\frac{1}{(T+1)2^n}$, $b=\frac{c(1-2^{-n/2}-2^{-n})}{T^3}$.
  The Hamiltonian $H_{\text{sp}}$ acts on $n_f=O(k^2\poly(n))$ qubits.

  Finally we apply \cref{lem:almost-sparse-to-circ} to obtain our final $\QW[1]$
  circuit.
  We can check that the energy thresholds $a,b$ we obtained in the step beyond
  satisfies $b/n_f^2-a\geq1/\poly(n)$.
  Thus our $\QW[1]$ circuit constructed in \cref{lem:almost-sparse-to-circ} has
  probability gap $c-s\geq1/\poly(n)$ since $k\leq n$.
\end{proof}

\section{Frustration-Free Weighted Hamiltonian Problems}\label{sec:ff}

In Hamiltonian complexity theory, there is a variant of the local Hamiltonian
problems with physical relevence called frustration-free Hamiltonian problems.
A Hamiltonian $H = \sum_{j} H_{j}$ is frustration-free if its ground state
$\ket{\psi}$ has the lowerest possible energy for each term $H_{j}$.
That is, $\bra{\psi} H_{j} \ket{\psi} = \lambda_{\min}(H_{j})$.
In this case, it is convenient to shift the spectrum of the local terms so that
$H_{j} \ge 0$ and require that $\bra{\psi} H_{j} \ket{\psi} = 0$.
We define a weighed version of the frustration-free Hamiltonian problem as
follows.

\begin{definition}[\problem{Frustration-Free Weight-$k$ $\ell$-Local
  Hamiltonian Problem}]\label{def:ff-WLH}\hfill
  \begin{description}
    \item[Instance:] A local Hamiltonian $H = \sum_{j} H_{j}$ on $n$ qubits
          and a real number $b \ge 1/\poly(n)$.
          Each term $H_{j}$ acts non-trivially on at most $\ell$ qubits
          and satisfies that $0 \le H_{j} \le I$ for all $j$.
    \item[Parameter:] A natural number $k$.
    \item[Yes:] There exists an $n$-qubit weight-$k$ quantum state
          $\ket{\psi}$, such that $\matrixel{\psi}{H_{j}}{\psi} = 0$ for all $j$.
    \item[No:] For all $n$-qubit weight-$k$ quantum state $\ket{\psi}$,
          $\matrixel{\psi}{H}{\psi} \geq b$.
  \end{description}
\end{definition}

It is evident that the frustration-free weighted Hamiltonian problems are
equivalent to the weighted quantum satisfiability problems defined below.

\begin{definition}[\problem{Weight-$k$ Quantum $\ell$-SAT Problem}]\label{def:WQSAT}\hfill
  \begin{description}
    \item[Instance:] A set of projectors $\Pi_{j}$ for $j=1, 2, \ldots, m$ and
          a real number $b \ge 1 / \poly(n)$.
          Each term $\Pi_{j}$ acts on at most $\ell$ qubits.
    \item[Parameter:] A natural number $k$.
    \item[Yes:] There exists an $n$-qubit weight-$k$ quantum state
          $\ket{\psi}$, such that $\Pi_{j}\ket{\psi} = 0$ for all $j$.
    \item[No:] For all $n$-qubit weight-$k$ quantum state $\ket{\psi}$,
          $\sum_{j=1}^{m} \matrixel{\psi}{\Pi_{j}}{\psi} \geq b$.
  \end{description}
\end{definition}

We will show that the weighed quantum SAT problems are complete for
$\SQW_{1}[1]$, a variant of $\QW_{1}[1]$.

\begin{definition}[\problem{Special Weight-$k$ Weft-$1$ Depth-$d$ Quantum Circuit
  Satisfiability}]\hfill
  \begin{description}
    \item[Instance:] A weft-$1$ depth-$d$ quantum circuit $\mathcal{C}$ on $n$
          witness qubits and $\poly(n)$ ancilla qubits where the only big gate is an
          AND gate and it is the last gate of the circuit.
    \item[Parameter:] A natural number $k$.
    \item[Yes:] There exists an $n$-qubit weight-$k$
          quantum state $\ket{\psi}$, such that
          \begin{equation*}
            \Pr[\text{$\mathcal{C}(\ket{\psi})$ accepts}] \geq c.
          \end{equation*}
    \item[No:]
          For every $n$-qubit weight-$k$ quantum state $\ket{\psi}$,
          $\Pr[\text{$\mathcal{C}(\ket{\psi})$ accepts}] \leq s$.
  \end{description}
\end{definition}

\begin{definition}\label{def:SQW11}
  The class $\SQW[1]$ consists of all parameterized problems that are FPQT
  reducible to \problem{Special Weight-$k$ Weft-$1$ Depth-$d$ Quantum Circuit
    Satisfiability} for some constant $d$ and completeness and soundness $c,s$
  satisfying $c-s\ge 1/\poly(n)$.
  The class $\SQW_{1}[1]$ consists of all parameterized problems that are FPQT
  reducible to \problem{Special Weight-$k$ Weft-$1$ Depth-$d$ Quantum Circuit
    Satisfiability} for some constant $d$, $c=1$, and $s \le 1 - 1/\poly(n)$.
\end{definition}

It is obvious that $\W[1] \subseteq \SQW[1] \subseteq \QW[1]$ and
$\W[1] \subseteq \SQW_{1}[1] \subseteq \QW_{1}[1]$.
This is because the big AND gate can be simulated by a big Toffoli gate.

\begin{theorem}\label{thm:SQW1-Frustration-QSAT}
  \problem{Weight-$k$ Quantum $\ell$-SAT} problem and \problem{Frustartion-Free
    Weight-$k$ $\ell$-Local Hamiltonian} problem are complete problems for
  $\SQW_{1}[1]$ for some constant $\ell$.
\end{theorem}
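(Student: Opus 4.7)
The plan is to observe that the two problems are equivalent and then prove $\SQW_1[1]$-completeness for \problem{Weight-$k$ Quantum $\ell$-SAT}; the equivalence is immediate because any projector is already a valid frustration-free Hamiltonian term, while given a frustration-free instance with $0\le H_j\le I$ the projector $\Pi_j$ onto the support (non-kernel) of $H_j$ satisfies $H_j\le\Pi_j$, so the yes-case condition $\bra{\psi}H_j\ket{\psi}=0$ is equivalent to $\Pi_j\ket{\psi}=0$ and the no-case energy bound transfers directly.

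The containment direction reuses the chain of reductions that proves \cref{thm:weighted-ham}, tracked under the additional assumption that the input Hamiltonian is frustration-free. \Cref{lem:wLH-to-wpQCSAT} produces a weight-preserving verifier whose acceptance probability equals $1-(m+\bra{\psi}H\ket{\psi})/(2M)$, which reaches its maximum precisely when $\bra{\psi}H\ket{\psi}=0$; amplifying via \cref{col:FastwpMW} then gives perfect completeness. The Kitaev-style construction in \cref{col:wpOT} maps a frustration-free accepting computation to a Hamiltonian whose history state has zero energy, and the almost-spatially-sparse verifier of \cref{lem:almost-sparse-to-circ} accepts this history state with probability $1$. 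Crucially, that verifier is already a constant-depth quantum circuit whose unique large gate is a classical AND applied at the very end, which is exactly the structure required by the definition of $\SQW_1[1]$, so containment follows.

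For hardness, the plan is a light-cone argument from \problem{Special Weight-$k$ Weft-$1$ Depth-$d$ Quantum Circuit Satisfiability}. Let $\mathcal{C}$ be the input circuit, whose terminal AND gate is fed by wires $w_1,\ldots,w_m$. Because $\mathcal{C}$ has constant depth $d$ and every non-terminal gate has bounded fanin and fanout, each wire $w_i$ is computed from a constant-size set of witness qubits $L_i$ and ancilla qubits $A_i$ by a constant-size subcircuit $U_i$; here I use that a classical fanout has backward light cone of size one, so unbounded classical fanouts do not blow up the light cone of any $w_i$. The QSAT instance then has projectors
\begin{equation*}
  \Pi_i = U_i^{\dagger}\bigl(\ket{0}\bra{0}_{w_i}\otimes I\bigr) U_i
  \quad\text{on } L_i\cup A_i,\qquad
  \Pi_a = \ket{1}\bra{1}_a \quad\text{for each ancilla qubit } a,
\end{equation*}
with the ancilla register shared across wires, locality $\ell=O(1)$, and weight parameter $k'=k$. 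In the yes case, $\ket{\psi}\otimes\ket{0}_{\text{anc}}$ has weight $k$ and is annihilated by every projector because the original circuit accepts $\ket{\psi}$ with probability $1$.

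The main obstacle is the soundness analysis, since a weight-$k$ QSAT witness $\ket{\phi}$ may place some of its weight on ancillas rather than on the original witness register, and one cannot then invoke the original circuit's soundness directly. I intend to handle this via a hybrid argument: decompose $\ket{\phi}=\sqrt{p_0}\,\ket{0}_{\text{anc}}\otimes\ket{\psi}+\sqrt{1-p_0}\,\ket{\phi'}$ where $\ket{\phi'}$ is orthogonal to the all-ancillas-zero subspace. If $1-p_0\ge 1/\poly(n)$ then the ancilla-fixing projectors already contribute $\sum_a\bra{\phi}\Pi_a\ket{\phi}\ge 1-p_0$ by union bound and we are done; otherwise $\ket{\psi}$ is a genuine weight-$k$ witness for the original circuit, so its soundness guarantee gives $\sum_i\bra{\psi}\!\bra{0}\Pi_i\ket{0}\!\ket{\psi}\ge 1-s\ge 1/\poly(n)$, and a Cauchy--Schwarz estimate controlled by $\sqrt{1-p_0}$ transfers this energy lower bound from $\ket{0}_{\text{anc}}\otimes\ket{\psi}$ to $\ket{\phi}$ up to lower-order terms. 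Picking the threshold between the two cases appropriately yields a $1/\poly(n)$ gap in the total projector energy, completing the reduction.
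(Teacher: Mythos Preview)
Your approach is essentially the paper's: containment via the chain of reductions from \cref{sec:LH-qw1}, and hardness via a light-cone argument reading off local projectors from the constant-depth $\SQW_1[1]$ circuit. Two comments.

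\textbf{Containment.} There is a small gap where you write ``amplifying via \cref{col:FastwpMW} then gives perfect completeness.'' As stated, that corollary only yields completeness $1-\epsilon$, not $1$; feeding $1-\epsilon$ into \cref{col:wpOT} gives a Hamiltonian with yes-case energy $\epsilon/(T+1)$, not $0$, so frustration-freeness is lost. The easy fix is to exploit $H_j\ge 0$ in the frustration-free case and replace $O=(I-H_j)/2$ by $O=I-H_j$ in the construction of \cref{lem:wLH-to-wpQCSAT}; then the weight-preserving verifier already accepts with probability $1-\bra{\psi}H\ket{\psi}/M$ and has perfect completeness in the yes case, so no amplification is needed for that purpose. (The paper is equally terse here, simply asserting that perfect completeness is preserved along the chain.)

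\textbf{Hardness.} Here you are actually more careful than the paper. The paper defines the projectors $\Pi_j=V^\dagger(\ket{0}\bra{0}_j\otimes I)V$ on the full witness-plus-ancilla register and checks soundness only against states of the form $\ket{\psi,0^\ell}$; it does not explicitly argue what happens when a QSAT witness places Hamming weight on the ancilla qubits. Your addition of the ancilla-fixing projectors $\Pi_a=\ket{1}\bra{1}_a$ and the two-case hybrid argument (either $1-p_0$ is already inverse-polynomial, or the cross terms are $O(m\sqrt{1-p_0})$ and absorbed by choosing the threshold $\sim(1-s)^2/m^2$) closes this point cleanly. The light-cone bookkeeping---in particular that classical fanouts have backward light cone of size one---matches the paper exactly.
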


\begin{proof}
  The fact that these two problems are in $\SQW_{1}[1]$ follows from the proof
  that the weighted local Hamiltonian problem is in $\SQW[1]$ and that perfect
  completeness is preserved in the chain of reductions from the local
  Hamiltonian to the $\SQW[1]$ circuit problem.

  We now prove that the weighed quantum $\ell$-SAT problem is $\SQW_{1}[1]$-hard.
  Let $V$ be the constant-depth circuit representing an $\SQW_{1}[1]$ circuit.
  The decision of the circuit is made by first measuring some or all output
  qubits and then outputing the AND of the measurement outcomes.

  The weighted quantum SAT problem we construct consists of $n$ projectors
  \begin{equation*}
    \Pi_{j} = V^{\dagger} (\ket{0}\bra{0}_{j} \otimes I) V.
  \end{equation*}

  Let $\ket{\psi}$ be the witness state to the verification circuit and assume
  without loss of generality that the first $n$ qubits are measured.
  The acceptance probability of the circuit is then
  \begin{equation*}
    \Pr(V \text{ accepts}) = \bigbra{\psi, 0^{\ell}} V^{\dagger}
    (\ket{1^{n}}\bra{1^{n}} \otimes I) V \bigket{\psi, 0^{\ell}}.
  \end{equation*}
  It is straightforward to rewrite it as
  \begin{equation*}
    \Pr(V \text{ accepts}) = \bigbra{\psi, 0^{\ell}} \,
    \prod_{j=1}^{n} (I- \Pi_{j})\, \bigket{\psi, 0^{\ell}}.
 \end{equation*}

  Then, for yes-cases, we have
  \begin{equation*}
    \begin{split}
      \bigbra{\psi, 0^{\ell}} \, \frac{1}{n} \sum_{j=1}^{n} \Pi_{j}\,
      \bigket{\psi, 0^{\ell}}
      & = 1 - \bigbra{\psi, 0^{\ell}}\, (I - \frac{1}{n} \sum_{j=1}^{n} \Pi_{j})\,
        \bigket{\psi, 0^{\ell}}\\
      & \le 1 - \bigbra{\psi, 0^{\ell}} \, \prod_{j=1}^{n} (I- \Pi_{j})\,
        \bigket{\psi, 0^{\ell}}\\
      & = 1 - \Pr(V \text{ accepts}) = 0.
    \end{split}
  \end{equation*}
  In the above equation, the inequality follows from \cref{lem:sum-vs-prod}.
  As each $\Pi_{j}$ is a projector, this implies that
  $\bigbra{\psi,0^{\ell}}\, \Pi_{j}\, \bigket{\psi,0^{\ell}} = 0$.

  For the no-cases,
  \begin{equation*}
    \begin{split}
      \bigbra{\psi, 0^{\ell}} \, \sum_{j=1}^{n} \Pi_{j}\,
      \bigket{\psi, 0^{\ell}}
      & = 1 - \bigbra{\psi, 0^{\ell}}\, (I - \sum_{j=1}^{n} \Pi_{j})\,
        \bigket{\psi, 0^{\ell}}\\
      & \ge 1 - \bigbra{\psi, 0^{\ell}} \, \prod_{j=1}^{n} (I- \Pi_{j})\,
        \bigket{\psi, 0^{\ell}}\\
      & = 1 - \Pr(V \text{ accepts}) \ge 1/\poly(n).
    \end{split}
  \end{equation*}

  To complete the reduction, we need to show that the projectors $\Pi_{j}$ act
  on constant number of qubits.
  We prove this using a light-cone argument.

  \begin{figure}
    \centering

    \begin{tikzpicture}[node distance=1cm,
      gate/.style={draw, minimum size=16, fill=white, inner sep=2,
        minimum height=.8cm, minimum width=.8cm},
      cgate/.style={draw, minimum size=16, fill=ChannelColor, inner sep=2},
      ctrl/.style={circle, fill=black, minimum size=4, inner sep=0},
      octrl/.style={circle, draw=black, fill=white, minimum size=4, inner sep=0},
      target/.style={circle, draw, minimum size=8, inner sep=0},
      brace/.style={decorate, decoration = {calligraphic brace, amplitude=4}},
      measure/.style={draw, minimum height=14, minimum width=16, fill=ChannelColor},
      readout/.style={draw, minimum height=9, minimum width=12, fill=white},
      crossx/.style={path picture={\draw[inner sep=0pt]
          (path picture bounding box.south east) --
          (path picture bounding box.north west)
          (path picture bounding box.south west) --
          (path picture bounding box.north east);}},
      box/.style={align=center, draw, fill=white, minimum height=1.4cm, minimum width=.8cm},
      big/.style={draw, dashed, minimum height=3.6cm, minimum width=1.4cm},
      cbox/.style={align=center, draw, fill=ChannelColor, minimum height=1.4cm, minimum width=.8cm}]

      \node (In0) at (0,0) {};
      \node[above of=In0] (In1) {};
      \node[above of=In1] (In2) {};
      \node[above of=In2] (In3) {};
      \node[above of=In3] (In4) {};
      \node[above of=In4] (In6) {};
      \node[above of=In6] (In7) {};
      \node[above of=In7] (In8) {};
      \node[above of=In8] (In5) {};

      \node (Out0) at (8.5cm,0) {};
      \node[above of=Out0] (Out1) {};
      \node[above of=Out1] (Out2) {};
      \node[above of=Out2] (Out3) {};
      \node[above of=Out3] (Out4) {};
      \node[above of=Out4] (Out6) {};
      \node[above of=Out6] (Out7) {};
      \node[above of=Out7] (Out8) {};
      \node[above of=Out8] (Out5) {};

      \draw (In0) -- (Out0)
      (In1) -- (Out1)
      (In2) -- (Out2)
      (In3) -- (Out3)
      (In4) -- (Out4)
      (In5) -- (Out5)
      (In6) -- (Out6)
      (In7) -- (Out7)
      (In8) -- (Out8);

      \node[cbox] at ([xshift=-1cm]$(Out2) !.5! (Out3)$) {$V_{6}$};
      \node[box] at ([xshift=-1cm]$(Out0) !.5! (Out1)$) {};

      \node[ctrl] (C) at ([xshift=-1cm]Out5) {};
      \node[gate] (T) at ([xshift=-1cm]Out4) {};
      \draw (C) -- (T.north);

      \node[cbox] at ([xshift=-2.2cm]$(Out1) !.5! (Out2)$) {$V_{5}$};
      \node[ctrl] (C) at ([xshift=-3.4cm]Out5) {};
      \node[cbox] (T) at ([xshift=-3.4cm]$(Out2) !.5! (Out3)$) {$V_{4}$};
      \draw (C) -- (T.north);

      \node[gate] (T) at ([xshift=-2.2cm]Out4) {};
      \node[octrl] (C) at ([xshift=-2.2cm]Out7) {};
      \draw (C) -- (T.north);

      \node[gate] (T) at ([xshift=-4.2cm]Out7) {};
      \node[octrl] (C) at ([xshift=-4.2cm]Out8) {};
      \draw (C) -- (T.north);

      \node[ctrl] (C) at ([xshift=-5.4cm]Out6) {};
      \node[target] (T) at ([xshift=-5.4cm]Out5) {};
      \draw (C) -- (T.north);

      \node[ctrl] (C) at ([xshift=-6cm]Out6) {};
      \node[target] (T) at ([xshift=-6cm]Out8) {};
      \draw (C) -- (T.north);

      \node[cbox] at ([xshift=-7.5cm]$(Out3) !.5! (Out2)$) {$V_{2}$};
      \node[cbox] at ([xshift=-5.7cm]$(Out1) !.5! (Out2)$) {$V_{3}$};
      \node[cbox] at ([xshift=-7.5cm]$(Out6) !.5! (Out4)$) {$V_{1}$};

      \node[big] at ([xshift=-5.7cm]$(Out7) !.5! (Out8)$) {};

      \node[label=left:$\ket{0}$] at (In8) {};
      \node[label=left:$\ket{0}$] at (In5) {};

      \node[measure] (M) at ([xshift=.5cm]Out3) {};
      \node[readout] (readout) at (M) {};
      \draw[thick] ($(M) + (1.5mm,-.5mm)$) arc (0:180:1.5mm);
      \draw[thick] ($(M) + (1.5mm,1mm)$) -- ($(M) + (0,-.7mm)$);
      \draw[fill=black] ($(M) + (0,-.7mm)$) circle (0.3mm);

      \draw (Out3.west) -- (M.west);

      \node[left = 1.5cm of In0] (L1) {qubit $1$};
      \node[above of=L1] (L2) {qubit $2$};
      \node[above of=L2] (L3) {qubit $3$};
      \node[above of=L3] (L4) {qubit $4$};
      \node[above of=L4] (L5) {qubit $5$};
      \node[above of=L5] (L6) {qubit $6$};
      \node[above of=L6] (L7) {qubit $7$};
      \node[above of=L7] (L8) {qubit $8$};
      \node[above of=L8] (L9) {qubit $9$};
    \end{tikzpicture}

    \caption{An illustration of computing the light cone of a quantum circuit
      with classical fanout gates.
      In the exmaple, there is a classical fanout gate in the dashed box.
      The qubits $2, 3, 4, 5, 6$ and gates $V_{1}$ to $V_{6}$ are in the light
      cone.}\label{fig:light-cone}
  \end{figure}
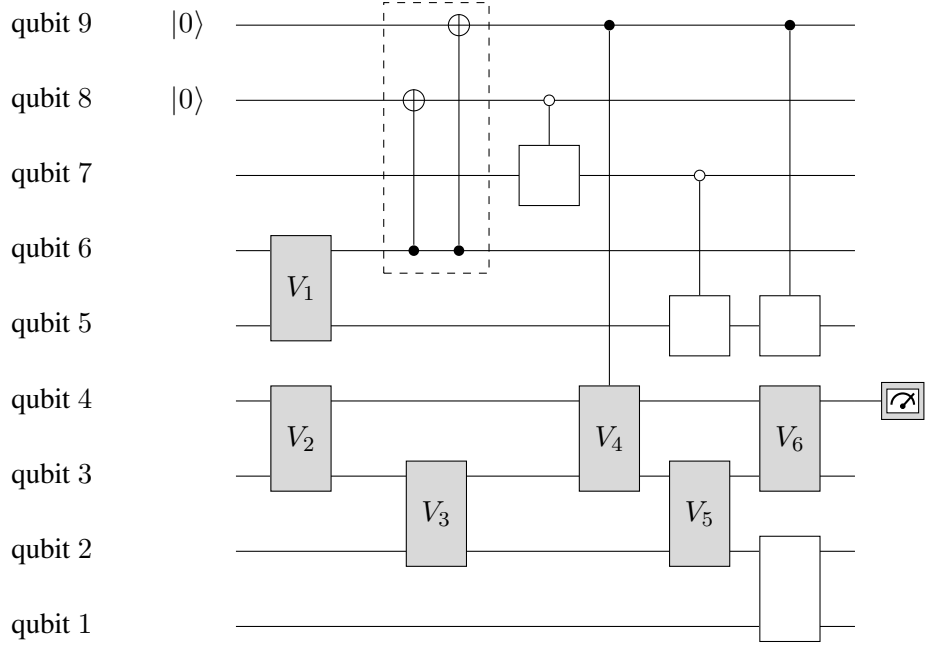

  For a quantum circuit $Q$ with classical fanouts, we define the light cone of
  an output qubit to be the gates and qubits that output qubit $j$ depends on.
  More precisely, we need to model quantum circuits as a network of gates and
  wires and the light cone of a qubit is defined inductively by tracing the
  circuit backwards from the output wire as follows.
  Special care needs to be taken when the gate is a classical fanout gate.
  As $Q$ is a constant depth circuit, it is posssible to partition the gates
  into $d$ layers of gates, each consisting of non-overlapping unitary gates.
  To simplify the presentation, we notice that if a gate uses classical fanout
  input as in the gate $V_{4}$ in \cref{fig:light-cone}, the measurement on the
  witness state that the circuit represents will not change if we change the
  fanout-target control qubit (qubit $9$) of $V_{4}$ to qubit $6$, the control
  qubit of the fanout gate.
  This change may increase the depth but will not change the final measurement
  outcome of the circuit by the principle of delayed measurements.
  We implement this type of change for such cases as a preprocessing step and
  let the resulting circuit be $Q'$.
  By the above discussion, $Q$ and $Q'$ define the same measurement $\Pi_{j}$.
  We now consider the light cone of qubit $j$ for $Q'$ inductively.
  In the base case, only the output qubit wire $j$ is in the light cone.
  If we find a gate that is not a classical fanout gate and has an output wire
  in the light cone, we include all of its input wires to the light cone.
  As when we trace back each of the $d$ levels of $Q'$, the number of wires in
  the light cone is at most multiplied by a constant, there will be at most
  $2^{O(d)}$ input qubit wires in the light cone.

  In the above, we see the crucial difference between a classical fanout and a
  quantum fanout.
  For a classical fanout gate, we only incluce control qubit in the light cone
  and do not count the target qubits as the classical information on these
  target qubits can be inferred from the measurement of the control qubit.
  While if a quantum fanout gate is in the light cone, then all of its input
  qubits should be included because of the entangling power of the quantum
  fanout gate.

\end{proof}

Several remarks are in order.
First, we remark that the above proof does not seem to work if we want to show
weighted local Hamiltonian problems are complete for $\SQW[1]$.
The difficulty is due to the fact that we will need a strong gap condition that
the energy thresholds $a = n(1-c)$ and $b = (1-s)$ satisfy
$b - a \ge 1/\poly(n)$.
The condition holds automatically if $c=1$ but may be false if $c<1$ and there
is no easy way to amplify the gap as we are working with constant depth
circuits.

Second, because of the completeness proof of weighted quantum SAT for
$\SQW_{1}[1]$, the problem of whether weighted quantum SAT problems are complete
for $\QW_{1}[1]$ is essentially equivalent to whether there is a normalization
theorem for weft-$1$ quantum circuits like in the classical case.
Classically, the normalization theorem says that any classical weft-$1$ circuit
can be reduced to a circuit where the only big gate is the last gate and it is
the AND gate.
The quantum case is technically challenging.
The simplest case we don't know how to normalize is when the circuit has a big
NAND gate in the end.
In the classical case, the technique is that we can ask the prover for some
extra information like where a $0$ input to the NAND gate is (using a weight-$1$
indicator) that ensures the acceptance by the final NAND gate.
In the quantum case however, the meaurement before the classical big NAND gate
has intrinsic randomness and the prover is not able to predict the place of the
$0$'s beforehand.

Third, in the above proof, the locality of the resuling Hamiltonian depends on
the depth $d$ of the circuit, which is at most $O(2^{d})$.
It is possible to bring down the locality by going throught the chain of
reduction in \cref{sec:LH-qw1} so that it is independent of the depth $d$.

\section{\QW-hierarchy and \ETH}\label{sec:qw-eth}

As mentioned in the introduction, one of the most important uses of
parameterized complexity theory is in the fine-grained complexity analysis.
In particular, there are important connections between $\W[1]$ and the
exponential time hypothesis (\ETH), some of which are presented in the book
Fundamentals of Parameterized Complexity by Downey and Fellows
(2013)~\cite{downey_fundamentals_2013}.
We use the version of ETH that can be found in Section 29.4
of~\cite{downey_fundamentals_2013}.
In what follows we say that a circuit $C$ has total description size $D$ if the
number of inputs and total number of gates are bounded by $D$.

\begin{definition}[Exponential Time Hypothesis (\ETH)]\label{def:eth}
  We define the Exponential Time Hypothesis as follows.
  There is no algorithm with running time $2^{o(n)}$ that decides for a weft $1$
  Boolean circuit $C$ of total description size $n$, whether there is an input
  vector $x$ such that $C(x)=1$.
\end{definition}

Note that this is a weaker definition than the typical one for \ETH{}.
The reason for the slight weakening of the hypothesis is done in order to make
connections between fine-grained complexity and parameterized complexity (see
Chapter 16, in~\cite{flum2006parameterized}).
In this section we shall consider a quantum version of \ETH{} together with a
quantum-classical version.

\begin{definition}[Quantum Exponential Time Hypothesis (\QETH)]\label{def:QETH}
  We define the $\QETH$ as follows.
  For some $c,s$ with $c-s>1/\poly(n)$, there is no \emph{quantum} algorithm
  running in time $2^{o(n)}$ that decides for a weft-$1$ quantum circuit $Q$ of
  total description size $n$ whether (i) there is an input witness state
  $\ket{\psi}$ such that $\Pr\left(Q(\ket{\psi})\text{ accepts}\right)\geq c$ or
  (ii) for all input witness states $\ket{\psi}$, $\Pr \left(Q(\ket{\psi})\text{
      accepts}\right)\leq s$.
\end{definition}

\begin{definition}[Quantum-Classical Exponential Time Hypothesis (QCETH)]\label{def:QCETH}
  We define the \QCETH{} as follows.
  There is no \emph{quantum} algorithm running in time $2^{o(n)}$ that decides
  for a weft-$1$ Boolean circuit $C$ of total description size $n$, whether
  there is an input vector $x$ such that $C(x)=1$.
\end{definition}

We have defined \QETH{} as a hypothesis about \emph{some} pair $c,s$ with
polynomial gap rather than all such pairs $c,s$.
The reason for this choice is that we want to show that if certain problems are
tractable given any polynomial gap, then \QETH{} is false.
This will be evident later in this section.
Nonetheless, we remark that by changing the definition of \QETH,
\cref{prop:QCETH-imp-QETH} would not be affected and \cref{thm:QM-QETH} would
require some minimal modification.
A natural question is the relationship between these two hypothesis just
defined.
We prove first \QETH{} is a weaker statement than \QCETH{}.

\begin{proposition}\label{prop:QCETH-imp-QETH}
\QCETH{} implies \QETH{}.
\end{proposition}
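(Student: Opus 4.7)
The plan is to prove the contrapositive: if \QETH{} is false, then \QCETH{} is false. Under the assumption that \QETH{} fails, for every pair $(c,s)$ with $c-s > 1/\poly(n)$ there is a $2^{o(n)}$-time quantum algorithm $\mathcal{A}_{c,s}$ that decides weft-$1$ quantum circuit satisfiability with parameters $(c,s)$. I would fix one such algorithm $\mathcal{A}$ for concreteness, say for the pair $(c,s)=(1,1/2)$, and use it as a subroutine in a subexponential quantum decision procedure for classical weft-$1$ circuit satisfiability.

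Next, I would describe the reduction. Given a weft-$1$ Boolean circuit $C$ of total description size $n$, I would construct a weft-$1$ quantum circuit $Q$ of total description size $O(n)$ by replacing each classical gate with its standard reversible quantum counterpart acting on fresh $\ket{0}$-initialized ancilla qubits: fan-in-$2$ AND, OR, and NOT become Toffoli, CNOT, and $X$ gates respectively; wire duplications become classical fanout gates, which are permitted in weft-$1$ quantum circuits by \cref{def:weft-quantum}; and the single unbounded fan-in gate on any input-to-output path of $C$ becomes a generalized Toffoli of matching arity, playing the role of the one allowed ``big'' gate of $Q$. This transformation is size-preserving up to a constant factor and preserves the weft.

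The crux of the argument is to verify the promise gap. If $C$ is satisfiable by some $x$, then the computational basis state $\ket{x}$ makes $Q$ accept with probability exactly $1\ge c$. If $C$ is unsatisfiable, then for any witness $\ket{\psi}=\sum_x\alpha_x\ket{x}$, reversibility of the simulation yields an output state of the form $\sum_x\alpha_x\ket{x}\ket{C(x)}$ on the relevant registers, so the acceptance probability is $\sum_{x\colon C(x)=1}\abs{\alpha_x}^2=0\le s$. Hence $Q$ automatically satisfies the $(c,s)$-promise, and its answer coincides with the satisfiability of $C$; running $\mathcal{A}$ on $Q$ then decides $C$ in time $2^{o(\abs{Q})}=2^{o(n)}$, contradicting \QCETH{}.

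I do not anticipate a serious obstacle: the reduction is a textbook reversible embedding, the weft is preserved by construction, and because the reduction produces a perfect $(1,0)$ gap, any polynomial gap supplied by the failure of \QETH{} is more than sufficient. The only subtle point worth flagging is that \QETH{} is phrased as an existential statement over $(c,s)$, so one cannot freely ``choose'' the gap when arguing the forward direction; this is precisely why I would pass to the contrapositive, where the failure of \QETH{} furnishes a suitable algorithm for every polynomial gap at no cost.
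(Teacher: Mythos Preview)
Your proof is correct and follows essentially the same contrapositive route as the paper: convert the classical weft-$1$ circuit into a reversible quantum weft-$1$ circuit of size $O(n)$, observe that the resulting instance has a perfect $(1,0)$ promise gap, and invoke the subexponential algorithm supplied by the failure of \QETH{}. One cosmetic slip worth fixing is that a fan-in-$2$ OR is not realized by a single CNOT but by a Toffoli conjugated with NOTs (or equivalent); this does not affect the argument, and your explicit handling of the existential quantifier over $(c,s)$ is in fact cleaner than the paper's own treatment.
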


\begin{proof}
  Assume that \QETH is false, then there is a quantum algorithm $\mathcal{A}$
  deciding the problem in \cref{def:QETH}.
  We shall construct a quantum circuit $\mathcal{Q}$ and show that the
  satisfiability problem on $\mathcal{C}$ reduces to the satisfiability problem
  on $\mathcal{Q}$.
  Let $\mathcal{C}$ be a weft-$1$ classical circuit of total description size
  $n$, we can assume that $C$ has $n$ gates of bounded fan-in $f$ with gate
  basis $\{\textnormal{AND, OR, NOT}\}$.
  First, we modify $\mathcal{C}$ into a reversible circuit by adding an ancilla
  bit initialized at $0$ for each AND and OR gate, including the weft-$1$ gates.
  Note that this increases the number of input bits by $n$ since there are at
  most $n$ gates.
  For the fan-out gates in the classical circuit, these can be replaced by
  reversible CNOTs.
  Note that there are at most $f\cdot n$ possible inputs to the bounded fan-in
  gates, which implies we require at most $O(n)$ CNOT gates.
  After this procedure, we end up with at reversible circuit which can be
  transformed easily into a quantum circuit $\mathcal{Q}$ with $O(n)$ inputs and
  $O(n)$ gates and generalized Toffoli for weft-$1$ gates.
  We also include in $\mathcal{Q}$ a procedure to check that the ancilla qubits
  are all set to $\ket{0}$, which requires $O(n)$ measurements and gates.

  Now we show the decision problem in \cref{def:QCETH} with circuit
  $\mathcal{C}$ reduces to the promise problem with circuit $Q$ in
  \cref{def:QETH} with completeness $b=1$ and soundness $a=0$.
  If there is $x\in\{0,1\}^n$ such that $\mathcal{C}(x)=1$, then consider the
  state $\ket{x0^{cn}}$ where $c>0$ and $(c+1)n$ is the number of inputs to
  $\mathcal{Q}$.
  We have then
  $\mathcal{Q}\ket{x0^{cn}}=\sum_{y\in\{0,1\}^{(c+1)n-1}}\beta_y\ket{1y}$, where
  $\beta_y\in\mathbb{C}$ and $\sum_y \abs{\beta_y}^2=1$.
  Letting $\Pi_1^{(0)}=\ketbra{1}{1}$ be the projector onto the state $\ket{1}$
  for the first qubit, we have that
  \begin{align}
    \Pr\left(\mathcal{Q} \text{ accepts } \ket{x0^{cn}}\right)
    &=\norm{\Pi_1^{(0)}\mathcal{Q}\ket{x0^{cn}}}^2 \nonumber\\
    &=\norm{\sum_{y\in\{0,1\}^{(c+1)n-1}}\beta_y\ket{1y}}^2\nonumber\\
    &= 1
  \end{align}
  Suppose now that for all $x\in\{0,1\}^n$, $\mathcal{C}(x)=0$.
  We have that
  $\mathcal{Q}\ket{x0^{cn}}=\sum_{y\in\{0,1\}^{(c+1)n-1}}\beta_{y,x}\ket{0y}$
  and thus $\Pi_1^{(0)}\mathcal{Q}\ket{x0^{cn}}=0$.
  Any state passing the initial verification of the ancillae qubits has the form
  $\ket{\psi}=\sum_{x\in \{0,1\}^n}\gamma_x \ket{x0^{cn}}$, with
  $\sum_x \abs{\gamma_x}^2 =1$.
  Then we have that
  \begin{align}
    \Pr\left(\mathcal{Q} \text{ accepts } \ket{\psi}\right)
    &=\norm{\Pi_1^{(0)}\mathcal{Q}\ket{\psi}}^2\nonumber\\
    &=\norm{\Pi_1^{(0)}\sum_{x,y}\beta_{y,x} \gamma_x \ket{0y}}^2\nonumber\\
    &= 0.
  \end{align}
  This shows the reduction and thus algorithm $\mathcal{A}$ can solve in time
  $2^{o(n)}$ the decision problem for circuit $\mathcal{C}$ and \QCETH is false.
\end{proof}

\subsection{Miniaturized problems and \ETH}

In this subsection we shall introduce miniaturized problems which are a key
ingredient in connecting results from parameterized complexity and \ETH.
First, we define the miniature version of the classical circuit satisfiability
problem and then we will show how it connects to \ETH and \QCETH.
\begin{definition}[$\problem{Mini-CircSAT}_t$]\hfill
  \begin{description}
    \item[Instance:] Positive integers $k$ and $n$ in unary, and a weft $t$
          Boolean circuit $C$ of total description size at most $k\log n$.
    \item[Parameter:] A natural number $k$.
    \item[Problem:] Decide if there is an input binary vector $x$ such that $C(x)=1$.
  \end{description}
\end{definition}

For simplicity, we will refer to $\problem{Mini-CircSAT}_1$ as
$\problem{Mini-CircSAT}$.
The following theorem illustrates the connection between the tractability of
miniature problems and \ETH.

\begin{theorem}[Theorem 29.4.1 in~\cite{downey_fundamentals_2013}]\label{thm:classical-mini-eth}
  \problem{Mini-CircSAT} is in \FPT if and only if \ETH is false.
\end{theorem}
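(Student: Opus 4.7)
The plan is to prove the biconditional via two direct reductions between algorithms for \problem{Mini-CircSAT} and weft-$1$ \problem{CircSAT}, each with a carefully balanced choice of parameters that compensates for the unary encoding of $n$.

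For the ``if'' direction, assume \ETH fails, so there is an algorithm $A$ for weft-$1$ \problem{CircSAT} of description size $N$ with runtime $2^{g(N)}$ where $g(N) = o(N)$. Given a \problem{Mini-CircSAT} instance $(k,n,C)$ with $|C| \le k\log n$, I would simply run $A$ on $C$. Instantiating the definition of $g=o$ with $\varepsilon = 1/k$ yields a threshold $N_{k}$ such that $g(N') \le N'/k$ whenever $N' \ge N_{k}$. Thus for $n \ge 2^{N_{k}/k}$ the runtime is at most $2^{g(k\log n)} \le 2^{\log n} = n$, and for smaller $n$ the circuit has fewer than $N_{k}$ input wires, so brute-forcing over the $2^{N_{k}}$ assignments terminates in a time depending only on $k$. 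Combining the two cases gives an \FPT runtime of the form $f(k)\cdot O(n)$.

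For the ``only if'' direction, suppose \problem{Mini-CircSAT} has an \FPT algorithm $B$ with runtime $f(k)\cdot x^{c}$, where $x$ is the input length (dominated by $n$ written in unary). Given an arbitrary weft-$1$ \problem{CircSAT} instance $C$ of description size $N$, I would embed $C$ into a \problem{Mini-CircSAT} instance by choosing a slowly growing computable function $k(N) \to \infty$ and setting $n(N) = 2^{\lceil N/k(N)\rceil}$, so that $k(N)\log n(N) \ge N$. A concrete choice is $k(N) = \max\{k : f(k) \le 2^{\sqrt{N}}\}$, which is computable since $f$ is and tends to infinity. The reduction itself runs in time $\poly(n(N))$ because $n$ must be written in unary, and $B$ then runs in time $f(k(N)) \cdot n(N)^{c} \le 2^{\sqrt{N}}\cdot 2^{cN/k(N)} = 2^{o(N)}$, contradicting \ETH.

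The main obstacle is the parameter tension in the second reduction: writing $n$ in unary forces a $2^{N/k}$ blow-up in the output size that must still be beaten by a subexponential margin in $N$. This requires $k(N) \to \infty$ so that the per-instance exponent $N/k(N)$ is sublinear, while simultaneously keeping $f(k(N)) = 2^{o(N)}$; the two are compatible only because $f$ is computable, which lets us tune $k(N)$ against $f$'s actual growth rather than fixing it in advance. A minor technical point in the first direction is that the $o(N)$ hypothesis supplies $N_{k}$ only implicitly, so I would invoke a computable upper bound on $N_{k}$ (read off from the description of $A$) in order to fold the brute-force case into the $f(k)$ overhead.
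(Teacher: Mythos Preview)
Your proposal is correct and takes essentially the same approach as the paper, which cites this result from Downey--Fellows without proof but reproduces the argument for the quantum analogue in \cref{thm:minisat-QCETH}: the ``if'' direction just notes that $2^{o(k\log n)}$ is an \FPT{} function, and the ``only if'' direction sets $k = f^{-1}(N)$, $n = 2^{N/k}$ so that the runtime becomes $f(f^{-1}(N))\cdot 2^{cN/k} = 2^{o(N)}$. The only cosmetic difference is your concrete choice $k(N) = \max\{k : f(k) \le 2^{\sqrt{N}}\}$ in place of $f^{-1}(N)$, and you are somewhat more careful than the paper about the computability of the threshold $N_k$ in the first direction.
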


The \problem{Mini-CircSAT} can be then reduced to \problem{Weight-$k$ Independent
  Set} which implies the following theorem.
\begin{theorem}[Section 29.4 of~\cite{downey_fundamentals_2013}]\label{thm:W1-ETH}
If\/ $\W[1]=\FPT$ then \ETH is false.
\end{theorem}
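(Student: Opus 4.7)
The plan is to chain together two reductions that both already appear (at least in spirit) in the excerpt: first show that $\W[1]=\FPT$ puts $\problem{Mini-CircSAT}$ into $\FPT$, and then invoke \cref{thm:classical-mini-eth} to conclude that $\ETH$ is false. So the real content is to exhibit an $\FPT$ reduction from $\problem{Mini-CircSAT}$ to a canonical $\W[1]$ problem, using the ``$k\log n$ trick'' that the authors already alluded to when proving $\QM[1]$-hardness of the weighted local Hamiltonian.

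First I would fix a $\problem{Mini-CircSAT}$ instance: a weft-$1$ Boolean circuit $C$ of total description size at most $k\log n$, so in particular $C$ has at most $k\log n$ input bits. Partition these inputs into $k$ blocks of $\log n$ bits each (padding if necessary). For each block introduce $n$ fresh variables, and interpret a weight-$1$ assignment on a block (i.e. exactly one of the $n$ variables set to true) as selecting an integer in $\{0,1,\dots,n-1\}$, which in turn names a $\log n$-bit string. This gives a polynomial-size variable set of total size $kn$, and the intended witnesses are precisely the weight-$k$ assignments that have weight exactly $1$ in each block. Build a new circuit $C'$ on these $kn$ variables that (i) internally reconstructs each of the original $k\log n$ input bits of $C$ from the selected block position and (ii) evaluates $C$ on them; the weight-per-block constraint can be enforced by an additional polynomial-size classical sub-circuit added to $C'$. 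Because $C$ was weft-$1$ of size $k\log n$, and the decoding / block-constraint gadgets are of bounded fan-in and polynomial size, $C'$ can be arranged to be weft-$1$ as well (possibly of larger constant depth, which is allowed).

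The resulting problem is an instance of $\problem{Weight-$k$ Weft-$1$ CircSAT}$, which is in $\W[1]$ by definition. Assuming $\W[1]=\FPT$, this instance is decidable in time $f(k)\cdot (kn)^{O(1)}=f(k)\cdot \poly(n)$, hence so is the original $\problem{Mini-CircSAT}$ instance, putting $\problem{Mini-CircSAT}\in\FPT$. Applying \cref{thm:classical-mini-eth} then yields that $\ETH$ is false.

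The main technical care-point, rather than a deep obstacle, is verifying that the encoding circuit $C'$ remains weft-$1$ and of size polynomial in $n$: the decoding of a weight-$1$ block of $n$ variables into a $\log n$-bit string uses only bounded fan-in OR/AND gadgets, and the ``exactly one $1$ per block'' check is a conjunction of pairwise AT-MOST-ONE clauses together with an OR over the block, all of bounded fan-in, so no extra large gates are introduced beyond those already present inside $C$. Once this is checked, the whole argument is a standard parameterized reduction glued to \cref{thm:classical-mini-eth}.
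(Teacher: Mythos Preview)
Your high-level plan is right and matches the paper's: show $\problem{Mini-CircSAT}\in\W[1]$ and then invoke \cref{thm:classical-mini-eth}. The paper does not actually write out a proof of \cref{thm:W1-ETH}; it simply asserts (citing Downey--Fellows) that $\problem{Mini-CircSAT}$ reduces to \problem{Weight-$k$ Independent Set}, a $\W[1]$-complete problem, and leaves it at that.

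Your specific reduction, however, has a gap. You claim that decoding a weight-$1$-in-$n$ block into its $\log n$ address bits ``uses only bounded fan-in OR/AND gadgets,'' but each decoded bit is an OR of $n/2$ of the block variables --- a \emph{large} gate. With fan-in-$2$ gates this OR needs depth $\Omega(\log n)$, violating the constant-depth requirement of a weft circuit; kept as a single gate it contributes to the weft. The same applies to the ``OR over the block'' in your exactly-one check, and to the big conjunction that combines all the pairwise at-most-one clauses with $C$'s output. In the circuit $C'$ you describe, a path from an input variable may pass through a large decoding OR, then through $C$'s own large gate, and then through the final large AND --- that is weft~$3$, not weft~$1$. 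Repairing this requires either the classical weft-$1$ normalization machinery or the detour through a concrete $\W[1]$-complete problem such as \problem{Weight-$k$ Independent Set}; both are standard, but neither is the one-line observation you present it as, and the paper opts for the latter by citation rather than attempting the former.
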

\cref{thm:W1-ETH} establishes a sufficient condition for \ETH to be
false.
In classical parameterized complexity the complexity class $\M[1]$ is defined as
the closure under \FPT reductions of Mini-CircSAT, the claim that this class is
tractable for \FPT algorithms is equivalent to \ETH being false.
\begin{definition}
    Define $\M[t]$ as the set of problems \FPT reducible to $\problem{Mini-CircSAT}_t$
\end{definition}
\begin{theorem}[Restatement of \cref{thm:classical-mini-eth}]
$\M[1]=\FPT$ if and only if \ETH is false.
\end{theorem}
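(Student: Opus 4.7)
The plan is to observe that this restatement is essentially an immediate consequence of the preceding \cref{thm:classical-mini-eth} together with the definition of $\M[1]$ as the \FPT-reducible closure of \problem{Mini-CircSAT}. So the proof will be a very short unpacking of definitions, with no real new content beyond what is already cited.

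First I would establish the trivial containment $\FPT \subseteq \M[1]$. This follows because any non-trivial parameterized problem (and \problem{Mini-CircSAT} is plainly non-trivial) is \FPT-reducible from every problem in $\FPT$: given an \FPT instance, solve it directly within the allowed time budget and output a canonical yes- or no-instance of \problem{Mini-CircSAT}. Since $\M[1]$ is defined as the closure of \problem{Mini-CircSAT} under \FPT reductions, this gives $\FPT \subseteq \M[1]$.

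Next I would argue the equivalence $\M[1] = \FPT \iff \problem{Mini-CircSAT} \in \FPT$. The forward direction is trivial: $\problem{Mini-CircSAT} \in \M[1] = \FPT$ by definition of $\M[1]$. For the reverse direction, suppose $\problem{Mini-CircSAT} \in \FPT$. Any problem in $\M[1]$ is \FPT-reducible to $\problem{Mini-CircSAT}$, and since $\FPT$ is closed under \FPT reductions, this places every $\M[1]$ problem in $\FPT$, giving $\M[1] \subseteq \FPT$; combined with the previous paragraph we obtain equality. Finally, applying \cref{thm:classical-mini-eth} converts the condition $\problem{Mini-CircSAT} \in \FPT$ into ``\ETH is false'', completing the proof.

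There is no real obstacle here. The only point that requires a line of care is the closure of $\FPT$ under \FPT reductions, which is standard (see~\cite{downey_fundamentals_2013}); composition of an \FPT reduction with an \FPT algorithm yields an \FPT algorithm because both the parameter blowup and the time bound have the required $f(k)\cdot\poly(n)$ shape and compose to the same shape. Since the statement is flagged explicitly as a restatement, I would keep the write-up to a single short paragraph invoking the definition of $\M[1]$ and \cref{thm:classical-mini-eth}, rather than redoing any of the underlying technical work from~\cite{downey_fundamentals_2013}.
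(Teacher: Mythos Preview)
Your proposal is correct and matches the paper's treatment: the paper gives no separate proof for this restatement, treating it as immediate from the definition of $\M[1]$ and \cref{thm:classical-mini-eth}, which is exactly the unpacking you describe.
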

As an aside, it is straightforward to see that the weighted local Hamiltonian
problem is $\W[1]$-hard, which makes unlikely any $\FPT$ algorithms for this
problem as implied by the above theorem.
To prove this we can simply reduce the weighted independent set problem to the
weighted local Hamiltonian problem.

\begin{proposition}\label{prop:indset-LH}
  The \problem{Weight-$k$ Independent Set} problem reduces to the
  \problem{Weight-$k$ Local Hamiltonian problem$(a,b)$} under $\FPT$ reductions,
  for any $a,b$ with $b>a\geq 0$.
\end{proposition}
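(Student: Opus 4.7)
The plan is to use a direct polynomial-time reduction that assigns one qubit to each vertex of the input graph and encodes each edge as a diagonal penalty term. Specifically, given an instance $(G=(V,E),k)$ of \problem{Weight-$k$ Independent Set} with $|V|=n$, I would construct the $2$-local Hamiltonian
\begin{equation*}
  H \;=\; R \sum_{(u,v)\in E} \ket{11}\bra{11}_{uv},
\end{equation*}
where $R = \lceil b \rceil$ copies of each edge term are included (equivalently, each term is repeated $R$ times so that every individual term still has operator norm at most $1$, as required by \cref{def:weight-LH}). The parameter is preserved: $k$ remains the Hamming-weight parameter of the target Hamiltonian instance. The reduction is computable in polynomial time, and since $a,b$ are fixed, $R$ is a constant, so the total number of terms is at most $R\cdot |E|=\operatorname{poly}(n)$, as needed.

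The correctness argument splits into the two promise cases. The key observation I would make first is that $H$ is diagonal in the computational basis and nonnegative, so for any weight-$k$ state $\ket{\psi}=\sum_{x:\,|x|=k}\alpha_x\ket{x}$ one has $\bra{\psi}H\ket{\psi}=\sum_{x}|\alpha_x|^2\bra{x}H\ket{x}$. Consequently the extrema of the Rayleigh quotient on the weight-$k$ subspace are attained on weight-$k$ computational basis states, which correspond bijectively to $k$-vertex subsets of $V$.

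For the yes case, if $S\subseteq V$ is an independent set of size $k$, let $\ket{x_S}$ be its indicator basis state. Since no edge has both endpoints in $S$, every term $\ket{11}\bra{11}_{uv}$ annihilates $\ket{x_S}$, so $\bra{x_S}H\ket{x_S}=0\leq a$. For the no case, consider any weight-$k$ basis state $\ket{x_S}$: by assumption $S$ is not independent, so there exists an edge $(u,v)\in E$ with $u,v\in S$, contributing at least $R$ to the energy, giving $\bra{x_S}H\ket{x_S}\geq R\geq b$. Combined with the diagonality observation, this yields $\bra{\psi}H\ket{\psi}\geq b$ for every weight-$k$ state $\ket{\psi}$.

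There is no genuine obstacle here: the only mild subtlety is ensuring the operator-norm constraint $\|H_i\|\leq 1$ while still producing a gap of size $b$, which I handle by explicitly duplicating each edge term $R$ times rather than scaling a single term above norm $1$; this keeps the reduction polynomial in $n$ because $a,b$ are treated as fixed parameters of the target problem, independent of the input. No nontrivial quantum machinery is needed since $H$ is classical (diagonal), which makes the weight-$k$ ground state analysis reduce cleanly to the combinatorial independent-set condition.
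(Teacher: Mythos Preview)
Your proposal is correct and follows essentially the same construction as the paper: one qubit per vertex and a $\ket{11}\bra{11}_{uv}$ penalty per edge, with the diagonal/classical structure reducing the weight-$k$ energy minimization to the combinatorial independent-set condition. The only difference is that you are more explicit about handling arbitrary $b>a\ge 0$ by repeating each edge term $R=\lceil b\rceil$ times, whereas the paper simply asserts without elaboration that the reduction works for all such $a,b$.
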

\begin{proof}
  Let $G=(V,E)$ be a graph with vertex set $V=\{1,2,\cdots,n\}$.
  For each $i\in V$ define a binary variable $x_i$ and the formula
  $\varphi (x_1,\cdots,x_n)=\bigwedge_{(i,j)\in E} (\neg x_i \lor \neg x_j)$.
  $G$ has an independent set of size $k$ if and only if $\varphi$ is satisfiable
  by a bitstring $x=x_1,\cdots x_n$ of Hamming weight $k$.
  We can map $\varphi$ to a Hamiltonian $H=\sum_i H_i$ acting over $n$ qubits,
  for this, consider the one qubit projector over qubit $i$,
  $\Pi_1^{(i)}=\ketbra{1}{1}$.
  We map each term $(\neg x_i \lor \neg x_j)$ to $H_i=\Pi_1^{(i)}\Pi_1^{(j)}$.
  This Hamiltonian $H$ is an instance of the \problem{Weight-$k$ Local
    Hamiltonian} and has a ground state of energy $0$ with weight-$k$ if and
  only if graph $G$ has an independent set of size $k$.
  Note this reduction works as long as the condition over the $a,b$ in the
  proposition is as given.
\end{proof}

It's known that \problem{Weight-$k$ Independent Set} is
$\W[1]$-complete~\cite{downey_fundamentals_2013}, thus this implies that the
weighted local Hamiltonian problem is $\W[1]$-hard.
An immediate consequence is that its unlikely that there are \FPT algorithms for
the weighted Local Hamiltonian as this would imply that \ETH is false by
\cref{thm:W1-ETH}.
As we show in \cref{thm:LH-QETH}, if this problem can be solved by \FPQT
algorithms then this implies that \QCETH is false.

We can trivially generalize \cref{thm:classical-mini-eth} to the quantum
case, in particular we will frame the results in terms of the weighted local
Hamiltonian problem.
We can give a trivial generalization of \cref{thm:classical-mini-eth} as
follows

\begin{theorem}\label{thm:minisat-QCETH}
 $\M[1]\subseteq\FPQT$ iff $\QCETH$ is false.
\end{theorem}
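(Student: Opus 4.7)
My plan is to mirror the classical proof of \cref{thm:classical-mini-eth} (Theorem 29.4.1 of~\cite{downey_fundamentals_2013}), since that argument is purely about bookkeeping of algorithmic runtimes and is insensitive to whether the underlying algorithms are classical or quantum. The only substitution needed is ``quantum'' for ``classical'' throughout, together with the observation that \FPT{} reductions compose with \FPQT{} algorithms to yield \FPQT{} algorithms.

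For the $(\Leftarrow)$ direction, assume $\QCETH$ fails, so there is a quantum algorithm $\mathcal{A}$ deciding satisfiability of weft-$1$ Boolean circuits of size $N$ in time $T(N)=2^{o(N)}$. Given a \problem{Mini-CircSAT} instance $(k,n,C)$ with $\abs{C}\le k\log n$, I run $\mathcal{A}$ on $C$; its runtime is $T(k\log n)$. Writing $T(N)\le 2^{N/h(N)}$ for some $h$ with $h(N)\to\infty$, I split into two cases determined by an appropriate threshold $f_0(k)$: when $n\ge f_0(k)$ the exponent $k\log n/h(k\log n)$ is dominated by $O(\log n)$, giving runtime $\poly(n)$; when $n<f_0(k)$ the runtime is trivially bounded by a function of $k$ alone. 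The overall runtime is therefore of the form $g(k)+\poly(n)$, so \problem{Mini-CircSAT} is in \FPQT. Because any problem \FPT-reducible to \problem{Mini-CircSAT} inherits \FPQT-membership (compose the \FPT{} reduction with the \FPQT{} algorithm), we conclude $\M[1]\subseteq\FPQT$.

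For the $(\Rightarrow)$ direction, assume $\M[1]\subseteq\FPQT$, so \problem{Mini-CircSAT} admits a quantum algorithm of runtime $f(k)\cdot\abs{x}^c$. Given a weft-$1$ Boolean circuit $C$ of description size $N$ and any integer $k$, I form the \problem{Mini-CircSAT} instance $(k,\lceil 2^{N/k}\rceil,C)$, which is well-formed since $\abs{C}=N\le k\log n$. Because $n$ is written in unary, the input size is $O(2^{N/k})$, yielding runtime $f(k)\cdot 2^{cN/k}$. For every $\epsilon>0$, choosing $k$ large enough that $c/k<\epsilon$ gives, for $N$ sufficiently large, runtime $O(2^{\epsilon N})$. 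Diagonalising by letting $k=k(N)$ grow slowly with $N$ converts this family of algorithms into a single quantum algorithm of runtime $2^{o(N)}$, contradicting $\QCETH$.

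The main obstacle in both directions is the subexponential bookkeeping: in $(\Leftarrow)$, making the case split rigorous with a suitable choice of $f_0$ from the witness function $h$; in $(\Rightarrow)$, performing the diagonalisation so that one genuinely obtains a single $2^{o(N)}$ algorithm rather than only a separate $O(2^{\epsilon N})$ algorithm for each fixed $\epsilon$. Both are standard in parameterized complexity but must be handled carefully.
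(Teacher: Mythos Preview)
Your proposal is correct and follows essentially the same approach as the paper: both directions are the classical Downey--Fellows argument transported verbatim to the quantum setting, with the only change being that the algorithm solving \problem{Mini-CircSAT} is quantum. The paper's proof is terser than yours---for $(\Leftarrow)$ it simply asserts that $2^{o(k\log n)}$ is an \FPT{} function without the case split, and for $(\Rightarrow)$ it performs the diagonalisation in one stroke by setting $k=f^{-1}(N)$ (assuming $f$ is increasing), which gives $f(k)=N$ and $N/k=o(N)$ directly---but the substance is identical.
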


\begin{proof}
  The proof follows from a direct generalization from the proof of
  \cref{thm:classical-mini-eth} in~\cite{downey_fundamentals_2013}.
  If $\QCETH$ is false, then we can solve \problem{Mini-CircSAT} with a quantum
  algorithm in time $2^{o(k\log n)}$ which is an $\FPT$ function, implying that
  $\M[1]\subseteq\FPQT$.

  Let $C$ be a Boolean circuit of weft $1$ and size $N$ and assume there is an
  \FPQT algorithm that solves \problem{Mini-CircSAT} in time $f(k)n^c$ where we
  assume $f$ to be a growing function in $k$.
  We now show that there is an algorithm deciding if $C$ is satisfiable in time
  $2^{o(n)}$.
  Take $k=f^{-1}(N)$ and $n=2^{(N/k)}$, thus, $N=k\log n$.
  In general, $f^{-1}(N)$ will be a growing function of $N$ and thus $N/k=o(N)$.
  We can now consider the circuit $C$ as an instance of \problem{Mini-CircSAT}
  with $k$ and $n$ chosen as before, giving a runtime for the algorithm of
  $f(f^{-1}(N))(2^{N/k})^c=2^{\frac{cN}{k}+\log N}=2^{o(N)}$, thus \QCETH is
  false.
\end{proof}

As shown in \cref{prop:indset-LH}, the weighted independent set
problem can be reduced to the weighted local Hamiltonian problem.
Moreover, as remarked before, \problem{Mini-CircSat} reduces to the weighted
independent set.

This shows the following
\begin{theorem}\label{thm:LH-QCETH}
  If \problem{Weight-$k$ $\ell$-Local Hamiltonian} is in \FPQT then $\QCETH$ is
  false.
\end{theorem}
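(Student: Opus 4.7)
The plan is to chain together the reductions already developed in this section and invoke \cref{thm:minisat-QCETH}. Concretely, the strategy has three short steps: reduce \problem{Mini-CircSAT} to \problem{Weight-$k$ Independent Set}, then apply \cref{prop:indset-LH} to reduce \problem{Weight-$k$ Independent Set} to \problem{Weight-$k$ $\ell$-Local Hamiltonian$(a,b)$}, and finally compose with the hypothesized \FPQT{} algorithm for the Hamiltonian problem to conclude that $\M[1] \subseteq \FPQT$.

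For the first step, I would appeal to the classical reduction from \problem{Mini-CircSAT} to \problem{Weight-$k$ Independent Set} recalled in the discussion preceding the theorem (this is the standard construction in Chapter 29 of \cite{downey_fundamentals_2013}, and is the same reduction used in the classical proof that $\W[1]=\FPT$ implies $\ETH$ is false). This is an \FPT{} reduction, which is in particular an \FPQT{} reduction. For the second step, \cref{prop:indset-LH} gives an \FPT{} reduction from \problem{Weight-$k$ Independent Set} to \problem{Weight-$k$ $\ell$-Local Hamiltonian$(a,b)$}, valid for any $0 \le a < b$; we may take, e.g., $a=0$ and $b=1$, which trivially satisfies $b-a > 1/\poly(n)$.

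Composing these two reductions yields an \FPT{} (hence \FPQT) reduction from \problem{Mini-CircSAT} to \problem{Weight-$k$ $\ell$-Local Hamiltonian}. Assuming the hypothesis that \problem{Weight-$k$ $\ell$-Local Hamiltonian} is in \FPQT, closure of \FPQT{} under \FPQT{} reductions then places \problem{Mini-CircSAT} in \FPQT, and thus $\M[1] \subseteq \FPQT$ since $\M[1]$ is defined as the \FPT-closure of \problem{Mini-CircSAT}. Applying \cref{thm:minisat-QCETH} concludes that $\QCETH$ is false.

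There is no real obstacle here beyond careful bookkeeping: all three ingredients (the classical reduction, \cref{prop:indset-LH}, and \cref{thm:minisat-QCETH}) have already been established. The only point that deserves a sentence in the written proof is verifying that the composition is indeed an \FPQT{} reduction and that the promise gap $b-a > 1/\poly(n)$ in \cref{def:weight-LH} is satisfied by the chosen $a,b$, both of which are immediate.
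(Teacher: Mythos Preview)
Your proposal is correct and follows essentially the same route as the paper's proof: chain the classical reduction \problem{Mini-CircSAT} $\to$ \problem{Weight-$k$ Independent Set} with \cref{prop:indset-LH}, apply the \FPQT{} hypothesis to get $\M[1]\subseteq\FPQT$, and conclude via \cref{thm:minisat-QCETH}. Your write-up is in fact more careful than the paper's (which is a two-sentence sketch relying on the surrounding discussion), so there is nothing to fix.
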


\begin{proof}
  The \problem{Weight-$k$ Independent Set} reduces to the \problem{Weight-$k$
    $\ell$-Local Hamiltonian}, by hypothesis we can solve instances of the Local
  Hamiltonian problem in \FPQT and thus \problem{Weight-$k$ Independent Set} as
  well.
  By \cref{thm:minisat-QCETH} the result follows.
\end{proof}

\subsection{Miniaturized problems and \QETH}
Now we turn to a result pertaining to $\QETH$ as defined in \cref{def:QETH}.
Let us begin by defining a miniature version of the quantum circuit
satisfiability problem.

We define the miniature version of the quantum circuit satisfiability
$\problem{Mini-QCSAT}_t(a,b)$ and the class $\QM[t]$ as follows

\begin{definition}[$\problem{Mini-QCSAT}_t(c,s)$]\hfill
  \begin{description}
    \item[Instance:] Integers $k$ and $n$ in unary, and weft-$t$ quantum circuit
          $\mathcal{C}$ of description size $k\log n$.
    \item[Parameter:] A natural number $k$.
    \item[Yes:] There exists an input quantum state $\ket{\psi}$, such that
          $\Pr[\text{$\mathcal{C}(\ket{\psi})$ accepts}] \geq c$.
    \item[No:] For every input quantum state $\ket{\psi}$,
          $\Pr[\text{$\mathcal{C}(\ket{\psi})$ accepts}] \leq s$.
 \end{description}
\end{definition}

\begin{definition}
  Define $\QM_{c,s}[t]$ as the set of problems \FPQT-reducible to
  $\problem{Mini-QCSAT}_t(c,s)$ and define $\QM[t]$ as
  \begin{equation*}
    \QM[t] := \bigcup_{\mathclap{\substack{c,s\\ c-s>1/poly(n)}}} \QM_{c,s}[t].
  \end{equation*}
\end{definition}
We denote as Mini-QCSAT$(c,s)$ the problem $\problem{Mini-QCSAT}_1(c,s)$.
Just as in the classical case, we give a theorem connecting the complexity of
\problem{Mini-QCSAT} and $\QETH$ from \cref{def:QETH}.

\begin{theorem}\label{thm:QM-QETH}
  $\QM[1]\subseteq\FPQT$ iff $\QETH$ is false.
\end{theorem}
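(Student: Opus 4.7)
The plan is to mirror the structure of the proof of \cref{thm:minisat-QCETH}, with the quantum circuit satisfiability problem replacing classical circuit satisfiability, and with an extra care for the promise thresholds. The key observation aligning the two sides of the equivalence is that $\QETH$ is an existential statement over pairs $(c,s)$ with polynomial gap, while $\QM[1]$ is a union of the classes $\QM_{c,s}[1]$ over exactly such pairs, so the two quantifier structures match and no amplification of the promise gap is needed.

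For the forward direction, I would assume $\QETH$ is false, so that for every pair $(c,s)$ with $c-s \geq 1/\poly(n)$ there is a quantum algorithm deciding weft-$1$ quantum circuit satisfiability with thresholds $(c,s)$ in time $2^{o(N)}$, where $N$ is the total description size. Given an instance of $\problem{Mini-QCSAT}_{1}(c,s)$ with parameter $k$ and input $n$ encoded in unary, the circuit has description size $N = k\log n$, so the subexponential algorithm runs in time $2^{o(k\log n)}$. By the standard diagonalization argument on the hidden constant in the $o$-notation (choosing, for each $k$, a threshold $n_{0}(k)$ beyond which $2^{o(k\log n)} \leq n$), this is an FPT function of $(k,n)$. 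Hence $\problem{Mini-QCSAT}_{1}(c,s) \in \FPQT$ for every admissible $(c,s)$, and taking the union yields $\QM[1] \subseteq \FPQT$.

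For the reverse direction, I would assume $\QM[1] \subseteq \FPQT$. Then for every pair $(c,s)$ with polynomial gap there is an FPQT quantum algorithm for $\problem{Mini-QCSAT}_{1}(c,s)$ with runtime $f(k)\,n^{O(1)}$, where I may assume without loss of generality that $f$ is growing. Given an arbitrary weft-$1$ quantum circuit $Q$ of total description size $N$, I would set $k = f^{-1}(N)$ and $n = 2^{N/k}$, so that $k\log n = N$, and then view $Q$ as an instance of $\problem{Mini-QCSAT}_{1}(c,s)$ with these parameters. The total runtime is then $f(f^{-1}(N)) \cdot 2^{O(N/k)} = 2^{O(N/k) + \log N} = 2^{o(N)}$, because $k = f^{-1}(N) \to \infty$ as $N \to \infty$. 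This refutes $\QETH$ for the chosen $(c,s)$, hence refutes $\QETH$ as stated.

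The main obstacle, if there is one, is purely bookkeeping: one must verify that the existential-over-$(c,s)$ definition of $\QETH$ and the union-over-$(c,s)$ definition of $\QM[1]$ interact correctly so that the same $(c,s)$ can be used on both sides of each implication. Beyond that, the argument is a direct transcription of the classical Downey--Fellows proof that $\M[1]=\FPT$ iff $\ETH$ is false, and no quantum-specific technicality arises because the oracle to the subexponential or FPQT solver is used as a black box.
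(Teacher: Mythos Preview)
Your proposal is correct and follows essentially the same argument as the paper's proof: both directions transcribe the classical $\M[1]\Leftrightarrow\ETH$ argument, setting $k=f^{-1}(N)$ and $n=2^{N/k}$ in one direction and observing that $2^{o(k\log n)}$ is an FPT-bounded runtime in the other. Your treatment of the $(c,s)$ quantifiers is in fact slightly more explicit than the paper's, and your remark that $2^{o(k\log n)}$ is FPT via choosing a $k$-dependent threshold $n_{0}(k)$ fills in a detail the paper leaves implicit.
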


\begin{proof}
  The argument from \cref{thm:minisat-QCETH} can be repeated.
  First assume $\QETH$ is false, then for all $c,s$ with polynomial gap there is
  an algorithm that solves the quantum circuit satisfiability problem with
  completeness $c$ and soundness $s$ with $c-s>1/\mathrm{poly}(n)$.
  Then, given an instance $C$ of Mini-QCSAT$(c,s)$ we can use this algorithm to
  solve it in time $2^{o(k\log n)}$ which is an \FPT function.

  Now assume that for all $c,s$ with polynomial gap, Mini-QCSAT$(c,s)$ is
  solvable in time $f(k)n^{c_{0}}$ time for some constant $c_{0}>0$.
  Let $C$ be a weft-$1$ circuit of size $N$.
  Set $k=f^{-1}(N)$ and $n=2^{(N/k)}$, which implies $N=k\log n$.
  In general it will be true that $N/k=o(N)$.
  Using the \FPQT algorithm on $C$, we have a running time $2^{o(N)}$ which
  solves the decision problem with completeness $c$ and soundness $s$.
  Since this is true for all $c,s$ such that $c-s>1/\poly(n)$ then \QETH is
  false.
\end{proof}

Now we show that the Mini-QCSAT reduces to the weight-preserving quantum circuit
satisfiability problem from \cref{def:weightP-wQCSAT}.

\begin{lemma}\label{lem:mini-to-wpreserving}
  \problem{Weight-$k$ Weight-Preserving Quantum Circuit Satisfiability$(c,s)$}
  is $\QM_{c,s}[1]$-hard.
\end{lemma}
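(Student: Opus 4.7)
The plan is to reduce \problem{Mini-QCSAT}$(c,s)$ to \problem{Weight-$k$ Weight-Preserving Quantum Circuit Satisfiability}$(c,s)$ via a quantum analogue of the classical ``$k\log n$'' trick. Partition the at most $k\log n$ input qubits of the Mini-QCSAT circuit $Q$ into $k$ blocks of at most $\log n$ qubits each, and encode each block by the unary indicator isometry $E\colon\ket{x}\mapsto\ket{e_x}$ sending $\log n$ qubits into $n$ qubits of weight $1$, where $e_x$ is the $x$-th standard basis vector of length $n$. Applied blockwise, this turns any witness state of $Q$ into a state on $kn$ qubits of total weight exactly $k$, with weight one in each block, and conversely every weight-one-per-block state on $kn$ qubits decodes to a unique witness of $Q$.

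Next I would build a weight-preserving verification circuit $W$ on these $kn$ witness qubits plus $\poly(n)$ ancillas that simulates $Q$ on encoded witnesses. Each bounded-fanin gate of $Q$ acts on $O(1)$ original qubits, so its encoded counterpart is a weight-preserving unitary supported on $O(n)$ qubits inside a weight-$O(1)$ subspace, which \cref{lem:weight-universal} (using the improved scaling mentioned in the remark following it, since we only care about the action on the relevant weight subspace) implements with $\poly(n)$ weight-preserving gates. The single large gate of $Q$, a generalized Toffoli on up to $k\log n$ qubits, becomes a weight-preserving multi-block controlled operation: whether original qubit $j$ of block $b$ is $\ket{1}$ is exactly whether the weight-$1$ indicator in block $b$ lies in the ``bit $j$ equals $1$'' half of $\{0,1\}^{\log n}$, so the global control is a conjunction of $k\log n$ partial tests that can be aggregated into a single weight-preserving controlled unitary using Fredkin gates and indicator ancillas in the style of \cref{fig:w-2control}.

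Because a generic weight-$k$ state on $kn$ qubits need not have weight exactly one per block, I would prepend a coherent validity check. Using $k$ small shift-register weight counters of the type underlying \cref{fig:circ_localOp}, compute into an ancilla flag $V$ the Boolean that is $1$ iff every block has weight exactly one; the counters use $O(k)$ ancillas initialized to $\ket{1}$, within the $f(k)$ budget allowed by \cref{def:weightP-wQCSAT}. The full circuit $W$ then applies the encoded simulation conditioned on $V=1$, uncomputes all intermediate ancillas so the overall operation is genuinely weight-preserving, and accepts iff both $V=1$ and the simulated circuit accepts. Decomposing any weight-$k$ witness as $\alpha\ket{\psi_{\mathrm{enc}}}+\beta\ket{\psi_{\mathrm{junk}}}$ with $\ket{\psi_{\mathrm{enc}}}$ supported on the encoded subspace, the junk part contributes $0$ to acceptance by construction, while the encoded part contributes exactly $\Pr[Q(E^{-1}\ket{\psi_{\mathrm{enc}}})\text{ accepts}]$. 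Yes-instances then admit witnesses of acceptance at least $c$ and no-instances are bounded by $s$. As each of the at most $k\log n$ gates of $Q$ expands to $\poly(n)$ weight-preserving gates and the validity check has size $\poly(n,k)$, the reduction is FPQT with parameter $k$ preserved.

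The main obstacle I anticipate is the weight-preserving realization of the large gate of $Q$. Each of its $k\log n$ partial controls is a projector onto a fixed $n/2$-dimensional subspace of a weight-$1$ block, and aggregating these into one multiply-controlled weight-preserving unitary, without ever leaving the total-weight-$(k+f(k))$ subspace or appealing to quantum fanout, requires chaining Fredkin gates and indicator ancillas carefully. A secondary subtlety is ensuring the validity check is fully coherent, so that any intermediate ancilla produced during weight counting is uncomputed before the encoded simulation begins, preventing the encoded and non-encoded branches from interfering constructively at the output measurement. Once these two pieces are in place, the correctness analysis is immediate from the blockwise structure of $E$.
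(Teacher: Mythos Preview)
Your proposal is correct and follows essentially the same approach as the paper: the quantum $k\log n$ trick via the indicator encoding $E\colon\ket{x}\mapsto\ket{e_x}$ on each of $k$ blocks, a coherent validity check that each block has weight exactly one, and the $\alpha\ket{\psi_{\mathrm{enc}}}+\beta\ket{\psi_{\mathrm{junk}}}$ decomposition for soundness.

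The one difference worth flagging is your treatment of the big Toffoli, which you correctly identify as the main obstacle. The paper sidesteps this entirely: since the weft-$1$ circuit has at most $k\log n$ gates and at most $k\log n$ input qubits, the single large Toffoli acts on at most $k\log n$ qubits and can be decomposed into $\poly(k\log n)$ single-qubit gates and CNOTs before any encoding takes place. After this preprocessing there is no large gate left, and the encoded circuit is built gate-by-gate from only two primitives: an encoded single-qubit gate $\tilde V$ (a parallel layer of $n/2$ copies of $\hat V$ acting on the appropriate pairs within one block) and an encoded CNOT (either intra-block SWAPs or, for the inter-block case, a Fredkin cascade through a $\ket{01}$ ancilla pair followed by a controlled SWAP network, as in \cref{fig:klogn-cnot}). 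This is more elementary than invoking \cref{lem:weight-universal} and avoids the multi-block aggregation of $k\log n$ partial controls that you propose. Your route would also work, but the decomposition-first strategy removes precisely the obstacle you anticipated.
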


\begin{proof}
  Let $\mathcal{C}$ describe a Mini-QCSAT$(c,s)$ circuit with at most $k\log n$
  inputs and $k \log n$ gates.
  We can decompose these gates into one qubit gates and CNOTs, increasing the
  number of gates to $\mathrm{poly}(k\log n)$.
  Note that a $k\log n$ qubit state $\ket{\chi}$ can be mapped to a weight-$k$
  $n$-qubit state $\ket{\psi}$ by considering the natural encoding of an $n$
  qubit state of weight-$k$ with $k\log n$ qubits.
  If $\mathcal{C}$ has less than $k\log n$ input qubits then we can always add
  ancillas in the $\ket{0}$ state, and measure at the end of the circuit to
  check that they are all in the $\ket{0}$ state, we can thus assume that
  $\mathcal{C}$ has $k \log n$ input qubits.

  We take the $k\log n$ input qubits and divide them into $k$ groups of $\log n$
  qubits and consider the encoding of the $\log n$ qubit state into an $n$ qubit
  state of weight-$1$.
  For bitstring $x\in \{0,1\}^{\log n}$ we denote as $E(x)$ the encoding into a
  bitstring of length $n$ and Hamming weight $1$ which preserves lexicographic
  order.
  For example, if $n=4$ we consider the encoding $\ket{E(00)}=\ket{0001}$,
  $\ket{E(01)}=\ket{0010}$, $\ket{E(10)}=\ket{0100}$ and
  $\ket{E(11)}=\ket{1000}$.
  This mapping will result in a circuit with $kn$ input qubits.
  We now explain how to map the gates in circuit $\mathcal{C}$ to the encoded
  version $\mathcal{C'}$ in such a way that the weight is preserved in circuit
  $\mathcal{C'}$.
  A one-qubit gate $V$ in $\mathcal{C}$ is mapped to $2^{\log n-1}$ $\hat{V}$
  weight-preserving gates acting over two qubits as in
  \cref{def:U-weightpreserving}.
  The qubits over which these gates act can be computed efficiently.
  Suppose gate $V$ acts over qubit $i\in \{1,2,\ldots,\log n\}$, where the index
  $i$ runs over the qubits inside some group of $\log n$ qubits.
  Denote as $\Tilde{V}$ the encoded version in the new circuit of gate $V$.
  The action of $\Tilde{V}$ over basis states is defined as follows.
  Let $x^{(1)}$,$x^{(2)}\in\{0,1\}^{\log n}$ be computational basis states which
  differ only on the $i$th bit, for example, suppose $x^{(1)}=0$ and
  $x^{(2)}_i=1$.
  Let $p$ and $q$ be the qubit indices in the new circuit where $E(x^{(1)})$ and
  $E(x^{(2)})$ have a $1$.
  Then, $\Tilde{V}$ will act as gate $\hat{V}$ on qubits $p$ and $q$.
  For each such pair $x^{(1)}$ and $x^{(2)}$, $\Tilde{V}$ acts on the prescribed
  pair of qubits as $\hat{V}$.
  Thus, in total $\Tilde{V}$ requires $2^{\log n-1}$ $\hat{V}$ gates.
  An example is illustrated in \cref{fig:klogn-trick} where $n=8$ and
  $k=1$, each group has $3$ qubits and is encoded as a group of $8$ qubits.

  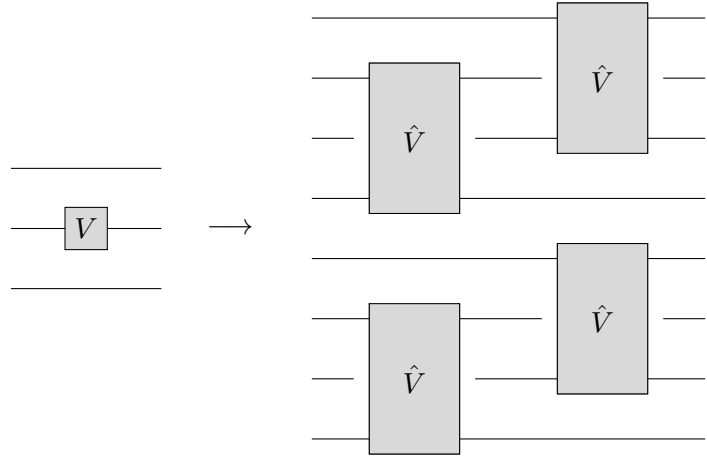
\begin{figure}[ht]
  \centering
  \begin{tikzpicture}[node distance=0.8cm,
    gate/.style={draw, minimum size=16, fill=ChannelColor, inner sep=2},
    ctrl/.style={circle, fill=black, minimum size=4, inner sep=0},
    octrl/.style={circle, draw=black, fill=white, minimum size=4, inner sep=0},
    target/.style={circle, draw, minimum size=8, inner sep=0},
    brace/.style={decorate, decoration = {calligraphic brace, amplitude=4}},
    measure/.style={draw, minimum height=14, minimum width=16, fill=gray},
    readout/.style={draw, minimum height=9, minimum width=12, fill=white},
    crossx/.style={path picture={\draw[inner sep=0pt]
        (path picture bounding box.south east) --
        (path picture bounding box.north west)
        (path picture bounding box.south west) --
        (path picture bounding box.north east);}},
    box/.style={align=center, draw, fill=ChannelColor, minimum height=2cm, minimum width=1.2cm},
    ]

    \begin{scope}
      \node (In0) at (0,0) {};
      \node[above of=In0] (In1) {};
      \node[above of=In1] (In2) {};
      \node[above of=In2] (In3) {};

      \node (Out0) at (5.5cm,0) {};
      \node[above of=Out0] (Out1) {};
      \node[above of=Out1] (Out2) {};
      \node[above of=Out2] (Out3) {};

      \node[box] (B1) at ([xshift=1.5cm]$(In1)$) {$\hat{V}$};
      \node[box] (B3) at ([xshift=4cm]$(In2)$) {$\hat{V}$};

      \draw (In0) -- (Out0)
      (In1) -- ([xshift=-2mm]B1.west)
      (Out1) -- ([xshift=2mm]B1.east)
      (In2) -- ([xshift=-2mm]B3.west)
      (Out2) -- ([xshift=2mm]B3.east)
      (In3) -- (Out3);

      \node[box] (B1) at ([xshift=1.5cm]$(In1)$) {$\hat{V}$};
      \node[box] (B3) at ([xshift=4cm]$(In2)$) {$\hat{V}$};
    \end{scope}

    \begin{scope}[yshift=3.2cm]
      \node (In0) at (0,0) {};
      \node[above of=In0] (In1) {};
      \node[above of=In1] (In2) {};
      \node[above of=In2] (In3) {};

      \node (Out0) at (5.5cm,0) {};
      \node[above of=Out0] (Out1) {};
      \node[above of=Out1] (Out2) {};
      \node[above of=Out2] (Out3) {};

      \node[box] (B1) at ([xshift=1.5cm]$(In1)$) {$\hat{V}$};
      \node[box] (B3) at ([xshift=4cm]$(In2)$) {$\hat{V}$};

      \draw (In0) -- (Out0)
      (In1) -- ([xshift=-2mm]B1.west)
      (Out1) -- ([xshift=2mm]B1.east)
      (In2) -- ([xshift=-2mm]B3.west)
      (Out2) -- ([xshift=2mm]B3.east)
      (In3) -- (Out3);

      \node[box] (B1) at ([xshift=1.5cm]$(In1)$) {$\hat{V}$};
      \node[box] (B3) at ([xshift=4cm]$(In2)$) {$\hat{V}$};
    \end{scope}

    \begin{scope}[yshift=2cm]
      \node (In0) at (-4cm,0) {};
      \node[above of=In0] (In1) {};
      \node[above of=In1] (In2) {};

      \node[right=2cm of In0] (Out0) {};
      \node[above of=Out0] (Out1) {};
      \node[above of=Out1] (Out2) {};

      \draw (In0) -- (Out0)
      (In1) -- (Out1)
      (In2) -- (Out2);
      \node[gate] at ($(In1) !.5! (Out1)$) {$V$};

      \node[label=right:{$\longrightarrow$}] at ([xshift=.2cm]Out1) {};
    \end{scope}
  \end{tikzpicture}
  \caption{Example of mapping a one-qubit gate to gates acting on $8$ qubits for
    $n=8$ and $k=1$.
    The discontinued lines are qubits that are not acted upon by the gates.}
  \label{fig:klogn-trick}
  \end{figure}

  It is simple to check that this new circuit preserves the amplitudes of the
  original miniature circuit.
  Let $x=(x_1,x_2,\ldots,x_i,\ldots,x_{\log n})\in \{0,1\}^{\log n}$ and
  $V^{(a)}=\sum_{r,s=0}^1 v_{r,s}\ketbra{r}{s}$ the single-qubit unitary acting
  on qubit $i$.
  Then, the action of $V^{(i)}$ over a computational basis state is
  \begin{equation*}
    V^{(i)}\ket{x_1 \cdots x_i \cdots x_{\log n}}
    = v_{0,x_i} \ket{x_1\cdots 0_i \cdots x_{\log n}}
    + v_{1,x_i} \ket{x_1\cdots 1_i \cdots x_{\log n}}.
  \end{equation*}
  where $0_i$ and $1_i$ denote a $0$ or a $1$ at the $i$th position
  respectively.
  The encoded version of $V$ will act in a similar way by construction
  \begin{equation*}
    \Tilde{V}^{(i)}\ket{E(x_1 \cdots x_i \cdots x_{\log n})}
    = v_{0,x_i} \ket{E(x_1 \cdots 0 \cdots x_{\log n})}
    + v_{1,x_i}\ket{E(x_1 \cdots 1 \cdots x_{\log n})}.
  \end{equation*}

  For CNOT gates we need to consider two different cases, (i) the CNOT is acting
  between two qubits in the same group and (ii) the CNOT is acting between two
  qubits in different groups.

  For case (i), suppose CNOT acts on control qubit $i$ and target qubit $j$
  where $i$ and $j$ are in the same group of $\log n$ qubits.
  Let $x^{(1)}, x^{(2)}\in\{0,1\}^{\log n}$, if they differ in the $j$th qubit
  and the $i$th qubit is $1$, then in the new circuit apply a SWAP between the
  qubits where $E(x_1)$ and $E(x_2)$ have 1s.
  We add as many SWAPS as pairs $x^{(1)}, x^{(2)}$ fulfilling this condition
  exist.

  For case (ii), we consider control qubit $i$ and target qubit $j$ such that
  both qubits belong to different groups.
  To implement this gate in the weight-preserving circuit we will require two
  ancillae in the state $\ket{01}$.
  For every $x\in\{0,1\}^{\log n}$ such that qubit $i$ is $1$, then we apply a
  Fredkin gate with control qubit given by the position of $1$ in $E(x)$ and
  with the ancilla qubits as targets.
  Such an example is given in \cref{fig:klogn-cnot}, inside the green box
  the Fredkin gates are applied such that if any of the qubits are in state
  $\ket{1}$ then a SWAP network is applied in the other group, after this the
  action of the Fredkin gates is undone.
  The SWAP network consists of SWAP operators acting over qubits as determined
  by the one-bit case mentioned earlier in our proof, these SWAP gates are
  controlled by the ancilla qubit. 
  
  Note that in the original circuit $\mathcal{C}$ the output is given by a
  single qubit.
  In the new weight-preserving circuit we can add two more extra ancillas in the
  state $\ket{01}$ which we assign as the output qubits.
  Then, after acting with the weight-preserving simulation of $\mathcal{C}$, we
  can act with several controlled SWAP operators with the output qubits as
  target and the control qubits corresponding to those that encode states of
  $\log n$ qubits with the output set to $1$.

  With the mapping in place, we have constructed a weight-preserving circuit and
  the last step is to implement measurements to check that each group of $n$
  qubits has only one qubit set to $\ket{1}$.
  In what follows, let $Q_i=\{q^{(i)}_1,\ldots,q^{(i)}_n\}$ be the set of qubits
  belonging to the $i$th group of qubits, where $i=\{1,\ldots,k\}$.
  To check that each $Q_i$ is a weight-$1$ state, we can include $k(n+1)$
  ancillas, which we also group into $k$ sets of $n$ qubits and denote as
  $A_j=\{a^{(j)}_1,\ldots,a^{(j)}_{n+1}\}$ the $j$th group of $n$ qubits.
  First, initialize each $A_j$ in the weight-$1$ state $\ket{1000\cdots 0}$.
  Next, we will use the qubits in $A_i$ to count the weight of the state in the
  $Q_i$ register.
  We construct the following weight-preserving circuit acting between sets $A_i$
  and $Q_i$ for each $i\in\{1,\ldots,k\}$.
  Act with a controlled SWAP on qubits $a^{(i)}_1,a^{(i)}_2$ as targets and
  qubit $q^{(i)_1}$ as control which we define as
  $\operatorname{CSWAP}_{q_1,a_1,a_2}$ (for simplicity, we surpress the index
  $i$ from now on).
  Then, act with the gate
  $ \operatorname{CSWAP}_{q_2,a_1,a_2}\cdot\operatorname{CSWAP}_{q_2,a_2,a_3}$.
  We act in the same way with succesive qubits in the set $Q_i$; for each qubit
  $q_j$ we act with
  $\operatorname{CSWAP}_{q_i,a_1,a_2} \cdot\operatorname{CSWAP}_{q_i,a_2,a_3}
  \cdots\operatorname{CSWAP}_{q_j,a_{j},a_{j+1}}$.
  Once applied this circuit, we need to only measure the qubit $a_2$ which tells
  us whether the weight of the state in $Q_i$ is $1$.
  Finally, to measure whether all $a^{(i)}_2$ are in the state $1$, we add two
  ancillas in state $\ket{01}$ and act with $\operatorname{CSWAP}$ controlled by
  each $a^{(i)}_2$ and target the two new ancillas, such that we get the state
  $\ket{10}$ if all $a^{(i)}_2$ are in $1$ are $\ket{01}$ otherwise.
  This construction requires the ancilla states to have in total weight-$(k+1)$
  which together with the input state and two more ancillas for the CNOT gates
  and other two for the output qubits gives a total of weight-$2(k+2)$ with
  $kn+k(n+1)+6$ qubits.
  We have then a new weight-preserving circuit $\mathcal{C}'$ which is
  satisfiable by a weight-$2(k+1)$ state if and only if the original circuit
  $\mathcal{C}$ is satisfiable.
  Moreover, $\mathcal{C}'$ simulates $\mathcal{C}$ faithfully (at each step the
  amplitudes are preserved).
  Let us now show that the reduction works as intended.
  For completeness, since the simulation is faithfull, then our new
  weight-preserving circuit preserves the completeness.
  For soundness, suppose for all states $\ket{\psi}$ we have that
  $\Pr\left(\mathcal{C} \text{ accepts } \ket{\psi}\right)\leq s$.
  Let $\ket{\phi}$ be a $kn+k(n+1)+4$ qubit state, where the witness has been
  supplied by the prover and the ancillas have been set as described above.
  Note that the only way for the prover to cheat is by breaking the encoding we
  have delineated above, thus we introduce the decomposition
  $\ket{\phi}=\alpha \ket{\xi_1}+\beta\ket{\xi_2}$, where $\ket{\xi_1}$ is a
  state respecting the encoding and $\ket{\xi_2}$ is a state that doesn't
  respect the encoding above.
  Thus, defining $\Pi_{\text{10}}$ as the projector on the output qubits onto
  the state $\ket{10}$, we have that
  \begin{align}
    \Pr\left(\mathcal{C'} \text{ accepts } \ket{\phi}\right)
    &= \norm{\Pi_{10}\mathcal{C'}\ket{\phi}}^2\nonumber\\
    &= \abs{\alpha}^2 \norm{\Pi_{10}\mathcal{C'}\ket{\xi_1}}^2
  \end{align}
  Since $\abs{\alpha}^2<1$ when the prover is cheating, then the accepting
  probability only diminishes when this is the case.
  Thus $\Pr\left(\mathcal{C'} \text{ accepts } \ket{\phi}\right)\leq s$.
  This implies the $\QM_{c,s}[1]$-hardness of \problem{Weight-$k$
    Weight-Preserving Quantum Circuit Satisfiability$(c,s)$}
\end{proof}

\begin{figure}[ht]
  \centering
  \begin{tikzpicture}[node distance=0.35cm,
    gate/.style={draw, minimum size=16, fill=ChannelColor, inner sep=2},
    ctrl/.style={circle, fill=black, minimum size=4, inner sep=0},
    octrl/.style={circle, draw=black, fill=white, minimum size=4, inner sep=0},
    target/.style={circle, draw, minimum size=8, inner sep=0},
    brace/.style={decorate, decoration = {calligraphic brace, amplitude=4}},
    measure/.style={draw, minimum height=14, minimum width=16, fill=gray},
    readout/.style={draw, minimum height=9, minimum width=12, fill=white},
    crossx/.style={path picture={\draw[inner sep=0pt]
        (path picture bounding box.south east) --
        (path picture bounding box.north west)
        (path picture bounding box.south west) --
        (path picture bounding box.north east);}},
    box/.style={align=center, draw, fill=ChannelColor, minimum height=3.2cm, minimum width=1.8cm},
    ]

    \begin{scope}
      \node (In0) at (-.5cm,0) {};
      \node[above of=In0] (In1) {};
      \node[above of=In1] (In2) {};
      \node[above of=In2] (In3) {};
      \node[above of=In3] (In4) {};
      \node[above of=In4] (In5) {};
      \node[above of=In5] (In6) {};
      \node[above of=In6] (In7) {};

      \node (Out0) at (5.5cm,0) {};
      \node[above of=Out0] (Out1) {};
      \node[above of=Out1] (Out2) {};
      \node[above of=Out2] (Out3) {};
      \node[above of=Out3] (Out4) {};
      \node[above of=Out4] (Out5) {};
      \node[above of=Out5] (Out6) {};
      \node[above of=Out6] (Out7) {};

      \node[above=.7cm of In7] (A) {};
      \node[above of=A] (B) {};

      \node[above=.7cm of B] (A0) {};
      \node[above of=A0] (A1) {};
      \node[above of=A1] (A2) {};
      \node[above of=A2] (A3) {};
      \node[above of=A3] (A4) {};
      \node[above of=A4] (A5) {};
      \node[above of=A5] (A6) {};
      \node[above of=A6] (A7) {};

      \node[above=.7cm of Out7] (C) {};
      \node[above of=C] (D) {};

      \node[above=.7cm of D] (C0) {};
      \node[above of=C0] (C1) {};
      \node[above of=C1] (C2) {};
      \node[above of=C2] (C3) {};
      \node[above of=C3] (C4) {};
      \node[above of=C4] (C5) {};
      \node[above of=C5] (C6) {};
      \node[above of=C6] (C7) {};

      \draw (In0) -- (Out0)
      (In1) -- (Out1)
      (In2) -- (Out2)
      (In3) -- (Out3)
      (In4) -- (Out4)
      (In5) -- (Out5)
      (In6) -- (Out6)
      (In7) -- (Out7);

      \draw (A) -- (C) (B) -- (D);

      \draw (A0) -- (C0)
      (A1) -- (C1)
      (A2) -- (C2)
      (A3) -- (C3)
      (A4) -- (C4)
      (A5) -- (C5)
      (A6) -- (C6)
      (A7) -- (C7);

      \node[ctrl] (E) at ([xshift=1cm]$(A7)$) {};
      \node[crossx] (F) at ([xshift=1cm]$(A)$) {};
      \node[crossx] at ([xshift=1cm]$(B)$) {};
      \draw (E) -- (F.center);

      \node[ctrl] (E) at ([xshift=5cm]$(A7)$) {};
      \node[crossx] (F) at ([xshift=5cm]$(A)$) {};
      \node[crossx] at ([xshift=5cm]$(B)$) {};
      \draw (E) -- (F.center);

      \node[ctrl] (E) at ([xshift=1.5cm]$(A6)$) {};
      \node[crossx] (F) at ([xshift=1.5cm]$(A)$) {};
      \node[crossx] at ([xshift=1.5cm]$(B)$) {};
      \draw (E) -- (F.center);

      \node[ctrl] (E) at ([xshift=4.5cm]$(A6)$) {};
      \node[crossx] (F) at ([xshift=4.5cm]$(A)$) {};
      \node[crossx] at ([xshift=4.5cm]$(B)$) {};
      \draw (E) -- (F.center);

      \node[ctrl] (E) at ([xshift=2cm]$(A3)$) {};
      \node[crossx] (F) at ([xshift=2cm]$(A)$) {};
      \node[crossx] at ([xshift=2cm]$(B)$) {};
      \draw (E) -- (F.center);

      \node[ctrl] (E) at ([xshift=4cm]$(A3)$) {};
      \node[crossx] (F) at ([xshift=4cm]$(A)$) {};
      \node[crossx] at ([xshift=4cm]$(B)$) {};
      \draw (E) -- (F.center);

      \node[ctrl] (E) at ([xshift=2.5cm]$(A2)$) {};
      \node[crossx] (F) at ([xshift=2.5cm]$(A)$) {};
      \node[crossx] at ([xshift=2.5cm]$(B)$) {};
      \draw (E) -- (F.center);

      \node[ctrl] (E) at ([xshift=3.5cm]$(A2)$) {};
      \node[crossx] (F) at ([xshift=3.5cm]$(A)$) {};
      \node[crossx] at ([xshift=3.5cm]$(B)$) {};
      \draw (E) -- (F.center);

      \node[ctrl] (E) at ([xshift=3cm]$(B)$) {};
      \node[box, text width=1.8cm] (box) at ([xshift=3cm]$(In3) !.5! (In4)$)
      {SWAP Network};
      \draw (E) -- (box.north);

      \node[label=left:{$\ket{0}$}] at (B) {};
      \node[label=left:{$\ket{1}$}] at (A) {};

      \coordinate (O) at ($(A) !.5! (B)$);
      \node[left=1.5cm of O] {$\longrightarrow$};

      \node[above left=1.4cm and 3cm of O] (G0) {};
      \node[below of=G0] (G1) {};
      \node[below of=G1] (G2) {};
      \node[left=1.5cm of G0] (H0) {};
      \node[below of=H0] (H1) {};
      \node[below of=H1] (H2) {};
      \draw (G0) -- (H0) (G1) -- (H1) (G2) -- (H2);

      \node[below left=.7cm and 3cm of O] (M0) {};
      \node[below of=M0] (M1) {};
      \node[below of=M1] (M2) {};
      \node[left=1.5cm of M0] (N0) {};
      \node[below of=N0] (N1) {};
      \node[below of=N1] (N2) {};
      \draw (M0) -- (N0) (M1) -- (N1) (M2) -- (N2);

      \node[ctrl] (E) at ($(G1) !.5! (H1)$) {};
      \node[target] (F) at ($(M1) !.5! (N1)$) {};
      \draw (E) -- (F.south);
    \end{scope}
    \end{tikzpicture}
    \caption{Example of mapping a CNOT gate acting between two different groups
      for $n=8$ and $k=2$.
      The gates in the green box implement the control and the SWAP network
      implements the bit flip part.}
    \label{fig:klogn-cnot}
\end{figure}
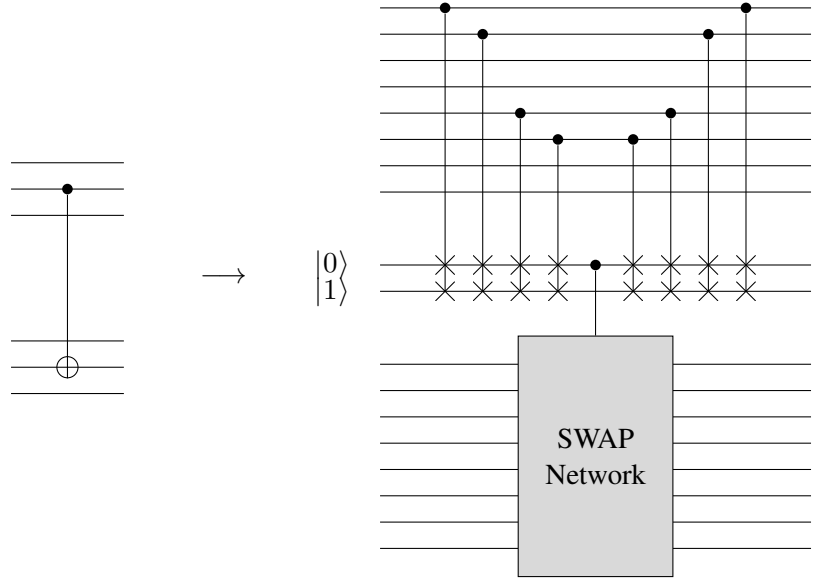

We have thus shown the following corollary.

\begin{corollary}\label{cor:mini-wpqcsat}
  $\problem{Mini-QCSAT}(c,s)$ is in \FPQT if \problem{Weight-$k$
    Weight-Preserving Quantum Circuit Satisfiability}$(c,s)$ is in \FPQT.
\end{corollary}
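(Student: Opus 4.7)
The plan is to obtain this corollary as an essentially immediate consequence of \cref{lem:mini-to-wpreserving}, together with the closure of \FPQT under composition with \FPQT reductions. First, I would recall that \cref{lem:mini-to-wpreserving} supplies an \FPQT reduction $R$ that takes any instance $\mathcal{C}$ of $\problem{Mini-QCSAT}(c,s)$ (described by parameters $k$ and $n$ and a weft-$1$ quantum circuit of size $k\log n$) and produces an instance $R(\mathcal{C})$ of \problem{Weight-$k'$ Weight-Preserving Quantum Circuit Satisfiability}$(c,s)$ on $O(kn)$ qubits with $k' = 2(k+2)$, preserving the completeness threshold $c$ and soundness threshold $s$ exactly (the reduction in the proof is faithful in amplitudes on the valid encoded subspace, and any deviation from the encoding can only lower the acceptance probability).

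The next step is the composition argument. Assuming \problem{Weight-$k$ Weight-Preserving Quantum Circuit Satisfiability}$(c,s)$ is in \FPQT, there exists a quantum algorithm $A$ deciding it in time $f(k)\cdot\operatorname{poly}(N)$ where $N$ is the number of qubits of the weight-preserving instance. Composing $A$ with the reduction $R$, we obtain a quantum algorithm for $\problem{Mini-QCSAT}(c,s)$: on input $(k,n,\mathcal{C})$, first compute $R(\mathcal{C})$ in time $g(k)\cdot\operatorname{poly}(n)$ for some computable $g$, then run $A$ on $R(\mathcal{C})$ whose input size is $O(kn)$ and whose parameter $k'$ is a computable function of $k$. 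The total runtime is bounded by $h(k)\cdot\operatorname{poly}(n)$ for the computable function $h(k) = g(k) + f(k') \cdot \operatorname{poly}(k)$, which is precisely the \FPQT form with parameter $k$.

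The only mild subtlety is to check that \FPQT reductions compose correctly with \FPQT algorithms in the way just asserted, i.e., that the reduction $R$ from \cref{lem:mini-to-wpreserving} is itself an \FPQT reduction in the sense defined in~\cite{bremner2022parameterized} (so that the new parameter $k'$ depends only on the old parameter $k$, not on $n$). This is immediate from inspection of the construction in the proof of \cref{lem:mini-to-wpreserving}, since the encoding uses $k$ blocks of $n$ qubits and a constant number of ancillary qubits independent of $n$, so $k' = 2(k+2)$ depends only on $k$. I do not anticipate any real obstacle here; the corollary is essentially a bookkeeping consequence of the hardness lemma just proved, and no new technical ingredient is needed beyond the standard fact that composing an \FPQT reduction with an \FPQT decision procedure yields an \FPQT decision procedure.
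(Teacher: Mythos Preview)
Your proposal is correct and matches the paper's approach exactly: the paper presents this corollary with no separate proof, merely stating ``We have thus shown the following corollary'' immediately after \cref{lem:mini-to-wpreserving}, since the reduction constructed there is precisely the \FPQT reduction you invoke. Your write-up simply spells out the standard composition argument that the paper leaves implicit.
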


Now we can use the reduction from the proof of \cref{thm:weighted-ham}
(\cref{cor:WLH-WPQCS}) to reduce the weigh-preserving circuit to an
instance of the almost spatially sparse weight-$k$ local Hamiltonian and thus
also can be reduced to the weight-$k$ $\ell$-local Hamiltonian.

\begin{theorem}\label{thm:LH-QETH}
  If for all $a,b$ such that $b-a>1/\poly(n)$, \problem{Weight-$k$ $\ell$-Local
    Hamiltonian$(a,b)$} is in \FPQT then $\QETH$ is false.
\end{theorem}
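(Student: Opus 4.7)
The plan is to chain together the already-established reductions so that an \FPQT{} algorithm for \problem{Weight-$k$ $\ell$-Local Hamiltonian} propagates all the way back to \problem{Mini-QCSAT} and then invoke \cref{thm:QM-QETH}. Concretely, I would start by assuming the hypothesis: for every $a,b$ with $b-a > 1/\poly(n)$, \problem{Weight-$k$ $\ell$-Local Hamiltonian$(a,b)$} admits an \FPQT{} algorithm.

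Next, I would apply \cref{cor:WLH-WPQCS} in the direction that reduces \problem{Weight-$k$ Weight-Preserving Quantum Circuit Satisfiability$(c,s)$} (for appropriate $c,s$ with polynomial gap) to an instance of \problem{Weight-$k$ $\ell$-Local Hamiltonian$(a',b')$} with $b'-a' > 1/\poly(n)$ under an \FPT{} reduction. Composing with the assumed \FPQT{} algorithm for the local Hamiltonian problem yields an \FPQT{} algorithm for \problem{Weight-$k$ Weight-Preserving Quantum Circuit Satisfiability$(c,s)$}, for every $c,s$ with $c-s > 1/\poly(n)$ that lies in the image of the reduction. By \cref{col:FastwpMW} (strong error reduction for weight-preserving verifiers), any such $c,s$ with polynomial gap can be amplified to an arbitrary polynomial gap, so the containment in \FPQT{} actually holds for \emph{every} $c,s$ with polynomial gap.

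Now I invoke \cref{cor:mini-wpqcsat} (equivalently, the $\QM_{c,s}[1]$-hardness of \problem{Weight-$k$ Weight-Preserving Quantum Circuit Satisfiability$(c,s)$} from \cref{lem:mini-to-wpreserving}): since this weight-preserving circuit problem is in \FPQT{} for every polynomial gap $(c,s)$, the hardness reduction places $\problem{Mini-QCSAT}(c,s)$ in \FPQT{} for every such $(c,s)$. By definition of $\QM[1]$, this gives $\QM[1] \subseteq \FPQT$.

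Finally, by \cref{thm:QM-QETH}, $\QM[1] \subseteq \FPQT$ is equivalent to $\QETH$ being false, which concludes the argument. The main thing to be careful with is bookkeeping of the completeness/soundness parameters along the chain: the reduction from weight-preserving circuit SAT to local Hamiltonian changes $(c,s)$ into some $(a,b)$, and going back through \cref{lem:mini-to-wpreserving} one needs to confirm that the resulting parameters for \problem{Mini-QCSAT} still cover every polynomial gap. This is handled by the weight-preserving Marriott--Watrous amplification of \cref{col:FastwpMW}, which lets us push the gap to $(1-\epsilon,\epsilon)$ at the cost of a $\poly(n)\log(1/\epsilon)$ overhead and therefore preserves \FPQT-membership; this is the one nontrivial ingredient and the only place where the argument is not a purely formal composition of reductions.
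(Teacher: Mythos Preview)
Your proposal is correct and follows essentially the same route as the paper: the paper's ``proof'' is the one-sentence remark immediately preceding the theorem, which chains \cref{lem:mini-to-wpreserving} (equivalently \cref{cor:mini-wpqcsat}) with \cref{cor:WLH-WPQCS} and then invokes \cref{thm:QM-QETH}, exactly as you do. Your extra care about parameter bookkeeping via \cref{col:FastwpMW} is slightly more than strictly needed---since \cref{lem:mini-to-wpreserving} preserves $(c,s)$ exactly and the hypothesis covers \emph{all} $(a,b)$ with polynomial gap, the composition already handles every $(c,s)$ without a separate amplification step---but it does no harm and in fact makes explicit what is buried inside the proof of \cref{cor:WLH-WPQCS}.
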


\section{Discussion}\label{sec:discussion}

In this paper we have explored the complexity of the weighted local Hamiltonian
problem.
We have proven that this problem is in $\QW[1]$, but it remains a challenging
open question as to whether it is in fact $\QW[1]$-complete.
The obstacle when using techniques based on the clock construction such as
in~\cite{kitaev2002classical,kempe2006complexity} is that when reducing from
\problem{Weight-$k$ Weft-$1$ Depth-$d$ Quantum Circuit Satisfiability} to the
weighted local Hamiltonian problem, the history state is required to be of
weight-$k$.
Recall that the circuit in the original instance is not required to be
weight-preserving and thus applying the clock construction directly does not
work as it takes the history state out of the weight-$k$ subspace.
Another possibility is to apply the reduction used
in~\cite{Childs2014_bosehubbard}, where the authors prove the \QMA-completeness
of the Bose-Hubbard model.
This proof technique for \QMA-hardness is based on using the Bose-Hubbard
Hamiltonian to simulate the quantum circuit of the verifier.
This naturally requires using $O(n)$ particles, yet in our case the number of
excitations would be bounded by $k$.
Although the witness of the weighted quantum circuit satisfiability is also
bounded, it is not guaranteed to remain bounded as, again, the circuit is not
weight preserving.
In the proof of \cref{lem:mini-to-wpreserving}, we have reduced a miniaturized
version of the circuit satisfiability problem to a weight-preserving circuit,
one way to prove completeness might be to extend this technique to the case of
$n$ qubits and constant depth.

An interesting direction in the future would be to study the local Hamiltonian
problem under other parameterizations, one possibility is to consider parameters
over the interaction graph of the Hamiltonian.
We consider our results here as a first step towards a more fine-grained
analysis of the complexity in the local Hamiltonian problem.
In classical complexity theory, parameters such as the treewidth or branchwidth
play a key role in finding $\FPT$ algorithms for graph problems.
Studying the interaction of such parameters (or finding new ones) in the quantum
setting for local Hamiltonian problems may prove to be a fruitful area of
research for finding efficient algorithms.

\section{Acknowledgements}

MJB acknowledges the support of Google.
MJB and MESM were supported by the ARC Centre of Excellence for Quantum
Computation and Communication Technology (CQC2T), project number CE170100012.
This material is also in part based upon work by MJB and LM supported by the
Defense Advanced Research Projects Agency under Contract No.~HR001122C0074.
Any opinions, findings and conclusions or recommendations expressed in this
material are those of the author(s) and do not necessarily reflect the views of
the Defense Advanced Research Projects Agency.
ZJ acknowledges the support of Australian Research Council (DP200100950).
ZJ and XL acknowledge the support of a startup funding from Tsinghua University.
MESM was supported by a scholarship top-up and extension from the Sydney Quantum
Academy.

\bibliographystyle{alpha}

\bibliography{bibliography}

\end{document}